\newcommand{\Fcal}{{\mathcal F}}
\newcommand{\FF}{{\mathbb F}}
\newcommand{\PP}{{\mathbb P}}
\newcommand{\na}{{\mathbb N}}
\newcommand{\re}{{\mathbb R}}
\newcommand{\replus}{{\re_+}}
\newcommand{\B}{{\mathcal B}}
\newcommand{\N}{{\mathcal N}}
\newcommand{\from}{{:}\penalty\binoppenalty\mskip\thickmuskip}
\newcommand{\indicatorset}[1]{\mathds{1}_{#1}}
\newcommand{\4}{\mathchoice{\mskip1.5mu}{\mskip1.5mu}{}{}}
\newcommand\sbullet[1][.5]{\mathbin{\vcenter{\hbox{\scalebox{#1}{$\bullet$}}}}}
\numberwithin{equation}{section}
\theoremstyle{plain}
\newtheorem{theorem}[equation]{Theorem}
\newtheorem{lemma}[equation]{Lemma}
\newtheorem{proposition}[equation]{Proposition}
\theoremstyle{definition}
\newtheorem{definition}[equation]{Definition}
\newtheorem{example}[equation]{Example}
\newtheorem{assumption}[equation]{General Assumption}
\newtheorem{convention}[equation]{General Convention}
\theoremstyle{remark}
\newtheorem{remark}[equation]{Remark}
\newtheorem*{notation}{Notation}
\DeclareMathOperator*{\argmin}{arg\,min}
\DeclareMathOperator{\esssup}{ess\,sup}
\DeclareMathOperator*{\trace}{tr}
\DeclareMathOperator*{\Cov}{Cov}
\newcommand{\overbar}[1]{\skewbar{#1}{-1}{-1}{.25}}
\newcommand{\skewbar}[4]{{\overbarpalette\makeoverbar{#1}{#2}{#3}{#4}}#1}
\newcommand{\overbarpalette}[5]{\mathchoice
{#1\textfont\displaystyle{#2}1{#3}{#4}{#5}}
{#1\textfont\textstyle{#2}1{#3}{#4}{#5}}
{#1\scriptfont\scriptstyle{#2}{.7}{#3}{#4}{#5}}
{#1\scriptscriptfont\scriptscriptstyle{#2}{.5}{#3}{#4}{#5}}}
\newcommand{\makeoverbar}[7]{%
\setbox0=\hbox{$\m@th#2\mkern#5mu{{}#3{}}\mkern#6mu$}%
\setbox1=\null \dimen@=#4\fontdimen8#13 \dimen@=3.5\dimen@
\advance\dimen@ by \ht0 \dimen@=-#7\dimen@ \advance\dimen@ by \wd0
\ht1=\ht0 \dp1=\dp0 \wd1=\dimen@
\dimen@=\fontdimen8#13 \fontdimen8#13=#4\fontdimen8#13
\rlap{\hbox to \wd0{$\m@th\hss#2{\overline{\box1}}\mkern#5mu$}}
\fontdimen8#13=\dimen@}
\def\makeunderbar#1#2#3#4#5#6#7{%
\setbox0=\hbox{$\m@th#2\mkern#5mu{{}#3{}}\mkern#6mu$}%
\setbox1=\null \dimen@=#4\fontdimen8#13 \dimen@=3.5\dimen@
\advance\dimen@ by \dp0 \dimen@=-#7\dimen@ \advance\dimen@ by \wd0
\ht1=\ht0 \dp1=\dp0 \wd1=\dimen@
\dimen@=\fontdimen8#13 \fontdimen8#13=#4\fontdimen8#13
\rlap{\hbox to \wd0{$\m@th\hss#2{\underline{\box1}}\mkern#5mu$}}
\fontdimen8#13=\dimen@}
\newcommand{\Rbarplus}{\skewbar{\re}{-1}{-2}{0}_+}
\begin{document}

\title{Importance sampling for option pricing \\ with feedforward neural networks}

\author[1,2]{Aleksandar~Arandjelovi\'{c}%
\thanks{Corresponding author: \href{mailto:aleksandar.arandjelovic@fam.tuwien.ac.at}{aleksandar.arandjelovic@fam.tuwien.ac.at}}}
\author[1]{Thorsten~Rheinl{\"a}nder}
\author[2]{Pavel~V.~Shevchenko}

\affil[1]{Institute of Statistics and Mathematical Methods in Economics, TU Wien, Vienna, Austria}
\affil[2]{Department of Actuarial Studies and Business Analytics, Macquarie University, Sydney, Australia}

\date{\today}

\maketitle
\thispagestyle{empty}

\begin{abstract}%
\noindent We study the problem of reducing the variance of Monte Carlo estimators through performing suitable changes of the sampling measure which are induced by feedforward neural networks.
To this end, building on the concept of vector stochastic integration, we characterize the Cameron--Martin spaces of a large class of Gaussian measures which are induced by vector-valued continuous local martingales with deterministic covariation.
We prove that feedforward neural networks enjoy, up to an isometry, the universal approximation property in these topological spaces.
We then prove that sampling measures which are generated by feedforward neural networks can approximate the optimal sampling measure arbitrarily well.
We conclude with a comprehensive numerical study pricing path-dependent European options for asset price models that incorporate factors such as changing business activity, knock-out barriers, dynamic correlations, and high-dimensional baskets.

\smallskip\noindent Keywords: Cameron--Martin space, Dol\'{e}ans exponential, feedforward neural networks, importance sampling, universal approximation

\smallskip\noindent Subjclass: 60G15, 65B99, 65C05, 68T07, 91G20, 91G60
\end{abstract}

\clearpage

\section{Introduction}\label{sec:intro}

Monte Carlo methods are amongst the most essential tools for the numerical evaluation of financial derivatives.
Classical asset pricing theory often times calls for the computation of expectations of the form
\begin{equation*}
\mathbb{E}_{\PP}[F(X)] = \int_{\Omega} F(X(\omega))\, \PP(\mathrm{d}\omega),
\end{equation*}
where $F$ is a payoff functional, $X$ is a (multivariate) asset price process that is given as the solution to a stochastic differential equation (SDE) of the form $\mathrm{d}X_{t} = a(X)\, \mathrm{d}C_{t} + b(X)\, \mathrm{d}M_{t}$ for $t \in T = [\4 0, u]$ with a finite time horizon $u > 0$ for some potentially path-dependent coefficients $a$, $b$, a process $C$ of locally finite variation, and a local martingale $M$, and $\PP$ is a probability measure on a measurable space $(\Omega, \Fcal)$.
By averaging the payoffs over randomly sampled trajectories of $X$, one can estimate the price in many cases where no analytic solution for $\mathbb{E}_{\PP}[F(X)]$ is available.

The variance of the Monte Carlo estimator is inversely proportional to the number of trajectories simulated and proportional to the variance of the option payoff.
The square root of this variance is referred to as the standard error and in principle, it can be made as small as needed by simulating a sufficiently high amount of trajectories.
However, given limitations on computational time, the error can still be too large to be acceptable, especially for further calculations of option price derivatives (the so-called Greeks) required for hedging and risk management.

The usage of variance reduction methods can drastically reduce this error.
There are many different methods to reduce the variance of Monte Carlo estimators, one of which is importance sampling.
This method is based on changing the sampling measure from which the trajectories are generated from $\PP$ to some equivalent measure $\PP_{h}$, thereby overweighting important scenarios to increase the numerical efficiency of the estimates.
Due to Girsanov's theorem, this corresponds to adding a drift $h \in H$ to the process $M$, where $H$ denotes a prescribed space of functions or processes from which the drift adjustment is chosen.
One then writes
\begin{equation*}
\mathbb{E}_{\PP}[F(X)] = \mathbb{E}_{\PP_{h}}[F(X) Z_{h}^{-1}],
\end{equation*}
where $Z_{h}$ denotes a Radon--Nikod\'{y}m density of $\PP_{h}$ w.r.t. $\PP$, which also depends on the time-horizon $u$.
Instead of simulating realizations of $F(X)$ w.r.t. $\PP$, one then simulates realizations of $F(X) Z_{h}^{-1}$ w.r.t. $\PP_{h}$, and chooses $h \in H$ such that the variance of $F(X) Z_{h}^{-1}$ is minimized.

While this method usually requires a lot of specific knowledge about the model at hand, it has the potential to drastically reduce the variance of the corresponding Monte Carlo estimator.
In other words, importance sampling makes for a powerful method that involves the complex optimization problem of choosing an appropriate sampling measure which minimizes the variance of the Monte Carlo estimators.

Neural networks provide an algorithmically generated class of functions which on the one hand enjoy the universal approximation property in many different topological spaces, meaning that they are dense in these spaces, and on the other hand can be trained in a numerically efficient way.
Having recently entered the realm of mathematical finance, neural networks are successfully used e.g. for model calibration, hedging and pricing.
This paper develops a method that uses feedforward neural networks to perform importance sampling for complex stochastic models, which applies in particular to the evaluation of path-dependent derivatives.
By optimizing over drifts from a dense subspace $H(D)$ of $H$ that is generated by a set $D$ of feedforward neural networks, we obtain a tractable problem which is both theoretically justified and numerically efficient.

\subsection{Outline of the paper and main results}\label{subsec:outline}

In Section~\ref{sec:CMspace}, we characterize tractable spaces $H$ from which the drift adjustments may be chosen, and study their analytic properties.
Whenever $M$ is a vector-valued continuous local martingale with deterministic covariation, it induces a Gaussian measure, to which one can assign a Hilbert space $H$, the Cameron--Martin space.
Due to the multivariate nature of our study, we recall in Lemma~\ref{lem:L2space} some concepts which originate from the theory of stochastic integration with respect to vector-valued semimartingales.
A detailed characterization of the corresponding Cameron--Martin space $H$ that is induced by $M$ is provided in Proposition~\ref{prop:CMspace}, where the general formulation allows us to specifically incorporate complex and time-inhomogeneous covariance patterns for $M$ into our models.
Theorem~\ref{thm:CMapprox} then yields the essential approximation result that characterizes dense subspaces $H(D)$ of $H$ which are generated by prescribed sets of functions $D$ in an abstract and general setting, and in particular applies to sets $D$ of feedforward neural networks as a special case.

In Section~\ref{sec:ffn}, we focus our attention on feedforward neural networks, where we distinguish between neural networks of deep, narrow and shallow kind.
Propositions~\ref{prop:NNdense} and~\ref{prop:NNdense2} yield two approximation results which provide a theoretical justification for considering sets $D$ that consist of feedforward neural networks.
Example~\ref{ex:simple} then shows in a classical setting that the set $H(D)$ which is generated by a set $D$ of feedforward neural networks has an explicit and tractable characterization.
As a direct consequence of Theorem~\ref{thm:CMapprox}, Subsection~\ref{subsec:UATHoelder} then discusses a result which in particular implies that every smooth function can, up to an isometry, be approximated by feedforward neural networks arbitrarily well with respect to H{\"o}lder type topologies, which are stronger than the topology of uniform convergence.

Section~\ref{sec:IS} contains a detailed study of the importance sampling problem, where we aim to minimize the variance of $F(X) Z_{h}^{-1}$ w.r.t. $\PP_{h}$ by approximating the optimal drift $h$ with a feedforward neural network.
Theorem~\ref{thm:densityapprox} proves that the functional $V \from H \to \re_{+}$ which needs to be minimized is, under suitable generic assumptions, continuous and admitting a minimizer $h^{*} \in H$, which can be approximated, up to an isometry, arbitrarily well by feedforward neural networks.
Here, we not only prove convergence to the optimal drift $h^{*}$, but we moreover show that the corresponding Radon--Nikod\'{y}m densities converge to $Z_{h^{*}}$.
To this end, we prove that feedforward neural networks induce equivalent probability measures, whose densities with respect to the original measure converge in $L^{p}$-spaces, see Lemma~\ref{lem:density}.
Moreover, Subsection~\ref{subsec:UATGR} contains a discussion of a classical importance sampling approach that utilizes results from the theory of large deviations, where we show that feedforward neural networks can be employed to solve the corresponding variational problem which appears in this approach.
Let us note that, while the results from Sections~\ref{sec:CMspace} and~\ref{sec:ffn} are applied to importance sampling in Section~\ref{sec:IS}, they are also of independent interest.

Section~\ref{sec:simulation} contains a comprehensive numerical study pricing path-dependent European options for asset price models that incorporate factors such as changing business activity, knock-out barriers, dynamic correlations, and high-dimensional baskets.
To conclude, we summarize our findings and give an outlook on future work in Section~\ref{sec:conclusion}.
The appendix contains a brief glimpse at the theory of Gaussian measures and collects the proofs of all results.

\subsection{Related literature}\label{subsec:lit}

The line of research which eventually lead up to the present work originates from \cite{MR1849001}.
The authors study the problem of pricing path dependent options by using techniques from the theory of large deviations to perform a change of sampling measure that reduces the standard error of the Monte Carlo estimator.
Moreover, the authors use stratified sampling in order to further improve their simulations, and while we will not be using this technique, the interested reader might try to add stratified sampling on top of the method which we outline below.

The main motivation for this work was provided by Paolo Guasoni \& Scott Robertson in \cite{MR2362149}.
As in \cite{MR1849001}, the authors employ methods from the theory of large deviations to obtain a variational problem whose solution yields an asymptotically optimal drift adjustment.
The main difference to the present work is that we do not pass to a small noise limit. 
However, as it turns out, our method also complements the method presented in \cite{MR2362149}, see Subsection \ref{subsec:UATGR} for further details.

An extension of the methods used in \cite{MR2362149} to the study of importance sampling for stochastic volatility models has been provided in \cite{MR2565852}.
Note that our method does apply to these types of models as well, see Example~\ref{ex:stochvol} in Section~\ref{sec:IS} and Section~\ref{sec:simulation}, where we provide simulation results for several stochastic volatility models.

Another interesting contribution is \cite{dRST2018}, where the authors study importance sampling for McKean--Vlasov SDEs.
Similarly as in \cite{MR2362149,MR2565852} methods from the theory of large deviations yield an asymptotically optimal drift adjustment, and the authors discuss two different methods for the simulation of the solution to the McKean--Vlasov SDE under a change of measure.

The idea to use methods from the theory of stochastic approximation for the purpose of importance sampling has been studied extensively in \cite{MR2680557}.
This paper heavily influenced Section~\ref{sec:IS}, especially the proof of Theorem~\ref{thm:densityapprox} relies partially on a straightforward extension of the proof of~\cite[Proposition 4]{MR2680557}.
Let us also note that, while the setting of \cite[Section~3]{MR2680557} could be extended to our setting below, it might be of particular interest to understand how the algorithm proposed in \cite[Theorem~4]{MR2680557} could be adapted to the setting of Section \ref{sec:IS} below in order to yield convergence of the stochastic gradient descent algorithm when training feedforward neural networks.

Finally, one very original contribution that studies measure changes which are induced by neural networks for the purpose of Monte Carlo simulations is \cite{MMRGN19}.
The authors also study importance sampling and apply their results to light-transport simulations.
The main difference to \cite{MMRGN19} is that our method applies to the pricing of financial derivatives in a mathematically more natural way by using methods from the theory of stochastic calculus.
Here, we focus on neural networks that are of feedforward type.
For more details on the studied neural networks architectures and related literature, see Section \ref{sec:ffn}.

\begin{notation}
Unless stated otherwise, we endow $\re^{d}$ for each $d \in \na$ with the corresponding Euclidean norm $|\4\cdot\4|$.
$I_{d}$ denotes the identity matrix in $\re^{d \times d}$, and we write $\Rbarplus = \replus \cup \{ +\infty \}$.
Given two vectors $x,y$ of the same dimension, we denote by $x \odot y$ their Hadamard product.
If $\Sigma$ is a matrix, we denote by $\Sigma^{\top}$ its transpose.
For $x \in \re_{+}^{d}$ and $p > 0$, we understand $x^{p}$ to hold componentwise, and write $\sqrt{x}$ if $p = 1/2$.
Let us also convene that $\inf\{\emptyset\} = \infty$.
For each linear operator $A$ between normed spaces, we denote by $\| A \|_{\mathrm{op}}$ its operator norm.
If $E_{1}, E_{2}$ denote two metric spaces and $D$ is a subset of $E_{1}$, we say that $D$ is dense in $E_{2}$ up to an isometry, if there exists an isometry $J \from E_{1} \to E_{2}$, such that $J(D)$ is dense in $E_{2}$.
If $H$ denotes a Hilbert space, the notation $H^{*} \cong H$ is to indicate that we identify $H^{*}$ with $H$ via the isometric isomorphism given by Fr{\'e}chet--Riesz's representation theorem.

Given a topological space $(S, \mathcal{T})$, we denote by $S^{*}$ and $S'$ the topological and algebraic dual spaces, respectively, and write $(f, x) = f(x)$ for $(f,x) \in S^{*} \times S$ as well as $\B_{S}$ for the Borel $\sigma$-algebra on $S$.
Given $F \in \mathcal{T}$, we denote by $F^{\circ}$ and $\overbar{F}$ the interior and closure of $F$, respectively.
Whenever $\nu$ denotes a Borel measure on $S$, we say that $f \from S \to \re^{d}$ is locally $\nu$-essentially bounded, if $(\nu-)\esssup_{x \in K}|f(x)| < \infty$ for each compact $K \subset S$.
Given an interval $T=[\4 0, u]$, the space $C_{0}(T; \re^{d})$ consists of all $\re^{d}$-valued, continuous functions on $T$ that vanish at the origin.

Whenever $(S, \mathcal{S})$ is a measurable space, where $S$ denotes a set and $\mathcal{S}$ denotes a $\sigma$-algebra on $S$, we denote by $\mathcal{L}^{0}(\mathcal{S};\re^{d})$ the space of $\re^{d}$-valued, $\mathcal{S}$-measurable functions on $S$.
If $\mu$ is a measure on $\mathcal{S}$ and $f \in \mathcal{L}^{0}(\mathcal{S}) = \mathcal{L}^{0}(\mathcal{S};\re)$, we denote by $f \cdot \mu$ the Lebesgue integral of $f$ w.r.t. $\mu$, provided that it exists.
For $p > 0$, we further denote by $L^{p}(\mu)$ the space of equivalence classes of $p$-integrable functions from $\mathcal{L}^{0}(\mathcal{S})$.
The law of a random variable $Z$ is denoted by $\mathcal{L}(Z)$.
If $M$ is an $\re^{d}$-valued semimartingale, and $H \in L(M)$, we denote by $H^{\top} \sbullet[.75]M$ the stochastic integral of $H$ w.r.t. $M$.
We denote by $\mathcal{N}(m, \Sigma)$ the normal distribution with expected value $m \in \re^{d}$ and covariance matrix $\Sigma \in \re^{d \times d}$.
Finally, we denote for each $p \ge 1$ by $\mathcal{H}^{p}$ the Banach space of continuous $L^{p}$-integrable martingales, where the dependency on the underlying filtered probability space is implicit.
\end{notation}

\section{Universal approximation in Cameron--Martin space}\label{sec:CMspace}

In this section, we study a tractable space $H$ whose elements will be used to adjust the drift of $M$ for the purpose of importance sampling in Sections~\ref{sec:IS} and~\ref{sec:simulation} below.
Moreover, we identify dense linear subspaces of $H$ and obtain an explicit characterization of the Cameron--Martin spaces of a large class of Gaussian measures, which is of independent interest.
For details about Gaussian measures, we refer to Appendix~\ref{ap:GM}.

Let $(\Omega, \Fcal, \FF, \PP)$ with $\FF = (\Fcal_{t})_{t \in T}$ denote a filtered probability space, such that $\FF$ contains all $\PP$-null sets of $\Fcal$.
As index set for the time parameter, we consider $T =  [\4 0, u]$ with a finite time horizon $u > 0$.
Without loss of generality, we may assume that $\Fcal_{u} = \Fcal$.
We denote by $\lambda$ the restriction of the Lebesgue--Borel measure to $T$, and fix two dimensions $d, n \in \mathbb{N}$.
Let $M = (M_{t})_{t \in T}$ be an $\re^{d}$-valued continuous local martingale with $M_{0} = 0$.
Unless stated otherwise, we assume all stochastic processes to be $\FF$-adapted.

Let us start with a classical example that highlights the main concepts which are of importance in this section, before extending the study to a more general setting.

\begin{example}[Classical Wiener space]\label{ex:CWS}
Let $(E, H, \gamma)$ denote the classical Wiener space, where $E = C_{0}(T; \re^{d})$, $H = \bigl\{ h(t) = (\indicatorset{[\4 0, t]}f_{h})\cdot \lambda,\ t \in T\, |\, f_{h} \in L^{2}(\lambda; \re^{d})\bigr\}$ is the space of $\re^{d}$-valued, absolutely continuous functions on $T$ that admit a square-integrable density w.r.t. $\lambda$, and $\gamma$ is the classical Wiener measure on $E$, which is the Borel probability measure on $E$ that is induced by $\re^{d}$-valued standard Brownian motion $B = (B_{t})_{t \in T}$.

Endowed with the inner product $\langle g,h \rangle_{H} = \langle f_{g}, f_{h} \rangle_{L^{2}(\lambda; \re^{d})}$, one can show that $(H, \langle\cdot, \cdot \rangle_{H})$ is a real separable Hilbert space, which is moreover continuously embedded into $E$ as a dense linear subspace.
The operator $J \from L^{2}(\lambda; \re^{d}) \to H, f_{h} \mapsto h(\cdot) = (\indicatorset{[\4 0, \cdot]}f_{h})\cdot \lambda$ is a linear isometry by construction and thus continuous.
Whenever $D$ is a dense linear subspace of $L^{2}(\lambda; \re^{d})$, it follows that $J(D)$ is a dense linear subspace of $H$ and thus densely embedded into $E$.
In other words, $\overbar{J(D)} = H$ and $\overbar{J(D)} = E$, where the closure of $J(D)$ is taken in $H$ and $E$, respectively.
\end{example}

Section~\ref{sec:CMspace} is dedicated to a refined study of the identity $\overbar{J(D)} = H$ in a generalized setting.
To this end, let us state an assumption which allows us to study the process $M$ as a Gaussian process, and simplifies the proofs of Section~\ref{sec:IS}.
Moreover, it leads to a natural candidate for the space $H$ of drift adjustments, which consists of deterministic functions (see Definition~\ref{def:CMspace} below).

\begin{assumption}\label{assumption1}
The covariation process $[M]$ is, up to indistinguishability, deterministic, and $\trace([M])_{u} > 0$ outside a $\PP$-null set.
\end{assumption}

In what follows, we disregard the evanescent- and $\PP$-null sets on which the two conditions from Assumption~\ref{assumption1} are violated, and consider equalities between stochastic processes and (in)equalities between random variables to hold up to indistinguishability and $\PP$-almost surely, respectively.

\begin{definition}
The quadratic variation process $C \coloneqq \trace([M])$, being increasing and of finite variation, induces a finite Lebesgue--Stieltjes measure on $(T, \B_{T})$, which we denote $\mu$.
\end{definition}

\begin{remark}[Elementary properties of $M$]\label{rem:Gprop}
Due to L\'{e}vy's characterization (cf. \cite[Theorem~7.1]{Schmock2021}), the increments $M_{t} - M_{s}$ are independent of $\Fcal_{s}$ with $\mathcal{L}(M_{t} - M_{s}) = \N(0, [M]_{t} - [M]_{s})$ for all $s < t$ in $T$.
Therefore, $M$ is a centered Gaussian process, and \cite[Theorem~11.5]{MR4226142} shows that $M$ is an $\FF$-Markov process.
$M$ is actually a martingale, since $\mathbb{E}[ M_{t}-M_{s}\, |\, \Fcal_{s}] = \mathbb{E}[ M_{t}-M_{s}] = 0$ for $s < t$ in $T$.
Note that $\Cov(M_{t}, M_{s}) = [M]_{s \wedge t}$ for $s,t \in T$ since, assuming without loss of generality that $s < t$,
\begin{equation*}
\Cov(M_{t}, M_{s}) = \mathbb{E}[M_{t} M_{s}^{\top}] = \mathbb{E}[(M_{t}-M_{s}) M_{s}^{\top}] + \Cov(M_{s}, M_{s}) = [M]_{s}.
\end{equation*}
\end{remark}

\begin{remark}\label{rem:covariation}
The non-degeneracy condition from Assumption~\ref{assumption1} implies that $\mu(T) > 0$.

We write $\mu = \mu_{\mathrm{a}} + \mu_{\mathrm{s}}$ for the Lebesgue decomposition of $\mu$ w.r.t. $\lambda$ into an absolutely continuous measure $\mu_{\mathrm{a}} = f_{\lambda} \cdot \lambda$ and a singular measure $\mu_{\mathrm{s}}$, where $f_{\lambda}$ denotes a Radon--Nikod\'{y}m density of $\mu_{\mathrm{a}}$ w.r.t. $\lambda$, and both $\mu_{\mathrm{a}}$ and $\mu_{\mathrm{s}}$ are finite measures.
Note that $\mu$ has no atoms, since $[M]$ and therefore also $C$ are continuous.

We now proceed in line with \cite[Section~12.5]{MR3443368}.
For $i,j \in \{1,2,\hdots,d\}$, the covariation process $[M^{i}, M^{j}]$, being of finite variation, induces a finite signed (and due to the continuity of $[M^{i}, M^{j}]$ atomless) measure $\mu_{i,j}$ on $(T, \B_{T})$.
It follows from the Kunita--Watanabe inequality for Lebesgue--Stieltjes integrals (cf. \cite[Lemma~5.89]{Schmock2021}), that the total variation measure $|\mu_{i,j}|$ is absolutely continuous w.r.t. $\mu$, hence an application of the Radon--Nikod\'{y}m theorem for signed measures (cf. \cite[Section~1.7.14]{MR3443368}) yields the existence of a real-valued density $\mathrm{d}\mu_{i,j} / \mathrm{d}\mu \eqqcolon \pi_{i,j}$, that is in $L^{1}(\mu)$ since $\mu_{i,j}$ is finite.

We collect $(\pi_{i,j})_{i,j=1, \hdots, d}$ into a measurable function $\pi$ that assumes, due to the symmetry of $[M]$, values in the space of symmetric matrices in $\re^{d \times d}$, and write
\begin{equation}\label{eq:covarid}
[M]_{t} = \bigl(\pi \sbullet[.75] C\bigr)_{t} = (\indicatorset{[\4 0,t]}\pi) \cdot \mu,
\quad t \in T,
\end{equation}
where the notation is to be understood componentwise.
Let $(\eta_{k})_{k \in \na}$ be dense in $\re^{d}$.
For each $k \in \na$, set $A_{k} = \{s \in T\, |\, \eta_{k}^{\top}\pi(s)\eta_{k} \ge 0\}$ as well as $A = \bigcap_{k \in \na} A_{k} = \{s \in T\, |\, \eta^{\top} \pi(s) \eta \ge 0,\, \forall\, \eta \in \re^{d}\}$.
Note that
\begin{equation*}
\int_{[\4 0, t]} \eta_{k}^{\top} \pi(s) \eta_{k}\, \mu(\mathrm{d}s) = \bigl( (\eta_{k}^{\top} \pi \eta_{k}) \sbullet[.75] C\bigr)_{t} = [\eta_{k}^{\top} M]_{t} \ge 0,
\quad k \in \na,\, t \in T,
\end{equation*}
which implies that each $A_{k}^{\mathrm{c}}$ is a $\mu$-null set, and therefore $A^{\mathrm{c}}$, being the countable union of all sets $A_{k}^{\mathrm{c}}$, is a $\mu$-null set, too.
We conclude that $\pi$ is positive semi-definite $\mu$-a.e.
Note that we could, in the spirit of \cite{MR1975582,MR542115,MR580121,MR568256} and without loss of generality, replace $\pi$ by $\tilde{\pi} = \pi \indicatorset{A}$, and thus assume that it is positive semi-definite for each $t \in T$.
For the purpose of this paper, this step is not necessary though.
\end{remark}

\begin{remark}[Non-uniqueness of $(\pi, C)$]
The decomposition of $[M]$ into a matrix-valued function $\pi$ and an increasing process $C$ is not unique.
For example, take $\tilde{C} = \sum_{i=1}^{d} \eta_{i} [M^{i}]$, where $\eta \in \re^{d}$ is chosen such that $\eta_{i} > 0$ for all $i \in \{1,2,\hdots, d\}$.
More generally, take $\tilde{C} = \sum_{i=1}^{d} f_{i} \sbullet[.75] [M^{i}]$, with $f_{i} \from T \to \replus \setminus \{0\}$ in $L^{1}(\mu_{i,i})$ for each $i \in \{1,2,\hdots, d\}$.
In both cases, the corresponding function $\tilde{\pi}$ is then constructed as in Remark~\ref{rem:covariation}, and generally differs from $\pi$.
Lemma~\ref{lem:L2space5} below will show that the non-uniqueness of $(\pi, C)$ is not a problem though.
\end{remark}

\begin{example}\label{ex:BM}
If $\pi \equiv I_{d}$ and $\mu = \lambda$, then $M$ is, by L\'{e}vy's characterization, an $\re^{d}$-valued standard Brownian motion.
\end{example}

\begin{example}[Multivariate Heston model]\label{ex:MVHeston}
An example that holds relevance for practitioners is the multivariate Heston model, which we will briefly describe.

Let $d=2n$ for some $n \in \mathbb{N}$.
We consider a dynamic diffusion matrix given by $T \ni t \mapsto \Sigma(t) \in \mathbb{R}^{d \times d}$, a vector of appreciation rates $r \in \mathbb{R}^{n}$, a vector of mean-reversion levels $m \in \mathbb{R}^{n}$, and a diagonal matrix $\Theta \in \mathbb{R}^{n \times n}$ representing mean-reversion speeds. 
To avoid degeneracy, we assume that $(\Sigma_{k, \cdot}(t))^{\top}$ is not the zero vector for each $k \in {1,2,\dots,n}$ and $t \in T$.
Let $M_{t} = \Sigma(t) B_{t}$, where $B$ denotes a standard Brownian motion with values in $\mathbb{R}^{d}$. 
Note that $[M]_{t} = \mathrm{Cov}(M_{t}, M_{t}) = \int_{0}^{t}\Sigma(s) \Sigma^{\top}(s)\, \mathrm{d}s$ for each $t \in T$.
Hence, in light of Remark \ref{rem:covariation} above, we may choose $\mu = \lambda$ and $\pi(t) = \Sigma(t) \Sigma^{\top}(t)$.

Fix $s,x \in \re_{+}^{n}$.
For notational simplicity, we write $M^{(1)} = (M^{1}, M^{2}, \hdots, M^{n})^{\top}$ as well as $M^{(2)} = (M^{n+1}, M^{n+2}, \hdots, M^{2n})^{\top}$ such that $M = (M^{(1)}, M^{(2)})^{\top}$.
Write $X = (S,V)^{\top}$ and let the asset price follow the SDE $\mathrm{d}S_{t} = (r \odot S_{t})\, \mathrm{d}t + (S_{t} \odot \sqrt{V_{t}}) \odot \mathrm{d}M_{t}^{(1)}$, subject to $S_{0} \equiv s$.
The $n$-dimensional instantaneous variance process $V$ follows the Cox--Ingersoll--Ross (CIR) type SDE $\mathrm{d}V_{t} = \Theta(m-V_{t})\, \mathrm{d}t + \sqrt{V_{t}} \odot \mathrm{d}M_{t}^{(2)}$, subject to $V_{0} \equiv v$.
Here, we see that asset price models whose dynamics are driven by multivariate Brownian motions with dynamic variance-covariance matrices fall within the scope of our setting.
More generally, one could think of replacing $B_{t}$ by a time-changed Brownian motion $B_{f(t)}$ for a given deterministic time-change $f$.

In Section \ref{sec:simulation}, we will study special cases of this model, where we will impose either a time-change to model changing levels of business activity, or a dynamic correlation structure.
\end{example}

\begin{example}[An example where $\mu$ is singular w.r.t. $\lambda$]\label{ex:pathological}
Set $u = 1$, and let $B$ be a standard $\re^{d}$-valued $(\mathbb{G}, \PP)$-Brownian motion, where $\mathbb{G} = (\mathcal{G}_{t})_{t \in T}$ denotes a filtration of $\Fcal$.
Let $f \from T \to T$ be either Cantor's ternary function or Minkowski's question-mark function, and let $\Sigma \in \re^{d \times d}$ be a diffusion matrix.
Recall that Cantor's ternary function is continuous, monotonically increasing, has derivative zero on a set of Lebesgue measure zero, but is not absolutely continuous.
Likewise, Minkowski's question-mark function has the same properties, while being even strictly increasing.
Set $M_{t} \coloneqq \Sigma B_{f(t)}$ for $t \in T$, and note that $M$ is an $\re^{d}$-valued continuous $(\FF, \PP)$-martingale with $M_{0} = 0$ and $[M]_{t} = f(t)\Sigma \Sigma^{\top}$ for $t \in T$, where the filtration $\FF = (\Fcal_{t})_{t \in T}$ is given by $\Fcal_{t} = \mathcal{G}_{f(t)}$ for $t \in T$.
The corresponding Lebesgue--Stieltjes measure $\mu$ is singular w.r.t. $\lambda$, and while $\trace([M])$ is increasing when $f$ is Cantor's ternary function, $\trace([M])$ is even strictly increasing when $f$ is Minkowski's question-mark function.
See \cite{MR7929} for further examples of functions $f$ that can be used for constructions of this kind.
\end{example}

Based on the pair $(\pi, \mu)$, we define a weighted $L^{2}$-space, which we denote $\Lambda^{2}$, which is a generalization of the space $L^{2}(\lambda; \re^{d})$ in the context of Example~\ref{ex:CWS}, and recall some elementary properties.
In Section~\ref{sec:IS}, where we study importance sampling, functions $f$ from $\Lambda^{2}$ will be used to construct equivalent measures via the Dol\'{e}ans exponential $\mathcal{E}(f^{\top}\sbullet[.75]M)$.
As we will argue in Section \ref{sec:ffn}, feedforward neural networks are dense in $\Lambda^{2}$ under suitable assumptions which, due to Theorem~\ref{thm:densityapprox}, provides a theoretical justification for using feedforward neural networks in order to calibrate an optimal sampling measure that minimizes the variance of the Monte Carlo estimators in Sections~\ref{sec:IS} and~\ref{sec:simulation}.
The definition of the space $\Lambda^{2}$ is inspired by the concept of vector stochastic integration, see \cite{MR1975582,MR542115,MR568256}, and in particular \cite[Chapitre~{\MakeUppercase{\romannumeral 4}}]{MR542115}, for further details and generalizations.

\begin{lemma}[D'apr{\`e}s Jacod]\label{lem:L2space}
Let\/ $\Lambda^{2}$ denote the set of all\/ $f \in \mathcal{L}^{0}(\B_{T};\re^{d})$ with
\begin{equation*}
\| f \|_{\Lambda^{2}} \coloneqq \Bigl(\int_{T} f^{\top}(s)\pi(s)f(s)\, \mu(\mathrm{d}s)\Bigr)^{1/2} < \infty,
\end{equation*}
where we identify\/ $f,g \in \Lambda^{2}$ if\/ $(f-g)^{\top}\pi(f-g) =0$\/ $\mu$-a.e., and write\/ $f \sim g$ in this case.
We further set\/ $\langle f,g \rangle_{\Lambda^{2}} \coloneqq \int_{T} f^{\top}(s)\pi(s)g(s)\, \mu(\mathrm{d}s)$ for\/ $f,g \in \Lambda^{2}$.
Then:

\begin{enumerate}[ref={\thelemma(\alph*)}, label=(\alph*)]
\setlength\itemsep{0.1em}
\item\label{lem:L2space1}
$(\Lambda^{2}, \langle \cdot, \cdot \rangle_{\Lambda^{2}})$ is a real separable Hilbert space;

\item\label{lem:L2space2}
To each\/ $F \in (\Lambda^{2})^{*}$ there corresponds a unique function\/ $g \in \Lambda^{2}$, such that
\begin{equation*}
F(f) = \int_{T} g^{\top}(s)\pi(s)f(s)\, \mu(\mathrm{d}s),
\quad f \in \Lambda^{2},
\end{equation*}
and $\| F \|_{\mathrm{op}} = \|g\|_{\Lambda^{2}}$.
Therefore,\/ $(\Lambda^{2})^{*}$ is isometrically isomorphic to\/ $\Lambda^{2}$;

\item\label{lem:L2space3}
We denote by\/ $\Lambda^{2,0}$ the set of all\/ $f \in \mathcal{L}^{0}(\B_{T};\re^{d})$ that satisfy\/ $f_{i} \in L^{2}(\mu_{i,i})$ for each\/ $i \in \{1,2,\hdots,d\}$, where we identify functions in the same manner as above.
Then:

\begin{enumerate}[label=(\arabic*), ref=\thelemma\labelenumi]
\setlength\itemsep{0.1em}
\item\label{lem:L2space3a}
$(\Lambda^{2,0},\langle \cdot, \cdot \rangle_{\Lambda^{2}})$ is a separable inner product space with\/ $\Lambda^{2,0} \subset \Lambda^{2}$;
\item\label{lem:L2space3b}
$\Lambda^{2,0}$ is dense in\/ $\Lambda^{2}$, hence\/ $\Lambda^{2}$ is the completion of\/ $\Lambda^{2,0}$ w.r.t.\/ $\|\4\cdot\4 \|_{\Lambda^{2}}$;
\end{enumerate}

\item\label{lem:L2space4}
$(C(T; \re^{d}), \|\4\cdot\4\|_{\infty})$ is continuously embedded into\/ $(\Lambda^{2,0}, \|\4\cdot\4\|_{\Lambda^{2}})$ as a dense linear subspace, where\/ $\|f\|_{\infty} \coloneqq \sup_{t \in T}|f(t)|$ for\/ $f \in C(T; \re^{d})$;

\item\label{lem:L2space5}
$\Lambda^{2}$ and\/ $\Lambda^{2,0}$ do not depend on the specific choice of\/ $(\pi, \mu)$ that satisfy \eqref{eq:covarid}.
\end{enumerate}
\end{lemma}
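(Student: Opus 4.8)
First, I would realise $\Lambda^{2}$ concretely as a closed subspace of the genuine $L^{2}$-space $L^{2}(\mu;\re^{d})$; with this identification in hand, \ref{lem:L2space1} and \ref{lem:L2space2} are immediate, and the density assertions \ref{lem:L2space3b} and \ref{lem:L2space4} reduce to transporting the classical density of continuous functions in $L^{2}(\mu;\re^{d})$ through it. By Remark \ref{rem:covariation} we may, after modifying $\pi$ on a $\mu$-null set (which changes neither $\Lambda^{2}$ nor $\Lambda^{2,0}$ nor $\|\cdot\|_{\Lambda^{2}}$), assume that $\pi(s)$ is symmetric positive semi-definite for every $s \in T$. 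Writing $\sigma \coloneqq \pi^{1/2}$ for the pointwise symmetric positive semi-definite square root, continuity of $A \mapsto A^{1/2}$ on the positive semi-definite cone gives $\sigma \in \mathcal{L}^{0}(\B_{T};\re^{d \times d})$ and $f^{\top}\pi f = |\sigma f|^{2}$ pointwise, so that
\begin{align*}
K \from \Lambda^{2} \to L^{2}(\mu;\re^{d}), \qquad K(f) \coloneqq \sigma f,
\end{align*}
is a well-defined linear isometry on $\sim$-classes. The key point is that $\operatorname{ran} K$ equals the closed subspace $L^{2}_{\sigma} \coloneqq \{ g \in L^{2}(\mu;\re^{d}) : g(s) \in \operatorname{ran}\sigma(s) \text{ for } \mu\text{-a.e. } s\}$ (closed because the constraint is pointwise linear and norm convergence yields an a.e.\ convergent subsequence): the inclusion ``$\subseteq$'' is clear, and for ``$\supseteq$'' I would use a Borel measurable Moore--Penrose pseudoinverse $\sigma^{+}$ --- Borel since it is continuous on each of the finitely many Borel sets $\{\operatorname{rank}\sigma = k\}$ --- to check that $\sigma^{+}g \in \Lambda^{2}$ with $\sigma\sigma^{+}g = g$ for every $g \in L^{2}_{\sigma}$. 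Thus $\Lambda^{2}$ is isometrically a closed subspace of the separable Hilbert space $L^{2}(\mu;\re^{d})$, which gives \ref{lem:L2space1}, and \ref{lem:L2space2} is then Fr{\'e}chet--Riesz applied to $\Lambda^{2}$, spelled out via $\langle g, f\rangle_{\Lambda^{2}} = \int_{T} g^{\top}\pi f\, \mathrm{d}\mu$.

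Next I would turn to the subspaces. For \ref{lem:L2space3a}, the Kunita--Watanabe inequality invoked in Remark \ref{rem:covariation} gives $\int_{T} |f_{i}||g_{j}|\, \mathrm{d}|\mu_{i,j}| \le \|f_{i}\|_{L^{2}(\mu_{i,i})}\|g_{j}\|_{L^{2}(\mu_{j,j})}$, so for $f, g \in \Lambda^{2,0}$ every term of the finite sum $\langle f, g\rangle_{\Lambda^{2}} = \sum_{i,j} \int_{T} f_{i}g_{j}\,\pi_{i,j}\, \mathrm{d}\mu$ is absolutely convergent; hence $\Lambda^{2,0} \subset \Lambda^{2}$ with the restricted inner product, and $\Lambda^{2,0}$ is separable as a subset of the separable space $\Lambda^{2}$. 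Since $T$ is compact, every $f \in C(T;\re^{d})$ is bounded, hence has components in $L^{2}(\mu_{i,i})$ (these measures being finite), so $C(T;\re^{d}) \subset \Lambda^{2,0}$. The substantive part of \ref{lem:L2space3b} is that $C(T;\re^{d})$ is already dense in $\Lambda^{2}$, which through $K$ amounts to $\{\sigma f : f \in C(T;\re^{d})\}$ being dense in $L^{2}_{\sigma}$; I would prove this by duality: if $h \in L^{2}_{\sigma}$ satisfies $\int_{T} (\sigma h)^{\top}f\, \mathrm{d}\mu = \int_{T} h^{\top}\sigma f\, \mathrm{d}\mu = 0$ for all $f \in C(T;\re^{d})$, then density of $C(T;\re^{d})$ in $L^{2}(\mu;\re^{d})$ forces $\sigma h = 0$ $\mu$-a.e., whence $h(s) \in \operatorname{ran}\sigma(s) \cap \ker\sigma(s) = \{0\}$ $\mu$-a.e. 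As $C(T;\re^{d}) \subset \Lambda^{2,0} \subset \Lambda^{2}$ with $C(T;\re^{d})$ dense in $\Lambda^{2}$, the subspace $\Lambda^{2,0}$ is dense in $\Lambda^{2}$, so $\Lambda^{2}$ is, isometrically, the completion of $\Lambda^{2,0}$.

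Then, for \ref{lem:L2space4}, comparing $\mu([0,t]) = C_{t} = \trace([M])_{t} = \sum_{i}[M^{i}]_{t} = \int_{[0,t]} \trace(\pi)\, \mathrm{d}\mu$ shows $\trace(\pi) = 1$ $\mu$-a.e., hence $\|\pi(s)\|_{\mathrm{op}} \le 1$ $\mu$-a.e.\ and $\|f\|_{\Lambda^{2}} \le \mu(T)^{1/2}\|f\|_{\infty}$ for $f \in C(T;\re^{d})$; combined with the density from the previous step (and the fact that $\Lambda^{2,0}$ carries the subspace topology, so density in $\Lambda^{2}$ entails density in $\Lambda^{2,0}$), this is the claimed continuous dense embedding. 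For \ref{lem:L2space5}, for bounded measurable $f$ one splits $\int_{T} f^{\top}\pi f\, \mathrm{d}\mu$ over the finite sum and uses $\pi_{i,j}\, \mathrm{d}\mu = \mathrm{d}\mu_{i,j}$ to get $\|f\|_{\Lambda^{2}}^{2} = \sum_{i,j} \int_{T} f_{i}f_{j}\, \mathrm{d}\mu_{i,j}$, an expression built only from the Lebesgue--Stieltjes measures of $[M^{i}, M^{j}]$ and hence independent of the chosen pair $(\pi,\mu)$; for general measurable $f$ I would apply this to the truncations $f^{(N)} \coloneqq f\,\indicator{|f| \le N}$ and let $N \to \infty$, using $|\sigma f^{(N)}|^{2} = |\sigma f|^{2}\indicator{|f| \le N} \uparrow |\sigma f|^{2}$ and monotone convergence. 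Thus $\|\cdot\|_{\Lambda^{2}}$, the relation $\sim$, and therefore $\Lambda^{2}$ and --- from its defining conditions, the measures $\mu_{i,i}$ being intrinsic --- also $\Lambda^{2,0}$ do not depend on $(\pi,\mu)$.

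I expect the only genuinely delicate part to be the measurable-selection bookkeeping in the first step when $\pi$ degenerates in rank --- constructing $\sigma$ and a Borel pseudoinverse and pinning down $\operatorname{ran} K = L^{2}_{\sigma}$; everything else is routine Hilbert-space theory together with the Kunita--Watanabe inequality already available from Remark \ref{rem:covariation}.
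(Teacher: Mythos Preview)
Your proof is correct and takes a genuinely different route from the paper's. The paper establishes the Hilbert-space structure of $\Lambda^{2}$ by citing Jacod \cite{MR542115} for completeness and adapting separability arguments from \cite{MR2893652,MR1975582}; you instead build an explicit isometry $K \from \Lambda^{2} \to L^{2}(\mu;\re^{d})$, $f \mapsto \pi^{1/2}f$, identify its range as the closed subspace $L^{2}_{\sigma}$ via a measurable Moore--Penrose pseudoinverse, and inherit everything from the ambient $L^{2}$-space. This is more self-contained and gives a concrete model of $\Lambda^{2}$, at the cost of the measurable-selection bookkeeping you flag. For the density of $\Lambda^{2,0}$ the paper takes the shorter path of observing that bounded measurable functions lie in $\Lambda^{2,0}$ and are dense in $\Lambda^{2}$ by dominated convergence; your duality argument through $K$ works but is less direct. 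Your computation $\trace(\pi) = 1$ $\mu$-a.e.\ and the resulting bound $\|f\|_{\Lambda^{2}} \le \mu(T)^{1/2}\|f\|_{\infty}$ is cleaner than the paper's route via the Kunita--Watanabe estimate \eqref{eq:KWLS} and \eqref{eq:helper}. Finally, for part \ref{lem:L2space5} the paper writes $\int_{T} f^{\top}\pi f\,\mathrm{d}\mu = \sum_{i,j}\int_{T} f_{i}f_{j}\,\mathrm{d}\mu_{i,j}$ for general measurable $f$ without comment, whereas your truncation-plus-monotone-convergence argument is the more careful way to justify this identity when the individual cross terms need not be integrable (as Example \ref{ex:equivalenceclasses} shows can happen for $f \in \Lambda^{2} \setminus \Lambda^{2,0}$).
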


\begin{example}
According to \cite[Lemme~4.30]{MR542115} and the discussion thereafter, a sufficient condition for $\Lambda^{2,0} = \Lambda^{2}$ to hold is if there exists a constant $c > 0$ such that $\sum_{i=1}^{d}\pi_{i,i}f_{i}^{2} \le c f^{\top}\pi f$\/ $\mu$-a.e. holds for all $f \in \Lambda^{2}$.
Examples where this applies are when $\pi$ is a diagonal or uniformly strictly elliptic matrix, where the latter condition means that there exists a constant $c > 0$ such that $c |\eta|^{2} \le \eta^{\top}\pi \eta$ holds for all $\eta \in \re^{d}$.
\end{example}

\begin{example}\label{ex:equivalenceclasses}
Let us state one example, which is a deterministic version of \cite[Example~12.5.1]{MR3443368}, where $\Lambda^{2,0} \neq \Lambda^{2}$.
To this end, let $B$ denote a real-valued standard Brownian motion.
Set $M = (B, -B)^{\top}$ and note that $M$ is an $\re^{2}$-valued continuous martingale with covariation
\begin{equation*}
[M]_{t} =
\begin{pmatrix}
t & -t \\
-t & t \\
\end{pmatrix},
\quad \mathrm{hence} \quad
\pi \equiv
\begin{pmatrix}
1 & -1 \\
-1 & 1 \\
\end{pmatrix},
\end{equation*}
where we choose $\mu = \lambda$.

$\pi$ is positive semi-definite, since for each $\eta \in \re^{2}$, we have $\eta^{\top}\pi\eta = (\eta_{1}^{2}+\eta_{2}^{2}-2\eta_{1}\eta_{2}) = (\eta_{1}-\eta_{2})^{2}$, which is zero precisely when $\eta_{1} = \eta_{2}$, and positive otherwise.
Let $f \from T \to \re$ be measurable and such that $f \not\in L^{2}(\lambda)$.
Consider the function $g \from T \to \re^{2}$ given by $g = (f,f)^{\top}$.
By construction, we then have $g \not\in \Lambda^{2,0}$, but since $\| g \|_{\Lambda^{2}} = 0$, we have $g \in \Lambda^{2}$.
\end{example}

For $f \in \Lambda^{2}$, we have $f \in L^{2}(M)$ in the sense of vector stochastic integration (see also Lemma~\ref{lem:IKWapprox} in Appendix~\ref{app:technical}).
Since $[f^{\top}\sbullet[.75]M]_{u} = \|f\|_{\Lambda^{2}}^{2}$ is deterministic and finite, Novikov's criterion shows that $Z = \mathcal{E}(f^{\top} \sbullet[.75] M)$ is a strictly positive uniformly integrable martingale.
Girsanov's theorem shows that under the measure $\mathbb{Q}$ with $\mathrm{d}\mathbb{Q} / \mathrm{d}\PP = Z_{t}$ on $\Fcal_{t}$ for each $t \in T$, the finite variation part in the semimartingale decomposition of $M$ is given by $[f^{\top}\sbullet[.75]M, M]  = h$, where $h(t) = (\indicatorset{[\4 0, t]} \pi f) \cdot \mu$ for $t \in T$.
These considerations motivate the following definition.

\begin{definition}\label{def:CMspace}
We denote by $H$ the set of all $h \from T \to \re^{d}$ with the representation
\begin{equation}\label{CMfunc}
h(t) = J(f_{h})(t) \coloneqq \int_{[0, t]}\pi(s)f_{h}(s)\, \mu(\mathrm{d}s),
\quad t \in T,
\end{equation}
for some $f_{h} \in \Lambda^{2}$, where the integral in \eqref{CMfunc} is to be understood componentwise as a Lebesgue--Stieltjes integral.
\end{definition}

As Proposition~\ref{prop:CMspace} below will show, upon being endowed with an appropriate inner product, $H$ becomes the Cameron--Martin space of the Gaussian measure $\gamma_{M}$ which is induced by $M$ on $C_{0}(T; \re^{d})$.
To the best of our knowledge, there exists no explicit characterization of the Cameron--Martin space of $\gamma_{M}$ at the present level of generality in the literature so far, as one usually assumes $M$ to be a Brownian motion, which is a special case of our setting (see Example~\ref{ex:BM}).

\begin{example}
In the context of Example~\ref{ex:BM}, $H$ coincides with the set of absolutely continuous functions whose densities are square-integrable w.r.t. $\lambda$.
\end{example}

\begin{remark}[Extension to multivariate Volterra type Gaussian processes]\label{rem:fbm}
The Cameron--Martin space of fractional Brownian motion is not contained in our framework, except for the special case of a Brownian motion.
The matrix-valued function $\pi$ is not to be confused with the square-integrable but singular kernel which appears in integral representations of fractional Brownian motion and, more generally, Volterra type Gaussian processes.
However, our framework can be extended to multivariate versions of these processes with representations of the form $\tilde{M}_{t} = \int_{0}^{u}k(t,s)\, \mathrm{d}M_{s}$, where $k$ denotes an $\re^{d \times d}$-valued kernel function, for which, under suitable assumptions on $k$, the corresponding Cameron--Martin space consists of functions of the form
\begin{equation*}
\tilde{h}(t) = \int_{[0, u]}k(t,s)\pi(s)f_{h}(s)\, \mu(\mathrm{d}s),
\quad t \in T,
\end{equation*}
for some $f_{h} \in \Lambda^{2}$.
This formulation gives rise to the study of refined versions of multivariate Volterra type Gaussian processes as well as multivariate fractional stochastic volatility models, as one can now distinguish more explicitly between time-inhomogeneous volatility patterns which are induced by $\mu$ (or equivalently, by the quadratic variation $C$), the dependency structure of the components of $M$, which is modeled by the function $\pi$, and the path irregularities of $\tilde{M}$, which are induced by the matrix-valued kernel $k$.
\end{remark}

\begin{remark}
Equation~\eqref{CMfunc} suggest a generalization, where the functions $f_{h}$ assume values in a (possibly infinite-dimensional) Hilbert space $\tilde{H}$, and $\pi$ assumes $\mu$-a.e. values in the set of positive semi-definite operators on $\tilde{H}$.
In this case, the integral in~\eqref{CMfunc} is to be understood as a Lebesgue--Stieltjes--Bochner integral.
\end{remark}

\begin{example}[Forward curve spaces]
Let $w \from T \to [1, \infty)$ be a function in $L^{1}(\lambda; \re)$ that is non-decreasing, set $\Sigma = I_{d}$, and define the measure $\mu$ through $\mu(A) = \int_{A}w(s)\, \lambda(\mathrm{d}s)$ for $A \in \B_{T}$.
In line with Example~\ref{ex:pathological}, we can then construct a process $M$ by means of the increasing and continuous function $f \from T \to \replus$ given by $f(t) \coloneqq \mu\bigl([\4 0,t]\bigr)$ for $t \in T$.
In the context of Definition~\ref{def:CMspace}, the corresponding space $H$ then has similarities to the forward curve space $H_{w}$ (cf. \cite[Chapter~5]{MR1828523}) that is used in interest rate modeling.
\end{example}

Since elements from $H$ will be precisely those which we consider for the drift adjustment of $M$ in Sections~\ref{sec:IS} and~\ref{sec:simulation}, we need to collect some useful properties which will be needed later on (in particular for Theorem~\ref{thm:CMapprox}).
The following proposition collects these properties and further deepens the connections to the process $M$.
As it turns out, being endowed with a suitable inner product, $H$ is not only the isometric image of the space $\Lambda^{2}$ whose definition was inspired by the representation \eqref{eq:covarid} of $[M]$, but $H$ is also the Cameron--Martin space of the Gaussian measure $\gamma_{M}$ which is induced by $M$ on $C_{0}(T; \re^{d})$.

\begin{proposition}\label{prop:CMspace}
Consider the mapping\/ $\langle \cdot, \cdot \rangle_{H}$ on\/ $H \times H$ given by\/ $\langle g,h \rangle_{H} \coloneqq \langle f_{g}, f_{h} \rangle_{\Lambda^{2}}$.
Then:

\begin{enumerate}[ref={\theproposition(\alph*)}, label=(\alph*)]
\setlength\itemsep{0.1em}
\item\label{prop:CMspace1}
The integral in \eqref{CMfunc} is well defined for all\/ $f_{h} \in \Lambda^{2}$ and\/ $t \in T$;

\item\label{prop:CMspace2}
$(H, \langle\cdot,\cdot\rangle_{H})$ is a real separable Hilbert space;

\item\label{prop:CMspace3}
$J \from \Lambda^{2} \to H$ is a linear isometry, and\/ $(H^{0}, \langle\cdot,\cdot\rangle_{H})$ is an inner product space whose completion is\/ $(H, \langle\cdot,\cdot\rangle_{H})$, where we set\/ $H^{0} \coloneqq J(\Lambda^{2,0}) \subset H$;

\item\label{prop:CMspace5}
To each\/ $F \in H^{*}$ there corresponds a unique function\/ $g_{F} \in H$, such that
\begin{equation*}
F(h) = \langle g_{F}, h \rangle_{H} = \int_{T} f_{g_{F}}^{\top}(s)\pi(s)f_{h}(s)\, \mu(\mathrm{d}s),
\quad h \in H,
\end{equation*}
and $\| F \|_{\mathrm{op}} = \|g_{F}\|_{H}$.
Therefore,\/ $H^{*}$ is isometrically isomorphic to\/ $H$;

\item\label{prop:CMspace6}
$H$ is the Cameron--Martin space of the centered Gaussian measure\/ $\gamma_{M}$ that is induced by\/ $M$ on $E = C_{0}(T; \re^{d})$.
\end{enumerate}
\end{proposition}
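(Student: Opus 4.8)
The plan is to prove the five assertions in turn, treating \textbf{(a)}--\textbf{(d)} as bookkeeping built on the positive semi-definiteness of $\pi$ (Remark~\ref{rem:covariation}) together with Lemma~\ref{lem:L2space}, and reserving the real work for \textbf{(e)}. For \textbf{(a)} I would fix $f_{h}\in\Lambda^{2}$ and $t\in T$: since $\pi(s)$ is symmetric positive semi-definite for $\mu$-a.e.\ $s$, Cauchy--Schwarz for the semi-inner product $\eta\mapsto\eta^{\top}\pi(s)\eta$ gives $|(\pi(s)f_{h}(s))_{i}|\le\pi_{i,i}(s)^{1/2}(f_{h}^{\top}\pi f_{h})(s)^{1/2}$, and integrating (with Cauchy--Schwarz in $L^{2}(\mu)$) yields $\int_{T}|(\pi f_{h})_{i}|\,\mathrm{d}\mu\le\mu_{i,i}(T)^{1/2}\|f_{h}\|_{\Lambda^{2}}<\infty$; hence $\pi f_{h}\in L^{1}(\mu;\re^{d})$, the componentwise integral in \eqref{CMfunc} is well defined, and moreover $\|J(f_{h})\|_{\infty}\le(\sum_{i=1}^{d}\mu_{i,i}(T)^{1/2})\|f_{h}\|_{\Lambda^{2}}$, a bound I will reuse in \textbf{(e)}. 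For \textbf{(b)} and \textbf{(c)} I would use that for symmetric positive semi-definite $\pi$ one has $v^{\top}\pi v=0\Leftrightarrow\pi v=0$; thus $f\sim g$ in $\Lambda^{2}$ iff $\pi f=\pi g$ $\mu$-a.e., which is in turn equivalent to $J(f)=J(g)$ (one implication is clear; for the other, $J(f)=J(g)$ forces $\int_{[\4 0,t]}\pi(f-g)\,\mathrm{d}\mu=0$ for all $t$, whence $\pi(f-g)=0$ $\mu$-a.e., since the intervals $[\4 0,t]$ generate $\B_{T}$). So $J$ descends to a linear bijection from $\Lambda^{2}$ (modulo $\sim$) onto $H$, the rule $\langle Jf,Jg\rangle_{H}\coloneqq\langle f,g\rangle_{\Lambda^{2}}$ is well defined, $J$ is a linear isometry onto $H$, and transporting the Hilbert structure of $\Lambda^{2}$ (Lemma~\ref{lem:L2space1}) makes $(H,\langle\cdot,\cdot\rangle_{H})$ a real separable Hilbert space; since $\Lambda^{2,0}$ is dense in $\Lambda^{2}$ (Lemma~\ref{lem:L2space3b}), $H^{0}=J(\Lambda^{2,0})$ is a dense inner-product subspace of $H$. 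Assertion \textbf{(d)} is then just the Fr\'echet--Riesz theorem for $H$, the displayed formula being $\langle\cdot,\cdot\rangle_{H}$ written out through $J$.

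The core is \textbf{(e)}. First I would note that $M$ has continuous paths vanishing at $0$, so $\gamma_{M}=\mathcal{L}(M)$ is a Borel probability measure on $E=C_{0}(T;\re^{d})$, and it is Gaussian since every $\phi\in E^{*}$ sends it to a Gaussian law on $\re$ (Remark~\ref{rem:Gprop} together with the fact that $\phi(M)$ is an $L^{2}$-limit of Gaussian random variables). By the Riesz representation theorem every $\phi\in E^{*}$ equals $\phi_{\nu}(x)=\int_{T}x(s)^{\top}\,\nu(\mathrm{d}s)$ for a finite $\re^{d}$-valued Borel measure $\nu$ on $T$ (unique up to a point mass at $0$, which is $\mu$-null). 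Using $\Cov(M_{t},M_{s})=[M]_{s\wedge t}$ (Remark~\ref{rem:Gprop}) and \eqref{eq:covarid}, pairing with the evaluation functionals yields $(R_{\gamma_{M}}\phi_{\nu})(t)=\int_{T}[M]_{s\wedge t}\,\nu(\mathrm{d}s)$ for the covariance operator $R_{\gamma_{M}}\from E^{*}\to E$, and a Fubini interchange (legitimate because $\|\pi\|_{\mathrm{op}}\le\trace\pi\in L^{1}(\mu)$ and $\nu$ is finite) rewrites this as $(R_{\gamma_{M}}\phi_{\nu})(t)=\int_{[\4 0,t]}\pi(r)g_{\nu}(r)\,\mu(\mathrm{d}r)=J(g_{\nu})(t)$, where the tail function $g_{\nu}(r)\coloneqq\nu([r,u])\in\re^{d}$ is bounded and measurable, hence in $\Lambda^{2,0}$. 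Thus the range $\mathcal{R}\coloneqq R_{\gamma_{M}}(E^{*})$ equals $J(\{g_{\nu}\})\subset H$, and the same Fubini computation shows that the canonical Cameron--Martin inner product $(\phi_{\nu},R_{\gamma_{M}}\phi_{\nu'})$ on $\mathcal{R}$ equals $\int_{T}g_{\nu}^{\top}\pi g_{\nu'}\,\mathrm{d}\mu=\langle g_{\nu},g_{\nu'}\rangle_{\Lambda^{2}}=\langle R_{\gamma_{M}}\phi_{\nu},R_{\gamma_{M}}\phi_{\nu'}\rangle_{H}$, so on $\mathcal{R}$ it agrees with $\langle\cdot,\cdot\rangle_{H}$.

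It then remains to check that $\mathcal{R}$ is dense in $H$: taking $\nu=\delta_{s}e_{i}$ gives $g_{\nu}=\indicatorset{[\4 0,s]}e_{i}$, so $\mathcal{R}$ contains $J$ of every $\re^{d}$-valued step function, and such step functions are $\Lambda^{2}$-dense (approximate $f\in\Lambda^{2}$ first by a continuous function via Lemmas~\ref{lem:L2space4} and~\ref{lem:L2space3b}, then uniformly by step functions, using continuity of the embedding $C(T;\re^{d})\hookrightarrow\Lambda^{2}$), so $\mathcal{R}=J(\{g_{\nu}\})$ is dense in $H$. Consequently $(H,\langle\cdot,\cdot\rangle_{H})$ is a completion of $\mathcal{R}$ with the Cameron--Martin inner product, hence isometrically isomorphic to the Cameron--Martin space $H_{\gamma_{M}}$ of $\gamma_{M}$ (see Appendix~\ref{ap:GM}); and since the inclusion $\mathcal{R}\hookrightarrow E$ extends continuously to $H$ (by the sup-norm bound from \textbf{(a)}) and also to $H_{\gamma_{M}}$, the two realisations inside $E$ coincide, which is \textbf{(e)}.

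The hard part will be \textbf{(e)}, concretely: pinning down $E^{*}$ via Riesz, carrying out the two Fubini interchanges that produce $g_{\nu}(r)=\nu([r,u])$ and the identities $R_{\gamma_{M}}\phi_{\nu}=J(g_{\nu})$ and $(\phi_{\nu},R_{\gamma_{M}}\phi_{\nu'})=\langle g_{\nu},g_{\nu'}\rangle_{\Lambda^{2}}$, and finally matching the abstract Cameron--Martin embedding into $E$ with the concrete one; parts \textbf{(a)}--\textbf{(d)} are routine.
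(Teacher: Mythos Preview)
Your proposal is correct. For \textbf{(a)}--\textbf{(d)} your argument is essentially the paper's, with one cosmetic difference: you establish well-definedness of $\langle\cdot,\cdot\rangle_{H}$ by first proving the equivalence $f\sim g\Leftrightarrow\pi f=\pi g$ $\mu$-a.e.\ $\Leftrightarrow J(f)=J(g)$, whereas the paper verifies positive definiteness of $\langle\cdot,\cdot\rangle_{H}$ directly from the Cauchy--Schwarz bound \eqref{eq:CSvariant2} and only later (inside the proof of \textbf{(e)}) proves injectivity of $J$ by a separate measure-theoretic argument.

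For \textbf{(e)} the two routes diverge in organisation, though not in the underlying computations. Both compute $R_{\gamma_{M}}\phi_{\nu}=J(g_{\nu})$ with $g_{\nu}(r)=\nu([r,u])$ via the same Fubini interchange. The paper then recognises this as the factorisation $R_{\gamma_{M}}=J\circ J^{*}$ (with $J^{*}\colon\nu\mapsto g_{\nu}$), proves injectivity of $J\colon\Lambda^{2}\to E$, and invokes the black-box factorisation theorem (Theorem~\ref{thm:factorization}) to conclude $H(\gamma_{M})=J(\Lambda^{2})=H$ with the transported inner product. You instead argue by hand: you show the Cameron--Martin inner product on $\mathcal{R}=R_{\gamma_{M}}(E^{*})$ agrees with $\langle\cdot,\cdot\rangle_{H}$, prove $\mathcal{R}$ is $H$-dense via step functions (taking $\nu=\delta_{s}e_{i}$), and then identify the two completions inside $E$ by uniqueness of continuous extension. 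The paper's route is shorter once Theorem~\ref{thm:factorization} is available; yours is more self-contained and avoids that appendix result at the cost of the extra density and embedding-matching steps. One small point to make explicit in your write-up: the identification of $H_{\gamma_{M}}$ with the completion of $\mathcal{R}$ uses that $j(E^{*})$ is dense in $E_{\gamma_{M}}^{*}$ and that $R_{\gamma_{M}}$ is an isometry from $E_{\gamma_{M}}^{*}$ onto $H(\gamma_{M})$, which is implicit in Definition~\ref{app:CMspace} but not stated as such.
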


\begin{remark}
Unless $\overbar{H} = E$, where the closure is taken in $E$, the measure $\gamma_{M}$ will be degenerate (see Remark~\ref{app:degenerate}).
The identity $\overbar{H} = E$ holds in some special cases, e.g. if $\mu = \lambda$ and $\pi \equiv I_{d}$, where the proof builds on the fact that continuous functions can be uniformly approximated by piecewise linear functions.
\end{remark}

For the purpose of the next result, we introduce the function $I \from E \to \Rbarplus$,
\begin{equation}\label{eq:ratefunc}
    I(g)=
    \begin{cases}
        \frac{1}{2}\int_{T}f_{g}^{\top}(s)\pi(s)f_{g}(s)\, \mu(\mathrm{d}s) & \mathrm{for}\, g \in H, \\
        \infty & \mathrm{otherwise}.
    \end{cases}
\end{equation}

\begin{example}
In the context of Example~\ref{ex:MVHeston}, the function $\pi$ takes the form $\pi(t) = \Sigma(t)\Sigma^{\top}(t)$.
On the other hand, in the context of Example~\ref{ex:pathological}, the measure $\mu$ could be the Lebesgue--Stieltjes measure that is induced by Cantor's ternary function and thus singular with respect to $\lambda$.
The setting typically discussed in the literature, where $M$ is a standard Brownian motion, does not encompass either of these examples.
\end{example}

In Subsection~\ref{subsec:UATGR} we will discuss an importance sampling method that uses methods from the theory of large deviations.
To this end, one needs to understand the asymptotic behavior of the scaled process $\sqrt{\varepsilon}M$ as $\varepsilon \searrow 0$.
If $M$ is a Brownian motion, then the corresponding result is referred to as Schilder's theorem (cf. \cite[Corollary~4.9.3]{MR1642391}, \cite[Theorem~8.3]{MR3024389} and \cite[Theorem~8.4.1]{MR2760872}).
As a consequence of Proposition~\ref{prop:CMspace6} and Proposition~\ref{app:LDP}, we obtain the following result, whose novelty is the explicit characterization of the function $I$ (also referred to as rate function) in \eqref{eq:ratefunc} at the presented level of generality.

\begin{proposition}
In the context of Proposition~\ref{prop:CMspace6} we have, for each $F \in \B_{E}$,
\begin{equation*}
    -\inf_{g \in F^{\circ}}I(g) \le \liminf_{\varepsilon \searrow 0}\varepsilon\log\PP\bigl(\sqrt{\varepsilon}M \in F\bigr) \le \limsup_{\varepsilon \searrow 0}\varepsilon\log\PP\bigl(\sqrt{\varepsilon}M \in F\bigr) \le -\inf_{g \in \overbar{F}}I(g)
\end{equation*}
where the function\/ $I \from E \to \Rbarplus$ is specified in \eqref{eq:ratefunc}.
\end{proposition}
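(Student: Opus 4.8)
The plan is to deduce this large deviation principle (LDP) for the family $(\sqrt{\varepsilon}M)_{\varepsilon > 0}$ on $E = C_{0}(T;\re^{d})$ directly from the abstract theory of Gaussian measures. By Proposition \ref{prop:CMspace6}, $\gamma_{M} = \mathcal{L}(M)$ is a centered Gaussian measure on the separable Banach space $E$, and its Cameron--Martin space is $(H, \langle\cdot,\cdot\rangle_{H})$. The general result referenced as Proposition \ref{app:LDP} in the appendix is the abstract Schilder-type theorem: for any centered Gaussian measure $\gamma$ on a separable Banach space $E$ with Cameron--Martin space $(H_{\gamma}, |\cdot|_{H_{\gamma}})$, the laws of $\sqrt{\varepsilon}\,\xi$, where $\xi \sim \gamma$, satisfy an LDP with good rate function equal to $\tfrac{1}{2}|\cdot|_{H_{\gamma}}^{2}$ on $H_{\gamma}$ and $+\infty$ off $H_{\gamma}$. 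Applying this with $\gamma = \gamma_{M}$ immediately gives the two-sided estimate in the statement, \emph{provided} one identifies the abstract rate function $g \mapsto \tfrac{1}{2}|g|_{H}^{2}$ with the explicit expression $I(g)$ in \eqref{eq:ratefunc}.

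The substantive step, then, is precisely that identification of the rate function, and this is where Proposition \ref{prop:CMspace} does all the work. For $g \in H$, Definition \ref{def:CMspace} gives a representative $f_{g} \in \Lambda^{2}$ with $g = J(f_{g})$, and by Proposition \ref{prop:CMspace3} the map $J \from \Lambda^{2} \to H$ is a linear isometry; hence $|g|_{H}^{2} = \|f_{g}\|_{\Lambda^{2}}^{2} = \int_{T} f_{g}^{\top}(s)\pi(s)f_{g}(s)\,\mu(\mathrm{d}s)$, which is exactly $2 I(g)$. For $g \notin H$ both the abstract rate function and $I(g)$ equal $+\infty$ by definition, so the two functions coincide on all of $E$. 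One should also note that the value $\int_{T} f_{g}^{\top}\pi f_{g}\,\mu$ does not depend on the choice of representative $f_{g}$ (the $\Lambda^{2}$-equivalence classes identify functions with $(f-g)^{\top}\pi(f-g) = 0$ $\mu$-a.e., so the quadratic form is representative-independent), nor on the choice of the pair $(\pi,\mu)$ realizing \eqref{eq:covarid}, by Lemma \ref{lem:L2space5}; this makes $I$ well defined as written.

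Concretely I would organize the proof as follows. First, invoke Proposition \ref{prop:CMspace6} to record that $\gamma_{M}$ is a centered Gaussian measure on the separable Banach space $E$ with Cameron--Martin space $H$. Second, apply the abstract Gaussian LDP (Proposition \ref{app:LDP}) to $\gamma = \gamma_{M}$ and the scaling $\sqrt{\varepsilon}$, obtaining the upper and lower bounds with rate function $\tilde{I}(g) = \tfrac{1}{2}|g|_{H}^{2}$ for $g \in H$ and $\tilde{I}(g) = \infty$ otherwise. Third, show $\tilde{I} = I$ pointwise on $E$: on $H$ this is the isometry identity $|g|_{H}^{2} = \|f_{g}\|_{\Lambda^{2}}^{2}$ from Proposition \ref{prop:CMspace3} together with the defining formula for $\langle\cdot,\cdot\rangle_{H}$; off $H$ both are $\infty$. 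Substituting $I$ for $\tilde{I}$ in the two-sided estimate yields the claim.

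I do not expect a genuine obstacle here: once the abstract Gaussian LDP is available in the appendix, the only thing to verify carefully is the bookkeeping around well-definedness of $I$ — that it is insensitive to the representative $f_{g}$ and to $(\pi,\mu)$ — and the trivial fact that $\tfrac12|g|_H^2$ matches the integral in \eqref{eq:ratefunc}. The mild subtlety worth a sentence is that $g$ ranging over $E$ may have no $\Lambda^{2}$-representative, in which case $g \notin H$ and $I(g) = \infty$ by the second branch of \eqref{eq:ratefunc}, consistent with the abstract rate function; and, conversely, the integral branch is only invoked for $g \in H$, where a representative exists by construction. Thus the proof is essentially an application of a black box plus the isometry already established, and the effort is almost entirely in the (already completed) identification of the Cameron--Martin space in Proposition \ref{prop:CMspace}.
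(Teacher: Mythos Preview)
Your proposal is correct and follows exactly the paper's approach: the paper states this proposition as an immediate consequence of Proposition \ref{prop:CMspace6} and the abstract Gaussian LDP (Proposition \ref{app:LDP}), with the only substantive content being the explicit identification of the rate function via the Cameron--Martin isometry. Your careful remarks on well-definedness of $I$ (independence of the representative $f_{g}$ and of the pair $(\pi,\mu)$) are more detailed than what the paper provides, but entirely in the same spirit.
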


For notational convenience, we introduce the following convention.

\begin{convention}\label{convention3}
Henceforth we denote by $D$ a dense subset of $\Lambda^{2}$.
\end{convention}

\begin{example}
We have already encountered two admissible candidates for the set $D$: $\Lambda^{2,0}$ and $C(T; \re^{d})$, see Lemma~\ref{lem:L2space3} and~\ref{lem:L2space4}.
\end{example}

The following theorem helps us to identify dense subsets and subspaces of $H$, and shows how these relate to the topological support (see Remark~\ref{app:degenerate} in Appendix~\ref{ap:GM}) of the measure $\gamma_{M}$ which we encountered in Proposition~\ref{prop:CMspace}.

\begin{theorem}\label{thm:CMapprox}
Let\/ $H(D)$ denote the set of all\/ $h \in H$ with\/ $f_{h} \in D$.
Then:

\begin{enumerate}[ref={\thetheorem(\alph*)}, label=(\alph*)]
\setlength\itemsep{0.1em}
\item\label{thm:CMapprox1}
$H(D)$ is a dense subset of\/ $H$ which is separable when endowed with the subspace topology;

\item\label{thm:CMapprox2}
If\/ $D$ is also a linear subspace of\/ $\Lambda^{2}$, then:

\begin{enumerate}[label=(\arabic*), ref=\thetheorem\labelenumi]
\setlength\itemsep{0.1em}
\item\label{thm:CMapprox2a}
$(H(D), \langle\cdot,\cdot\rangle_{H})$ is an inner product space, whose completion is\/ $H$;
\item\label{thm:CMapprox2b}
There exists a countable orthonormal basis of\/ $H$ which consist of elements from\/ $H(D)$;
\end{enumerate}

\item\label{thm:CMapprox3}
In the context of Proposition~\ref{prop:CMspace6}, the topological support of\/ $\gamma_{M}$ coincides with\/ $\overbar{H(D)}$, where the closure is taken in\/ $E$.
In other words,
\begin{equation*}
\gamma_{M}\bigl( C_{0}(T; \re^{d}) \setminus \overbar{H(D)} \bigr) = \mathbb{P}\bigl( M \in C_{0}(T; \re^{d}) \setminus \overbar{H(D)} \bigr) = 0,
\end{equation*}
hence outside a\/ $\PP$-null set, paths of\/ $M$ can be uniformly approximated by sequences from\/ $H(D)$.
\end{enumerate}
\end{theorem}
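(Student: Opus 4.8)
The plan is to dispose of \ref{thm:CMapprox1} and \ref{thm:CMapprox2} by elementary Hilbert-space arguments built on the fact that $J$ is unitary, and to reduce \ref{thm:CMapprox3} to the known description of the topological support of a Gaussian measure. First I would record that, by Definition \ref{def:CMspace}, $H = J(\Lambda^{2})$, so by Proposition \ref{prop:CMspace3} the map $J \from \Lambda^{2} \to H$ is a bijective linear isometry, hence a unitary operator and in particular a homeomorphism. Since $D$ is dense in $\Lambda^{2}$ and $H(D) = J(D)$, density of $H(D)$ in $H$ is immediate; and since $\Lambda^{2}$ is separable by \ref{lem:L2space1}, its subspace $D$ is separable, hence so is its homeomorphic image $H(D)$ in the subspace topology, which gives \ref{thm:CMapprox1}. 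When $D$ is in addition a linear subspace of $\Lambda^{2}$, linearity of $J$ makes $H(D) = J(D)$ a linear subspace of the Hilbert space $H$, which is complete by \ref{prop:CMspace2}; being dense in $H$ it is then an inner product space with completion $H$, which is \ref{thm:CMapprox2a}. For \ref{thm:CMapprox2b} I would start from a countable subset of $D$ that is dense in $\Lambda^{2}$ (available because $D$ is dense in the separable space $\Lambda^{2}$), discard those vectors lying in the linear span of the earlier ones, and apply the Gram--Schmidt procedure; since $D$ is a linear subspace, every vector produced is a finite linear combination of elements of $D$ and thus lies in $D$, so one obtains an orthonormal sequence in $D$ whose closed linear span is all of $\Lambda^{2}$, i.e.\ an orthonormal basis of $\Lambda^{2}$ contained in $D$, and pushing it forward by the unitary $J$ yields an orthonormal basis of $H$ whose members all lie in $J(D) = H(D)$.

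For \ref{thm:CMapprox3}, the first step is to prove $\overbar{H(D)} = \overbar{H}$, with closures taken in $E$. The inclusion $\overbar{H(D)} \subseteq \overbar{H}$ is trivial from $H(D) \subseteq H$. For the reverse I would use that the Cameron--Martin space $H$ of $\gamma_{M}$ embeds continuously into $E$ — a general feature of Cameron--Martin spaces recorded in Appendix \ref{ap:GM}, already illustrated in Example \ref{ex:CWS} — so that $\|\cdot\|_{H}$-convergence forces $\|\cdot\|_{\infty}$-convergence; since $H(D)$ is $\|\cdot\|_{H}$-dense in $H$ by \ref{thm:CMapprox1}, every $h \in H$ is a uniform limit of elements of $H(D)$, whence $H \subseteq \overbar{H(D)}$ and therefore $\overbar{H} \subseteq \overbar{H(D)}$. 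The second step is to invoke the identification, for a centered Gaussian measure on $E$, of its topological support with the $E$-closure of its Cameron--Martin space (Appendix \ref{ap:GM}); together with \ref{prop:CMspace6} this gives $\operatorname{supp}(\gamma_{M}) = \overbar{H} = \overbar{H(D)}$. The displayed equalities then follow: $\gamma_{M}(E \setminus \overbar{H(D)}) = \gamma_{M}(E \setminus \operatorname{supp}(\gamma_{M})) = 0$ by the defining property of the support, and since $\gamma_{M}$ is the law of $M$ on $E$ this rewrites as $\PP(M \in E \setminus \overbar{H(D)}) = 0$. The final assertion is then immediate, because the topology of $E = C_{0}(T;\re^{d})$ is that of uniform convergence, so a point of $\overbar{H(D)}$ is by definition a uniform limit of a sequence drawn from $H(D)$.

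I expect the genuinely substantial content to lie entirely on the Gaussian-measure side: the functional analysis in \ref{thm:CMapprox1}, \ref{thm:CMapprox2} and the reduction $\overbar{H(D)} = \overbar{H}$ is routine once $J$ is known to be unitary, whereas the continuous embedding $H \hookrightarrow E$ and, above all, the support characterization $\operatorname{supp}(\gamma_{M}) = \overbar{H}$ must be set up carefully in Appendix \ref{ap:GM} for the present, possibly degenerate, centered Gaussian measure induced by a general continuous local martingale rather than by a Brownian motion.
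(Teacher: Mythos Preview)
Your proposal is correct and follows essentially the same approach as the paper: density and separability in \ref{thm:CMapprox1} via the isometry $J$, Gram--Schmidt applied to a countable dense subset for \ref{thm:CMapprox2b}, and for \ref{thm:CMapprox3} the combination of the continuous embedding $H \hookrightarrow E$ with the standard support characterization $\operatorname{supp}(\gamma_{M}) = \overbar{H}$ from Appendix \ref{ap:GM}. The only cosmetic difference is that you run Gram--Schmidt on the $\Lambda^{2}$ side and push forward by $J$, whereas the paper runs it directly in $H(D)$; since $J$ is unitary these are equivalent.
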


In Section~\ref{sec:ffn}, we will discuss classes of feedforward neural networks that are also dense subsets of $\Lambda^{2}$, thereby satisfying Convention~\ref{convention3}.
Together with Theorem~\ref{thm:CMapprox}, this will show that we can approximate any element from $H$, up to the isometry $J$, by feedforward neural networks, which will be essential for Sections~\ref{sec:IS} and~\ref{sec:simulation}, where we will approximate the drift adjustment of $M$ which minimizes the variance of the Monte Carlo estimator with feedforward neural networks.

\section{Approximation capabilities of feedforward neural networks}\label{sec:ffn}

In this section, we study feedforward neural networks as elements of the space $\Lambda^{2}$ and show how they generate, under suitable assumptions on the activation function, dense subspaces of $H$, thereby providing a first theoretical justification for approximating the optimal drift adjustment with feedforward neural networks when studying importance sampling in Sections~\ref{sec:IS} and~\ref{sec:simulation} below.

We know from Lemma~\ref{lem:L2space4} that $C(T; \re^{d})$ is continuously embedded into $\Lambda^{2,0}$ as a dense linear subspace.
Moreover, Lemma~\ref{lem:L2space3b} shows that $\Lambda^{2,0}$ is dense in $\Lambda^{2}$.
Consequently, every dense linear subspace $D$ of $C(T; \re^{d})$ is densely embedded into $\Lambda^{2}$, thereby satisfying Convention~\ref{convention3}, in which case $H(D)$ is dense in $H$ by Theorem~\ref{thm:CMapprox1}.
For this reason, we first focus our attention on finding dense linear subspaces of $C(T; \re^{d})$ in Proposition~\ref{prop:NNdense}, before relaxing the continuity assumption in Proposition~\ref{prop:NNdense2} below.

Neural networks are one particular class of functions which is of interest to us.
On the one hand it satisfies the required density in $C(T; \re^{d})$, and on the other hand it gives rise to efficient numerical optimization procedures which have lead to fascinating results in the domain of financial and actuarial mathematics in recent years.
When studying neural networks, the property of being dense in a topological space is referred to as the universal approximation property (UAP; cf. \cite[Definition~2]{Kra2021}).
Theorems which establish the density of neural networks in a topological space are referred to as universal approximation theorems (UAT).

There are many different neural network architectures for which the universal approximation property has been shown to hold in various topological spaces (cf. \cite{MR1015670,F1989,H1991,HSW1989,KL2020,LLPS1993,LFN2003,MM1992,PS1991,PS1993,P1999,ZWML1995}).
For ease of presentation, we will discuss the class of feedforward neural networks, also called multilayer perceptrons or multilayer feedforward neural networks.
Note that our discussion is limited to architectures that yield universal approximation theorems in $C(T; \re^{d})$ and $\Lambda^{2}$ (see also Subsection~\ref{subsec:UATHoelder} for a UAT w.r.t. Hölder norms).
There are many architectures for which the UAP has been established in other topological spaces, but which we will not discuss in this paper.

\begin{definition}[cf. {\cite[Definition~3.1]{KL2020}}]\label{def:fnn}
Given\/ $k \in \na$,\/ $l \in \na$ and\/  $\psi \from \re \to \re$, we denote by\/ $\mathcal{NN}_{k,l}^{d}(\psi)$ the set of feedforward neural networks with one neuron in the input layer,\/ $d$ neurons with identity activation function in the output layer,\/ $k$ hidden layers, and at most\/ $l$ hidden nodes with $\psi$ as activation function in each hidden layer.
\end{definition}

\begin{remark}
Elements from $\mathcal{NN}_{k,l}^{d}(\psi)$ can be represented as follows.
Consider affine functions $W_{1} \from \re \to \re^{l}$, $W_{k+1} \from \re^{l} \to \re^{d}$ and $W_{2}, \hdots, W_{k} \from \re^{l} \to \re^{l}$.
For $i=1, 2, \hdots, k$, denote $F_{i} = \psi \circ W_{i}$, where the activation function $\psi$ is applied componentwise.
Then, an element from $\mathcal{NN}_{k,l}^{d}(\psi)$ is given by the map $t \mapsto W_{k+1} \circ F_{k} \circ \cdots \circ F_{1}(t)$.
\end{remark}

If the number of nodes in the hidden layers can be arbitrarily large, we write $\mathcal{NN}_{k, \infty}^{d}(\psi)$.
Likewise, we write $\mathcal{NN}_{\infty, l}^{d}(\psi)$ if the number of hidden layers can be arbitrarily large.
Finally, the notation $\mathcal{NN}_{\infty, \infty}^{d}(\psi) = \bigcup_{k \in \na} \mathcal{NN}_{k, \infty}^{d}(\psi)$ is to be understood in an analogous way.
For the purpose of Example~\ref{ex:simple} below, let us also introduce the following notation: If $A$ is a finite set of functions $f \from \re \to \re$, then $\mathcal{NN}_{k,l}^{d}(A)$ denotes the set of feedforward neural networks, where the hidden nodes are endowed with either of the functions from $A$.
As a special case, we then have $\mathcal{NN}_{k,l}^{d}(A) = \mathcal{NN}_{k,l}^{d}(\psi)$ for $A = \{\psi\}$.

Functions in $\mathcal{NN}_{1, \infty}^{d}(\psi)$ are called shallow feedforward neural networks, while functions in $\mathcal{NN}_{\infty, \infty}^{d}(\psi)$ are generally referred to as deep feedforward neural networks.
The set $\mathcal{NN}_{\infty, l}^{d}(\psi) \subset \mathcal{NN}_{\infty, \infty}^{d}(\psi)$ of deep narrow networks, where $l \in \na$ is fixed, is also of special interest (cf. \cite{KL2020}).

In the context of Definition~\ref{def:fnn}, the function $\psi$ is sometimes also called squashing function, sigmoid function or ridge activation function.
Different terms have been chosen based on the properties of $\psi$, which in general differ based on which topological spaces we are studying the UAP in.
For the sake of simplicity, we call $\psi$ an activation function throughout this paper, and impose necessary properties on $\psi$ wherever needed.
In light of Lemma~\ref{lem:L2space4}, we are particularly interested in the UAP in $C(T; \re^{d})$.
At this point however, we need to discuss a technicality first.

For a given Borel measure $\nu$ on $T$ and  $f,g \in C(T; \re^{d})$, we write $f \sim_{\nu} g$ if $f = g$ outside a $\nu$-null set.
Note that $\sim_{\nu}$ is a binary relation on $C(T; \re^{d})$ which is reflexive, symmetric and transitive, hence $\sim_{\nu}$ is an equivalence relation.
We can thus consider the quotient space $C_{\nu}(T; \re^{d})$ of $C(T; \re^{d})$ under $\sim_{\nu}$, on which the $\nu$-essential supremum $\|\4\cdot\4\|_{L^{\infty}(T,\nu)}$ is a norm, making $(C_{\nu}(T; \re^{d}), \|\4\cdot\4\|_{L^{\infty}(T,\nu)})$ a normed vector space.
A modification of Lemma~\ref{lem:L2space4} shows that $C_{\nu}(T; \re^{d})$ is continuously embedded into $\Lambda^{2,0}$ as a dense linear subspace, provided that $\mu$ is absolutely continuous w.r.t. $\nu$.

Let us collect classical versions of the universal approximation theorem which are concerned with the (almost everywhere) uniform approximation of continuous functions, as we will be referring to them in the proofs of the subsequent results.

\begin{theorem}[{cf. \cite{H1991,KL2020,LLPS1993}}]\label{thm:UAT}
Given\/ $\psi \from \re \to \re$, consider the assumptions:
\begin{enumerate}[ref={\thetheorem(\arabic*)}, label=(\arabic*)]
\setlength\itemsep{0.1em}
\item\label{thm:UATAs1}
$\psi$ is continuous, bounded and non-constant;
\item\label{thm:UATAs2}
$\psi$ is continuous and nonaffine, and there exists a point\/ $x \in \re$ at which\/ $\psi$ is continuously differentiable with\/ $\psi'(x) \neq 0$;
\item\label{thm:UATAs3}
$\psi$ is locally\/ $\lambda$-essentially bounded.
Moreover,\/ $\psi$ is\/ $\lambda$-a.e. not an algebraic polynomial, and the set of points of discontinuity of\/ $\psi$ is a\/ $\lambda$-null set.
\end{enumerate}

Then:
\begin{enumerate}[ref={\thetheorem(\alph*)}, label=(\alph*)]
\setlength\itemsep{0.1em}
\item\label{thm:UAT1}
If \ref{thm:UATAs1} holds, then\/ $\mathcal{NN}_{1, \infty}^{d}(\psi)$ is dense in $C(T; \re^{d})$;
\item\label{thm:UAT2}
If \ref{thm:UATAs2} holds, then\/ $\mathcal{NN}_{\infty, d+3}^{d}(\psi)$ is dense in $C(T; \re^{d})$;
\item\label{thm:UAT3}
If \ref{thm:UATAs3} holds, then\/ $\mathcal{NN}_{1, \infty}^{d}(\psi)$ is dense in $C_{\lambda}(T; \re^{d})$.
\end{enumerate}
\end{theorem}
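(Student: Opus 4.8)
\emph{Strategy.} All three assertions are, at bottom, restatements in the present vector-valued setting over the compact interval $T$ of known universal approximation theorems, so the plan is to reduce each part to the cited result and then reconcile the (minor) discrepancies in domain, codomain and topology. The one recurring device is the following \emph{parallelisation} of scalar networks: if $\nu_{i}(t) = \sum_{j} c_{i,j}\,\psi(a_{i,j}t + b_{i,j})$ are scalar networks for $i = 1, \dots, d$, then merging all their hidden units into a single hidden layer and attaching the output weight $c_{i,j}\,e_{i} \in \re^{d}$ (with $e_{i}$ the $i$-th standard basis vector of $\re^{d}$) to the unit inherited from $\nu_{i}$ yields a single $\re^{d}$-valued network in $\mathcal{NN}_{1,\infty}^{d}(\psi)$ whose $k$-th coordinate equals $\nu_{k}$; in particular $\|f - \nu\|_{\infty} \le \sqrt{d}\,\max_{i}\|f_{i} - \nu_{i}\|_{\infty}$ for $f = (f_{1}, \dots, f_{d})$. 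Thus for \ref{thm:UAT1} I would fix $f \in C(T;\re^{d})$ and $\varepsilon > 0$, apply Hornik's universal approximation theorem \cite{H1991} — valid since $\psi$ is continuous, bounded and non-constant — on the compact set $T$ to each coordinate $f_{i}$ to obtain a scalar shallow $\psi$-network $\nu_{i}$ with $\|f_{i} - \nu_{i}\|_{\infty} < \varepsilon$, and then parallelise to get $\nu \in \mathcal{NN}_{1,\infty}^{d}(\psi)$ with $\|f - \nu\|_{\infty} < \sqrt{d}\,\varepsilon$; letting $\varepsilon \searrow 0$ finishes it.

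For \ref{thm:UAT2} I would invoke the deep narrow universal approximation theorem of Kidger and Lyons \cite{KL2020} directly: assumption \ref{thm:UATAs2} is exactly their hypothesis on the activation, and for input dimension $n = 1$ and output dimension $m = d$ their width bound $n + m + 2$ equals $d + 3$, so arbitrarily deep $\psi$-networks of width at most $d+3$ are dense in $C(K;\re^{d})$ for every compact $K \subset \re$, in particular for $K = T$. No parallelisation is needed here, as the cited theorem already treats $\re^{d}$-valued targets — which is precisely why the required width grows with $d$.

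For \ref{thm:UAT3}, assumption \ref{thm:UATAs3} places $\psi$ — after, if necessary, replacing it by a $\lambda$-a.e.\ modification satisfying the literal hypotheses, which leaves all relevant equivalence classes unchanged — in the class of admissible activations of Leshno, Lin, Pinkus and Schocken \cite{LLPS1993}; their theorem yields approximation of continuous functions (in the essential-supremum sense) by scalar shallow $\psi$-networks over the compact $T$, and parallelising as above delivers the $\re^{d}$-valued version in $\mathcal{NN}_{1,\infty}^{d}(\psi)$. The point that needs unpacking is the target topology: a $\psi$-network need not be continuous, so ``density in $C_{\lambda}(T;\re^{d})$'' must be read as the assertion that every (class of a) continuous function is a $\|\4\cdot\4\|_{L^{\infty}(T,\lambda)}$-limit of such networks, which themselves lie in the ambient space (e.g.\ in $\Lambda^{2,0}$). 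Since $\lambda$ has full support on $T = [\4 0, u]$, the relation $\sim_{\lambda}$ is trivial on $C(T;\re^{d})$ and $\|\4\cdot\4\|_{L^{\infty}(T,\lambda)}$ agrees there with $\|\4\cdot\4\|_{\infty}$, so what is claimed is simply uniform approximability of continuous functions by $\psi$-networks, exactly as in \cite{LLPS1993}.

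The only genuine difficulty is that there is essentially none: the mathematical content sits entirely in the three cited theorems, and the work is bookkeeping — matching the width count $n + m + 2 = d + 3$ in \ref{thm:UAT2}; passing from the $\re^{n}$-domain, scalar-codomain statements of the classical results to the interval $T$ and to $\re^{d}$-valued outputs (handled uniformly by restriction and by parallelisation); and, in \ref{thm:UAT3}, pinning down the precise sense in which discontinuous networks are ``dense in $C_{\lambda}(T;\re^{d})$'', together with the mild strengthening of the activation hypotheses relative to the literal statements in the literature.
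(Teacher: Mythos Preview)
Your proposal is correct. Note that the paper does not supply its own proof of this theorem: it is stated with the attribution ``cf.\ \cite{H1991,KL2020,LLPS1993}'' and is used as a black box (see the proof of Proposition~\ref{prop:NNdense}, which simply invokes Theorem~\ref{thm:UAT}). Your reduction --- coordinatewise parallelisation of scalar shallow networks for \ref{thm:UAT1} and \ref{thm:UAT3}, direct application of the Kidger--Lyons width bound $n+m+2 = 1+d+2 = d+3$ for \ref{thm:UAT2}, and the observation that $\|\cdot\|_{L^{\infty}(T,\lambda)}$ coincides with $\|\cdot\|_{\infty}$ on continuous functions since $\lambda$ has full support --- is exactly the bookkeeping the paper leaves implicit in its citation, so there is nothing to compare against.
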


The following two Propositions~\ref{prop:NNdense} and~\ref{prop:NNdense2} yield dense subsets of $\Lambda^{2}$ which consist of feedforward neural networks.
We can therefore consider these sets as admissible for the set $D$ in the context of Convention~\ref{convention3}.
Consequently, due to Theorem~\ref{thm:CMapprox} and under suitable assumptions on $\psi$, feedforward neural networks are, up to the isometry $J$, dense in $H$.

Proposition~\ref{prop:NNdense} below is a consequence of Theorem~\ref{thm:UAT} and Lemma~\ref{lem:L2space4}.
For simplicity, we only formulate it for $\mathcal{NN}_{1, \infty}^{d}(\psi)$, where the case for $\mathcal{NN}_{\infty, d+3}^{d}(\psi)$ can be argued analogously.
Since $\mathcal{NN}_{1, \infty}^{d}(\psi)$ is a subset of $\mathcal{NN}_{k, \infty}^{d}(\psi)$ for every $k \in \na$, Propositions~\ref{prop:NNdense} and~\ref{prop:NNdense2} below do hold for $\mathcal{NN}_{k, \infty}^{d}(\psi)$,\/ $k \in \na$, too.

\begin{proposition}\label{prop:NNdense}
In the context of Theorem~\ref{thm:UAT}, assume either that Condition~\ref{thm:UATAs1} holds, or that Condition~\ref{thm:UATAs3} holds and\/ $\mu$ is absolutely continuous w.r.t.\/ $\lambda$.
Then\/ $\mathcal{NN}_{1, \infty}^{d}(\psi)$ is a dense linear subspace of\/ $\Lambda^{2,0}$.
\end{proposition}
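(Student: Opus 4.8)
The plan is to show that $\mathcal{NN}_{1,\infty}^{d}(\psi)$ is a linear subspace of $C(T;\re^{d})$ (resp.\ $C_{\lambda}(T;\re^{d})$) which is dense in that space, and then transport this density into $\Lambda^{2,0}$ via the continuous dense embedding from Lemma \ref{lem:L2space4} (resp.\ its modification recorded just before Theorem \ref{thm:UAT}). First I would verify that $\mathcal{NN}_{1,\infty}^{d}(\psi)$ is closed under addition and scalar multiplication: a shallow network with one input neuron and $d$ linear output neurons has the form $x \mapsto \sum_{j=1}^{N} c_{j}\psi(a_{j}x+b_{j})$ with $c_{j}\in\re^{d}$, $a_{j},b_{j}\in\re$, and a (possibly present) affine term $c_{0}+c'x$; summing two such expressions and concatenating the hidden nodes again yields an element of $\mathcal{NN}_{1,\infty}^{d}(\psi)$ since the number of hidden nodes is unbounded, and scalar multiplication simply rescales the outer weights $c_{j}$. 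Hence $\mathcal{NN}_{1,\infty}^{d}(\psi)$ is a genuine linear subspace (one should check that the chosen definition of $\mathcal{NN}_{k,l}^{d}$ does admit the affine skip term, or else absorb it, so that the space is closed under the operations; this is a bookkeeping point about Definition \ref{def:fnn}).

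Next I would split into the two cases. If Condition \ref{thm:UATAs1} holds, then by Theorem \ref{thm:UAT1} the space $\mathcal{NN}_{1,\infty}^{d}(\psi)$ is dense in $(C(T;\re^{d}),\|\cdot\|_{\infty})$; since by Lemma \ref{lem:L2space4} the space $(C(T;\re^{d}),\|\cdot\|_{\infty})$ is continuously embedded into $(\Lambda^{2,0},\|\cdot\|_{\Lambda^{2}})$ as a dense linear subspace, the composition of a dense embedding with a dense subspace is dense: a continuous linear map with dense range sends a dense subset of the domain to a dense subset of the range, so $\mathcal{NN}_{1,\infty}^{d}(\psi)$ is dense in $\Lambda^{2,0}$, and it is a linear subspace there because linearity is preserved. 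If instead Condition \ref{thm:UATAs3} holds and $\mu\ll\lambda$, then by Theorem \ref{thm:UAT3} the space $\mathcal{NN}_{1,\infty}^{d}(\psi)$ is dense in $(C_{\lambda}(T;\re^{d}),\|\cdot\|_{L^{\infty}(T,\lambda)})$, and by the modification of Lemma \ref{lem:L2space4} noted in the text, $C_{\lambda}(T;\re^{d})$ is continuously and densely embedded into $\Lambda^{2,0}$ precisely because $\mu\ll\lambda=\nu$; the same transitivity-of-density argument then applies.

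The only subtlety, which I would flag as the main obstacle, is the passage to the quotient $C_{\lambda}(T;\re^{d})$ in the second case: one must confirm that the embedding $\Lambda^{2,0}$-valued map is well defined on $\sim_{\lambda}$-equivalence classes, i.e.\ that $f\sim_{\lambda}g$ implies $f\sim g$ in $\Lambda^{2,0}$, which holds because $\mu\ll\lambda$ forces $(f-g)^{\top}\pi(f-g)=0$ $\mu$-a.e.\ whenever $f=g$ $\lambda$-a.e.; and conversely one needs that the image of $\mathcal{NN}_{1,\infty}^{d}(\psi)$, which a priori consists of honestly continuous functions, still forms a dense \emph{linear subspace} of $\Lambda^{2,0}$ rather than merely a dense subset — but this is immediate since the embedding is linear. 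I expect the remaining details (elementary closure of network classes under linear combinations, and invoking transitivity of density under continuous dense-range linear maps) to be entirely routine.
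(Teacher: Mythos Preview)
Your proposal is correct and follows essentially the same strategy as the paper: both argue that $\mathcal{NN}_{1,\infty}^{d}(\psi)$ sits inside $C(T;\re^{d})$ (respectively $C_{\lambda}(T;\re^{d})$), is dense there by Theorem~\ref{thm:UAT}, and then density is pushed forward into $\Lambda^{2,0}$ via the continuous dense embedding of Lemma~\ref{lem:L2space4}. The paper carries out the last step by an explicit $\varepsilon/2$-argument using the bound $\eta^{\top}\pi\eta \le |\eta|^{2}\trace(\pi)$, whereas you invoke the general principle that a continuous linear map with dense range preserves density of subsets; these are the same argument in different clothing.
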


By looking at the proof of Proposition~\ref{prop:NNdense} (which is presented in Appendix~\ref{app:technical}) it becomes clear that we cannot impose Assumption~\ref{thm:UATAs3} in case that $\mu$ is not absolutely continuous w.r.t. $\lambda$.
This is relevant in particular for Example~\ref{ex:pathological}, where we would need to impose either Assumption~\ref{thm:UATAs1} or Assumption~\ref{thm:UATAs2}.
Moreover, assuming $\psi$ to be ($\lambda$-a.e.) continuous is also rather restrictive, given that functions in $\Lambda^{2}$ need not be continuous.
By arguing along the lines of \cite{MR1015670,H1991}, we can actually drop the continuity assumption on $\psi$, at the cost of requiring boundedness, which is not required in Assumptions~\ref{thm:UATAs2} and~\ref{thm:UATAs3}.

\begin{proposition}\label{prop:NNdense2}
If\/ $\psi$ is bounded, measurable and nonconstant, then\/ $\mathcal{NN}_{1, \infty}^{d}(\psi)$ is a dense linear subspace of\/ $\Lambda^{2,0}$.
\end{proposition}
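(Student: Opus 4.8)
The plan is to reduce the statement, via Lemma~\ref{lem:L2space4}, to the one-dimensional fact that the linear span of the ridge functions $t\mapsto\psi(wt+b)$ is dense in $L^{2}(\mu)$, and then to prove the latter by a moment argument in the spirit of \cite{MR1015670,H1991}. To begin with, every $g\in\mathcal{NN}_{1,\infty}^{d}(\psi)$ has the form $g(t)=a_{0}+\sum_{j=1}^{N}a_{j}\psi(w_{j}t+b_{j})$ with $N\in\na$, $a_{0},a_{j}\in\re^{d}$ and $w_{j},b_{j}\in\re$; as $\psi$ is bounded and measurable, $g$ is bounded and Borel, and since each $\mu_{i,i}$ is finite, $g\in\Lambda^{2,0}$. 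The displayed form also shows that $\mathcal{NN}_{1,\infty}^{d}(\psi)$ is a linear subspace. (Here nonconstancy of $\psi$ must be read modulo Lebesgue-null sets, as otherwise the assertion already fails for $\psi=\indicatorset{\{0\}}$, since $\mu$ is atomless.)

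For density, recall that $\Lambda^{2,0}$ carries the subspace topology of $\Lambda^{2}$ and is dense in it by Lemma~\ref{lem:L2space3b}, while $C(T;\re^{d})$ is dense in $\Lambda^{2,0}$ by Lemma~\ref{lem:L2space4}; hence $C(T;\re^{d})$ is $\|\cdot\|_{\Lambda^{2}}$-dense in $\Lambda^{2}$, and it suffices to approximate an arbitrary $f\in C(T;\re^{d})$ in $\|\cdot\|_{\Lambda^{2}}$ by elements of $\mathcal{NN}_{1,\infty}^{d}(\psi)$. The key estimate is that, for bounded Borel $v\from T\to\re^{d}$,
\begin{align*}
\|v\|_{\Lambda^{2}}^{2}=\int_{T}v^{\top}\pi v\,\mathrm{d}\mu\le\int_{T}\trace(\pi)\,|v|^{2}\,\mathrm{d}\mu=\sum_{i=1}^{d}\int_{T}v_{i}^{2}\,\mathrm{d}\mu ,
\end{align*}
where the inequality uses that $\pi$ is symmetric and positive semi-definite $\mu$-a.e.\ (Remark~\ref{rem:covariation}), so that its largest eigenvalue is $\le\trace(\pi)$, and the last equality uses $\trace(\pi)=1$ $\mu$-a.e.; the latter holds because, by \eqref{eq:covarid} with $C\coloneqq\trace([M])$, the measure $\trace(\pi)\cdot\mu$ is the Lebesgue--Stieltjes measure of $\trace([M])=C$, namely $\mu$. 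Consequently it is enough to approximate each component $f_{i}\in C(T)\subset L^{2}(\mu)$ by a scalar network $t\mapsto\sum_{j}c_{j}\psi(w_{j}t+b_{j})$ in $L^{2}(\mu)$; pooling these $d$ scalar networks into a single hidden layer then produces the desired element of $\mathcal{NN}_{1,\infty}^{d}(\psi)$.

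It remains to prove that $S\coloneqq\operatorname{span}\{t\mapsto\psi(wt+b):w,b\in\re\}$ is dense in the Hilbert space $L^{2}(\mu)$, i.e.\ $S^{\perp}=\{0\}$. So let $G\in L^{2}(\mu)$ satisfy $\int_{T}G(t)\psi(wt+b)\,\mu(\mathrm{d}t)=0$ for all $w,b\in\re$, and set $\nu\coloneqq G\mu$, a finite signed Borel measure on the compact interval $T$. Convolving $\psi$ with a standard mollifier $\rho_{\delta}$, the function $\psi_{\delta}\coloneqq\psi*\rho_{\delta}$ is $C^{\infty}$ with all derivatives bounded, still satisfies $\int_{T}\psi_{\delta}(wt+b)\,\nu(\mathrm{d}t)=0$ for all $w,b$ (by Fubini, using the hypothesis with $b-y$ in place of $b$), and is nonconstant for all sufficiently small $\delta$, since $\psi_{\delta}\to\psi$ in $L^{1}_{\mathrm{loc}}$ and $\psi$ is not a.e.\ constant. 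Replacing $\psi$ by such a $\psi_{\delta}$ and differentiating under the integral sign --- which is licit because the derivatives of $\psi$ are bounded and $\nu$ is finite --- shows that $w\mapsto\int_{T}\psi(wt+b)\,\nu(\mathrm{d}t)$ is a smooth function of $w$ that vanishes identically, so all of its $w$-derivatives at $w=0$ vanish, that is,
\begin{align*}
\psi^{(m)}(b)\int_{T}t^{m}\,\nu(\mathrm{d}t)=0 ,\qquad m\ge0 ,\ b\in\re .
\end{align*}
A bounded nonconstant $C^{\infty}$ function is not a polynomial, hence $\psi^{(m)}\not\equiv0$ for every $m\ge0$; picking $b$ with $\psi^{(m)}(b)\neq0$ gives $\int_{T}t^{m}\,\nu(\mathrm{d}t)=0$ for all $m\ge0$. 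By the Weierstrass approximation theorem this forces $\int_{T}\phi\,\mathrm{d}\nu=0$ for every $\phi\in C(T)$, whence $\nu=0$ and therefore $G=0$ $\mu$-a.e.

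The main obstacle is this last step: the hypothesis on $\psi$ is minimal, so $\psi$ may be badly discontinuous and cannot be treated directly; it must first be regularized --- here by mollification --- without destroying its nonconstancy, after which the moment/differentiation computation proceeds smoothly. The remaining ingredients --- the embedding $C(T;\re^{d})\hookrightarrow\Lambda^{2}$, the elementary matrix inequality, and the concluding Weierstrass step --- are routine.
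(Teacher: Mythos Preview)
Your proof is correct. Both arguments ultimately reduce to the same core fact --- that no nonzero finite signed Borel measure on $T$ can annihilate every ridge function $t\mapsto\psi(\alpha t+\eta)$ --- but the routes to that reduction, and the treatment of the core fact itself, differ.

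The paper argues by Hahn--Banach directly on the (incomplete) inner product space $\Lambda^{2,0}$: a nonzero annihilating functional is represented via Lemma~\ref{lem:L2space2} by some $g\in\Lambda^{2}$, and the signed measures $\nu_{i}=(\pi g)_{i}\cdot\mu$ are then fed into the cited results of Cybenko and Hornik. You instead exploit the identity $\trace(\pi)=1$ $\mu$-a.e.\ (a consequence of the specific choice $C=\trace([M])$ in Remark~\ref{rem:covariation}) to bound $\|\cdot\|_{\Lambda^{2}}$ by a sum of scalar $L^{2}(\mu)$-norms, which lets you work componentwise in the genuine Hilbert space $L^{2}(\mu)$ and replace Hahn--Banach by the orthogonal-complement characterisation of density. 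This is cleaner: it avoids the quotient $(\Lambda^{2})^{*}/N$ and the Riesz representation in the weighted space.

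The second difference is that the paper \emph{cites} the Cybenko--Hornik lemma, whereas you \emph{prove} it, via mollification and the moment/Weierstrass argument. Your mollification step is the right way to handle the minimal hypothesis on $\psi$; the justification that $\psi_{\delta}$ remains nonconstant for small $\delta$ (via $L^{1}_{\mathrm{loc}}$-convergence) and the differentiation under the integral sign are both sound. Your parenthetical remark that ``nonconstant'' must be read modulo Lebesgue-null sets is well taken: the example $\psi=\indicatorset{\{0\}}$ indeed breaks the statement under a purely pointwise reading, and this nuance is implicit in the Hornik reference the paper invokes.
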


For notational simplicity, let us convene that the activation function $\psi$ satisfies sufficient conditions such that either Proposition~\ref{prop:NNdense} or~\ref{prop:NNdense2} is applicable.

\begin{convention}\label{convention4}
Henceforth we assume that either Condition~\ref{thm:UATAs1},~\ref{thm:UATAs3} or the assumptions from Proposition~\ref{prop:NNdense2} hold, depending on whether $\mu$ is absolutely continuous w.r.t. $\lambda$ or not, and whether we need to require $\psi$ to be ($\lambda$-a.e.) continuous.
\end{convention}

\begin{example}\label{ex:simple}
Fix $d = 1$ as well as $\mu = \lambda$.
Note that in this case $\pi \equiv 1$.
Set $D = \mathcal{NN}_{1, \infty}^{1}(\psi) = \mathrm{span}\bigl\{ T \ni t \mapsto \psi(\alpha t + \eta)\ |\ \alpha, \eta \in \re \bigr\}$, where $\psi = \tanh$.
Since every $f \in D$ is continuous and thus bounded on $T$, we may replace the Lebesgue- by the Riemann integral.

If $\alpha = 0$ and $\eta \in \re$, then $\int_{0}^{t} \psi(\eta)\, \mathrm{d}s = \psi(\eta) t$ for $t \in T$.
On the other hand, if $\alpha \neq 0$ and $\eta \in \re$ then, by substitution,
\begin{equation*}
\int_{0}^{t} \psi(\alpha s + \eta)\, \mathrm{d}s = \frac{1}{\alpha} \bigl( \tilde{\psi}(\alpha t + \eta) - \tilde{\psi}(\eta) \bigr),
\quad
t \in T,
\end{equation*}
where $\tilde{\psi}(\cdot) = \log(\cosh(\cdot))$.
Similarly, if $\psi$ is the standard sigmoid (logistic) function, then the same applies with $\tilde{\psi}(\cdot) = \log(1 + \exp(\cdot))$, which is also called softplus function.
To sum up, we see that
\begin{equation}\label{eq:HDchar}
H(D) = \mathrm{span}\bigl\{ \mathrm{id} \colon T \ni t \mapsto t,\, \mathcal{NN}_{1, \infty}^{1}(\tilde{\psi}) \bigr\} = \mathcal{NN}_{1, \infty}^{1}(\{\mathrm{id}, \tilde{\psi}\}).
\end{equation}
\end{example}

Provided that $\psi$ is Riemann integrable, Example~\ref{ex:simple} shows that, compared to the set $D = \mathcal{NN}_{1, \infty}^{1}(\psi)$, the set $H(D)$ can be obtained by modifying the activation function, and adding the identity function into the set of admissible activation functions.
This can be helpful when optimizing over functions in $H(D)$, because one avoids having to implement integral operations.
What is more, functions in $H(D)$ enjoy the property of being absolutely continuous, provided that $\mu$ is absolutely continuous w.r.t. $\lambda$, while this is not always the case for functions from $\mathcal{NN}_{1, \infty}^{1}(\psi)$, e.g. if $\psi$ is not continuous.

Note that we formulated Propositions~\ref{prop:NNdense} and~\ref{prop:NNdense2} for shallow feedforward neural networks.
However, as already mentioned above, Propositions~\ref{prop:NNdense} and~\ref{prop:NNdense2} do hold for the set of deep neural networks, too.
For a discussion on the topic of depth vs. width, see for example \cite{LPWHW2017,RO2016}.

\subsection{Interlude: Universal approximation in H{\"o}lder norm}\label{subsec:UATHoelder}

In Theorem~\ref{thm:UAT}, we cited classical versions of the universal approximation theorem which are concerned with (almost everywhere) uniform approximation of continuous functions.
Note that this is, in essence, a topological statement, and we may seek for refined approximation results that hold w.r.t. stricter topologies.
Natural candidate topologies w.r.t. which we may seek to derive a universal approximation theorem are H{\"o}lder type topologies.

Given $\alpha \in (0, 1)$, we denote by $E_{\alpha} = C_{0}^{\alpha}(T; \re^{d}) \subset C_{0}(T; \re^{d})$ the vector space of $\re^{d}$-valued, $\alpha$-H{\"o}lder continuous functions on $T$ that are zero at the origin.
The space $E_{\alpha}$ is also referred to as $\alpha$-H{\"o}lder space.
We endow this space with the topology which is induced by the norm
\begin{equation*}
E_{\alpha} \ni f \mapsto \|f\|_{\alpha} \coloneqq \sup_{\substack{s,t \in T \\ 0 < t-s \le 1}} \frac{|f(t) - f(s)|}{(t-s)^{\alpha}}.
\end{equation*}

Kolmogorov--Chentsov's continuity theorem shows that all paths of $\re^{d}$-valued standard Brownian motion are $\alpha$-H{\"o}lder continuous for every $\alpha \in (0, 1/2)$, hence it would be desirable to use $E_{\alpha}$ as the space on which to consider the restriction of the classical Wiener measure (see Example~\ref{ex:CWS} and Definition~\ref{app:GM}).
Although $(E_{\alpha}, \|\4\cdot\4\|_{\alpha})$ is indeed a real Banach space, it does not contain a countable dense subset and is thus not separable (cf. \cite[Exercise~2.116]{Schmock2021}).

A solution to this issue is to pass to the little $\alpha$-H{\"o}lder space, i.e. the subspace $E_{\alpha, 0}$ of all $f \in E_{\alpha}$ that satisfy $|f(t)-f(s)| = o(|t-s|^{\alpha})$ as $|t-s| \searrow 0$.
Then $(E_{\alpha, 0}, \|\4\cdot\4\|_{\alpha})$ is a real Banach space.
$E_{\alpha, 0}$ is also referred to as the space of $\alpha$-H{\"o}lder paths with vanishing H{\"o}lder oscillation (cf. \cite[Exercise~2.12]{MR4174393}).
Note that $E_{\alpha, 0}$ has a very useful characterization: It is the closure of $C_{0}^{\infty}(T; \re^{d})$, the vector space of $\re^{d}$-valued smooth functions on $T$ that are zero at the origin, where the closure is taken with respect to the topology induced by $\|\4\cdot\4\|_{\alpha}$.
Moreover, we have the inclusion $E_{\beta} \subset E_{\alpha, 0}$ for all $0 < \alpha < \beta < 1$.

In the context of Gaussian measures and large deviations theory, the space $E_{\alpha, 0}$ has been studied in great detail (cf. \cite{MR3070442,MR1172514,MR132389}).
In particular, the following important property has been shown to hold: If we fix $\pi = I_{d}$ and $\mu = \lambda$, then $H$ is continuously embedded into $E_{\alpha,0}$ for each $\alpha \in (0, 1/2)$, and there exists a countable family of functions in $H$ (the Faber--Schauder system) that constitutes a Schauder basis of $(E_{\alpha, 0}, \|\4\cdot\4\|_{\alpha})$.
While on the one hand this implies the separability of $E_{\alpha, 0}$, more importantly, we see that $H$ is not only continuously, but also densely embedded into $E_{\alpha, 0}$.
In conjunction with Theorem~\ref{thm:CMapprox}, and as a direct consequence to this observation, we obtain the following result.

\begin{proposition}[Universal approximation in H{\"o}lder norm]\label{prop:UAThoelder}
Fix\/ $\mu = \lambda$,\/ $\pi = I_{d}$ and\/ $\alpha, \beta \in (0, 1/2)$ with\/ $\beta < \alpha$.
Then, in the context of Propositions~\ref{prop:NNdense} and~\ref{prop:NNdense2}, for each\/ $f \in E_{\alpha, 0}$, there exists a sequence\/ $(f_{n})_{n \in \na}$ in\/ $\mathcal{NN}_{1, \infty}^{d}(\psi)$ such that
\begin{equation*}
\lim_{n \to \infty} \|f - J(f_{n})\|_{\alpha} = 0.
\end{equation*}
In particular, every smooth function\/ $f \in C_{0}^{\infty}(T; \re^{d})$ and every\/ $\alpha$-H{\"o}lder continuous function\/ $f \in E_{\alpha}$ can be approximated, up to the linear isometry\/ $J$ (see Definition~\ref{def:CMspace}), by sequences from\/ $\mathcal{NN}_{1, \infty}^{d}(\psi)$ w.r.t.\/ $\|\4\cdot\4\|_{\alpha}$ and\/ $\|\4\cdot\4\|_{\beta}$, respectively.
\end{proposition}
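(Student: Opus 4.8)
The plan is to obtain this proposition as a direct corollary of three already-established ingredients: the universal approximation statements for $\mathcal{NN}_{1,\infty}^{d}(\psi)$ in $\Lambda^{2}$ (Propositions~\ref{prop:NNdense} and~\ref{prop:NNdense2}), the abstract transfer result Theorem~\ref{thm:CMapprox} which turns density in $\Lambda^{2}$ into density in $H$ via the isometry $J$, and the continuous dense embedding of $H$ into the little H\"older space $E_{\alpha,0}$ recorded in the discussion preceding the statement. The whole argument is a short chain of density/continuity deductions; no new estimates are needed.

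First I would set $D \coloneqq \mathcal{NN}_{1,\infty}^{d}(\psi)$. Under Convention~\ref{convention4} (i.e. the hypotheses of Proposition~\ref{prop:NNdense} or~\ref{prop:NNdense2}, whichever applies to $\psi$), $D$ is a dense linear subspace of $\Lambda^{2,0}$; since $\Lambda^{2,0}$ is itself dense in $\Lambda^{2}$ by Lemma~\ref{lem:L2space3b}, $D$ is a dense linear subspace of $\Lambda^{2}$. Hence Convention~\ref{convention3} is satisfied and $D$ is a linear subspace, so Theorem~\ref{thm:CMapprox2a} applies: $H(D) = J(D)$ is a dense linear subspace of $(H, \langle\cdot,\cdot\rangle_{H})$ (this last density is also Theorem~\ref{thm:CMapprox1}).

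Next I would invoke the embedding facts from the preceding paragraph (cf.~\cite{MR3070442,MR1172514,MR132389}): for $\mu = \lambda$ and $\pi = I_{d}$, and for any exponent $a \in (0, 1/2)$, $H$ is continuously and densely embedded into $(E_{a,0}, \|\4\cdot\4\|_{a})$; write $c_{a}$ for the operator norm of the inclusion $\iota_{a} \colon H \hookrightarrow E_{a,0}$. The main claim then follows from a two-step approximation with $a = \alpha$: given $f \in E_{\alpha,0}$ and $n \in \na$, density of $H$ in $E_{\alpha,0}$ yields $h \in H$ with $\|f - h\|_{\alpha} < \tfrac{1}{2n}$, and density of $H(D)$ in $H$ combined with continuity of $\iota_{\alpha}$ yields $h' = J(f_{h'}) \in H(D)$ with $f_{h'} \in D$ and $\|h - h'\|_{\alpha} \le c_{\alpha}\|h - h'\|_{H} < \tfrac{1}{2n}$; setting $f_{n} \coloneqq f_{h'} \in \mathcal{NN}_{1,\infty}^{d}(\psi)$ gives $\|f - J(f_{n})\|_{\alpha} < \tfrac{1}{n}$.

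Finally, for the two supplementary assertions I would argue as follows. Since $E_{\alpha,0}$ is by definition the $\|\4\cdot\4\|_{\alpha}$-closure of $C_{0}^{\infty}(T; \re^{d})$, we have $C_{0}^{\infty}(T; \re^{d}) \subset E_{\alpha,0}$, so the smooth case is an immediate instance of the main claim. For $f \in E_{\alpha}$, the inclusion $E_{\alpha} \subset E_{\beta,0}$ (valid because $\beta < \alpha$) places $f$ in $E_{\beta,0}$, whence the main claim applied with $a = \beta$ produces a sequence in $\mathcal{NN}_{1,\infty}^{d}(\psi)$ converging to $f$, up to the isometry $J$, in $\|\4\cdot\4\|_{\beta}$. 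I do not anticipate a serious obstacle here; the only points needing care are the bookkeeping identity $J(D) = H(D)$, the continuity (not merely density) of the embedding $H \hookrightarrow E_{\alpha,0}$ — which is what allows $\|\cdot\|_{H}$-convergence to be upgraded to $\alpha$-H\"older convergence — and matching the exponents correctly in the H\"older inclusion $E_{\alpha} \subset E_{\beta,0}$ for the last assertion.
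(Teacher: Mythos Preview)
The proposal is correct and follows essentially the same approach as the paper: both combine the density of $H(D)$ in $H$ from Theorem~\ref{thm:CMapprox} with the continuous dense embedding of $H$ into $E_{\alpha,0}$ recorded just before the statement, and then handle the two supplementary assertions via the inclusions $C_{0}^{\infty}(T;\re^{d}) \subset E_{\alpha,0}$ and $E_{\alpha} \subset E_{\beta,0}$. Your write-up simply makes the two-step approximation explicit, which the paper leaves implicit in calling the result a ``direct consequence.''
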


Let us conclude this subsection with several remarks.
First, note that, based on \cite{MR1244574}, Proposition~\ref{prop:UAThoelder} should extend to certain Besov--Orlicz type norms, which induce stricter topologies than the H{\"o}lder norms.
Moreover, it should be possible to relax the assumption $\mu = \lambda$ by considering a modified H{\"o}lder norm with denominator $\mu\bigl((s,t]\bigr)^{\alpha}$ instead of $(t-s)^{\alpha}$.
Finally, note that the universal approximation property of neural networks in topological spaces with topologies that are stricter than the one induced by the uniform norm have already been studied in the literature. See e.g. \cite{MR4131039}, which studies the UAP in Sobolev spaces.

\section{Importance sampling with feedforward neural networks}\label{sec:IS}

Upon having studied the tractable space $H$ of drift adjustments which coincides with the Cameron--Martin space of the Gaussian measure $\gamma_{M}$, and having proved that feedforward neural networks are, up to the isometry $J$, dense in $H$ by combining Theorem~\ref{thm:CMapprox} and~Propositions \ref{prop:NNdense} resp.~\ref{prop:NNdense2}, we now turn our attention to importance sampling.
To this end, we first write down the basic setting.
In Subsection~\ref{subsec:UATGR}, we then study how our method complements a classical approach which employs ideas from large deviations theory, before finally studying the full problem in Subsection~\ref{subsec:densityapprox}.
Most notably, Theorem~\ref{thm:densityapprox} below provides a theoretical justification for our simulations in Section~\ref{sec:simulation}, see also Remark~\ref{rem:nndense}.

Let $\mathcal{C}$ denote the vector space of $\re^{n}$-valued, continuous and $\FF$-adapted processes.
Let $a \from \Omega \times T \times \mathcal{C} \to \re^{n}$ and $b \from \Omega \times T \times \mathcal{C} \to \re^{n\times d}$ be non-anticipative coefficients (cf. \cite[Definition~16.0.3]{MR3443368}), such that the stochastic differential equation
\begin{equation}\label{eq:priceSDE}
\mathrm{d}X_{t} = a_{t}(X)\, \mathrm{d}C_{t} + b_{t}(X)\, \mathrm{d}M_{t},
\quad t \in T,
\end{equation}
subject to $X_{0} \equiv x \in \re^{n}$, admits a unique weak solution.
For ease of notation, we write $a_{t}(X), b_{t}(X)$ instead of $a(\omega, t, X), b(\omega, t, X)$, and understand Equation~\eqref{eq:priceSDE} to hold componentwise, i.e. $X_{t}^{i} = x^{i} + \bigl(a(X)_{i} \sbullet[.75] C\bigr)_{t} + \bigl(b(X)_{i,\cdot} \sbullet[.75] M\bigr)_{t}$ for each $i \in \{1,2,\hdots,n\}$ and $t \in T$.

Let $F \from \Omega \times C(T; \re^{n}) \to \re$ be a random functional, such that the mapping $\Omega \ni \omega \mapsto F(\omega, X(\omega))$ is $\Fcal_{u}$-measurable.
For simplicity we write $F(\cdot, X) = F(X)$, and call $F(X)$ a random payoff.
We are interested in obtaining a Monte Carlo estimate of its expectation under $\PP$,
\begin{equation}\label{eq:MC}
\mathbb{E}_{\PP}[F(X)] = \int_{\Omega} F(\omega, X(\omega))\, \PP(\mathrm{d}\omega),
\end{equation}
provided that \eqref{eq:MC} is real-valued.

\begin{remark}
If $\mathbb{E}_{\PP}[F(X)]$ is to denote an option price, then we would require $\PP$ to be a risk-neutral measure.
However, the results in this section do not require $\PP$ to be risk-neutral.
Actually, we do not need to assume $\mathbb{E}_{\PP}[F(X)]$ to be an option price, as long as $X$ follows the SDE \eqref{eq:priceSDE} and $F(X) \in \mathcal{L}^{0}(\Fcal_{u})$.
\end{remark}

\begin{example}[Stochastic volatility models]\label{ex:stochvol}
The SDE \eqref{eq:priceSDE} can model the evolution of asset prices within stochastic volatility models.
Thus, our method complements \cite{MR2565852}, where the author employs methods from the theory of large deviations in order to derive asymptotically optimal drift adjustments (see Subsection~\ref{subsec:UATGR}, where we discuss asymptotic optimality in the context of \cite{MR2362149}) for pricing stochastic volatility models, very much in the spirit of \cite{MR2362149}.
\end{example}

Let $(X_{i})_{i \in \na}$ denote a sequence of independent copies of solutions to \eqref{eq:priceSDE}.
By the strong law of large numbers, the sample means $Z_{k} = \sum_{i=1}^{k} F(X_{i}) / k$ converge $\PP$-almost surely to $m = \mathbb{E}_{\PP}[F(X)]$.
Moreover, if $F(X_{1})$ has a finite variance $\sigma^{2} > 0$ then, according to the central limit theorem, as $k \to \infty$, the law of $\sqrt{k}(Z_{k} - m)$ converges weakly to $\N(0, \sigma^{2})$.
We therefore see that $Z_{k} - m$ is approximately normally distributed with mean zero and standard deviation $\sigma / \sqrt{k}$.
In practice, the standard error $\sigma / \sqrt{k}$ can be quite large even for large values of $k$, which calls for the application of variance reduction methods.

\begin{remark}\label{rem:intsim}
For each $f \in \Lambda^{2}$, we have that $f^{\top} \sbullet[.75]M$ is a real-valued continuous local martingale with $[f^{\top} \sbullet[.75] M]_{t} = \int_{[0,t]}f^{\top}(s)\pi(s)f(s)\, \mu(\mathrm{d}s)$ for $t \in T$.
As a consequence, similarly as argued in Remark~\ref{rem:Gprop}, $f^{\top} \sbullet[.75]M$ is a Gaussian $\FF$-Markov process with
\begin{equation*}
\mathcal{L}\Bigl( (f^{\top} \sbullet[.75]M)_{t} - (f^{\top} \sbullet[.75]M)_{s} \Bigr) = \N\Bigl(0, \int_{(s,t]}f^{\top}(u)\pi(s)f(u)\, \mu(\mathrm{d}u)\Bigr),
\end{equation*}
for $s < t$ in $T$, which shows how one can simulate increments of $f^{\top} \sbullet[.75]M$, provided that the integrals $\int_{(s,t]}f^{\top}(u)\pi(s)f(u)\, \mu(\mathrm{d}u)$ can be explicitly computed.
\end{remark}

Recall that, whenever $Y$ is a real-valued continuous semimartingale, the Dol\'{e}ans exponential $\mathcal{E}(Y)$ is the strictly positive continuous semimartingale that is, up to indistinguishability, the unique solution in $L(Y)$ to the stochastic integral equation
\begin{equation*}
\mathcal{E}(Y) = \exp{(Y_{0})} + \int_{0}^{\cdot}\mathcal{E}(Y)_{s}\, \mathrm{d}Y_{s},
\end{equation*}
and is given by $\mathcal{E}(Y) = \exp{(Y - [Y]/2)}$ (cf. \cite[Theorem~6.47]{Schmock2021}).
By a generalization of the functional equation of the exponential function, we further have $\mathcal{E}(Y)^{-1} = \mathcal{E}(-Y + [Y])$.
We refer to~\cite{R2010} for a survey on stochastic exponentials.

Since $M$ is a continuous local martingale and $[f_{h}^{\top} \sbullet[.75] M]_{u} = \| h \|_{H}^{2} < \infty$ for each $h \in H$, we know that $f_{h}^{\top}\sbullet[.75] M$ is a square integrable martingale, i.e. $f_{h}^{\top}\sbullet[.75] M \in \mathcal{H}^{2}$ (see also Lemma~\ref{lem:IKWapprox} in Appendix~\ref{app:technical}), and $\mathcal{E}(f_{h}^{\top} \sbullet[.75] M)$ is a non-negative continuous local martingale, hence a supermartingale.
Moreover, since $\mathbb{E}_{\PP}\bigl[\exp{\bigl( \tfrac{k}{2} [f_{h}^{\top}\sbullet[.75]M]_{u} \bigr)}\bigr] = \exp{\bigl( \tfrac{k}{2}\|h\|_{H}^{2}\bigr)} < \infty$ for every $k > 1$,
\cite[Theorem~15.4.6]{MR3443368} and \cite{MR602524} show that $\mathcal{E}(f_{h}^{\top} \sbullet[.75] M) \in \mathcal{H}^{p}$ for every $p = k/(2\sqrt{k}-1) > 1$ with upper bound
\begin{equation*}
\| \mathcal{E}(f_{h}^{\top} \sbullet[.75] M) \|_{\mathcal{H}^{p}} \le \tfrac{p}{p-1} \exp{\bigl( \tfrac{\sqrt{k}-1}{2}\|h\|_{H}^{2}\bigr)}.
\end{equation*}
An application of de la Vall\'{e}e Poussin's criterion (see e.g. \cite[Corollary 2.5.5]{MR3443368}) further shows that $\mathcal{E}(f_{h}^{\top} \sbullet[.75] M) \in \mathcal{H}^{p}$ implies the uniform integrability of $\mathcal{E}(f_{h}^{\top} \sbullet[.75] M)$.

By a change of measure, Equation~\eqref{eq:MC} can now be rewritten as
\begin{equation}\label{eq:MCdrift}
\mathbb{E}_{\PP}[F(X)] = \mathbb{E}_{\PP_{h}}[F(X) (\mathcal{E}(f_{h}^{\top}\sbullet[.75]M)^{-1})_{u}],
\end{equation}
where $\PP_{h}$ is defined by $\mathrm{d}\PP_{h} = \mathcal{E}(f_{h}^{\top}\sbullet[.75]M)_{t}\, \mathrm{d}\PP$ on $\Fcal_{t}$ for every $t \in T$.
If $F_{h}(X) \coloneqq F(X)(\mathcal{E}(f_{h}^{\top}\sbullet[.75]M)^{-1})_{u}$ denotes the modified random payoff, then the $\PP$-expectation of $F(X)$ and the $\PP_{h}$-expectation of $F_{h}(X)$ are identical.
Provided that $F(X)$ has a finite second moment w.r.t. $\PP$, the variance of $F_{h}(X)$ under $\PP_{h}$ is given by
\begin{equation}\label{varianceQ}
\mathbb{E}_{\PP_{h}}[F_{h}^{2}(X)] - \mathbb{E}_{\PP_{h}}[F_{h}(X)]^{2} = \mathbb{E}_{\PP}[F^{2}(X)(\mathcal{E}(f_{h}^{\top}\sbullet[.75]M)^{-1})_{u}] - \mathbb{E}_{\PP}[F(X)]^{2}.
\end{equation}
Therefore, we can compute~\eqref{eq:MCdrift} under the measure $\PP_{h}$ and try to find $h \in H$ such that~\eqref{varianceQ} is minimized.
Note that the second term on the right-hand side of~\eqref{varianceQ} does not depend on $h$, so that we focus on minimizing the first term, which for each $h \in H$ is given by
\begin{equation*}
V(h) \coloneqq \mathbb{E}_{\PP}[F^{2}(X)(\mathcal{E}(f_{h}^{\top}\sbullet[.75]M)^{-1})_{u}] = \mathbb{E}_{\PP}\bigl[ F^{2}(X)\exp\bigl(-(f_{h}^{\top}\sbullet[.75]M)_{u}+\|h\|_{H}^{2}/2\bigr)\bigr].
\end{equation*}

To sum up, our problem reads
\begin{equation}\label{goal}
\min_{h \in H} V(h),
\end{equation}
provided that a minimizer of $V$ exists (see Theorem~\ref{thm:densityapprox} for sufficient conditions).

\subsection{Approximating the asymptotically optimal sampling measure}\label{subsec:UATGR}

Before we turn our attention to solving \eqref{goal}, let us first discuss the classical approach presented in \cite{MR2362149}, that uses methods from the theory of large deviations, and show how our method complements it.
Note that the setting of \cite{MR2362149} is a special case of the setting of Section~\ref{sec:IS}.
To this end, set $d = 1$, let $M$ be a standard Brownian motion and let $\FF$ be the augmented natural filtration of $M$ (which satisfies the usual hypotheses).
Set $X = M$ and assume that the payoff $F \from C_{0}(T; \re) \to \replus$ is continuous, where $C_{0}(T; \re)$ is endowed with the topology of uniform convergence.

In \cite{MR2362149}, the authors argue that \eqref{goal} is in general intractable.
Rather than minimizing \eqref{goal}, the authors consider for each $h \in H$ the small-noise limit
\begin{equation}\label{goallimit}
L(h) \coloneqq \limsup_{\epsilon \searrow 0} \epsilon \log{\mathbb{E}_{\PP}\Bigl[ \exp\Bigl(\tfrac{1}{\epsilon}\bigl(2\tilde{F}(\sqrt{\epsilon}M)-\bigl((\sqrt{\epsilon}f_{h})\sbullet[.75]M\bigr))_{u}+\|h\|_{H}^{2}/2 \bigr)\Bigr)}\Bigr],
\end{equation}
where $\tilde{F} \coloneqq \log{F}$.
The limit \eqref{goallimit} corresponds to approximating $V(h) \approx \exp(L(h))$.

Assume that $\tilde{F} \from C_{0}(T; \re) \to \re \cup \{-\infty\}$ is continuous, and that there exist constants $K_{1}, K_{2} > 0$ as well as $\alpha \in (0,2)$ such that $\tilde{F}(x) \le K_{1} + K_{2} \|x\|_{\infty}^{\alpha}$ for each $x \in C_{0}(T; \re)$.
As \cite[Theorem~3.6]{MR2362149} shows, one can invoke a version of Varadhan's integral lemma to rewrite \eqref{goallimit} as a variational problem, provided that $h$ is an element of  $H_{\mathrm{bv}}$, the space of all $h \in H$ such that $f_{h} \in \Lambda^{2} = L^{2}(\lambda)$ is of bounded variation, and aim to solve $\min_{h \in H_{\mathrm{bv}}} L(h)$, provided that a minimizer exists.

For the purpose of the proof of the central result \cite[Theorem~3.6]{MR2362149}, the following functional is of importance: For $M > 0$ and $h \in H$ consider $\tilde{F}_{h,M} \colon H \ni g \mapsto 2\tilde{F}(g) - M\|g+h\|_{H}^{2} + \|h\|_{H}^{2}$.
By \cite[Lemma~7.1]{MR2362149}, there exists a maximizer $g_{h,M} \in H$ of $\tilde{F}_{h,M}$.
Together with Proposition~\ref{prop:NNdense}, we then obtain

\begin{proposition}\label{prop:asoptapprox}
Assume that\/ $\psi \from \re \to \re$ is continuously differentiable, bounded and non-constant, and set\/ $D = \mathcal{NN}_{1,\infty}^{1}(\psi)$.
Then\/ $H(D) \subset H_{\mathrm{bv}}$, and for each\/ $M > 0$ and\/ $h \in H$, there exists a sequence\/ $(h_{n})_{n \in \na}$ in\/ $H(D)$ such that
\begin{equation*}
\lim_{n \to \infty} \tilde{F}_{h,M}(h_{n}) = \tilde{F}_{h,M}(g_{h,M}) = \max_{g \in H} \bigl(2\tilde{F}(g) - M\|g+h\|_{H}^{2} + \|h\|_{H}^{2}\bigr).
\end{equation*}
\end{proposition}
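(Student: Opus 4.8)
The plan is to combine three facts: that $H(D)$ is dense in $H$, that the functional $\tilde{F}_{h,M}$ is continuous on $(H, \|\4\cdot\4\|_{H})$, and that the maximizer $g_{h,M}$ furnished by \cite[Lemma 7.1]{MR2362149} lies in $H$; a sequence in $H(D)$ converging to $g_{h,M}$ in $H$ then does the job. Separately, the inclusion $H(D) \subset H_{\mathrm{bv}}$ must be checked by hand. I begin with the latter. By the description of $\mathcal{NN}_{1,\infty}^{1}(\psi)$ recalled in Example \ref{ex:simple}, every $f \in D$ is a finite linear combination of maps $t \mapsto \psi(\alpha t + \eta)$ with $\alpha, \eta \in \re$; since $\psi$ is continuously differentiable, each of these maps is $C^{1}$ on the compact interval $T$, hence Lipschitz, hence of bounded variation, and finite linear combinations of functions of bounded variation are again of bounded variation. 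Being continuous on the compact set $T$, such $f$ is also bounded, so $f \in L^{2}(\lambda) = \Lambda^{2}$. Thus for every $h \in H(D)$ the density $f_{h} = f$ is an element of $\Lambda^{2}$ of bounded variation, i.e.\ $h \in H_{\mathrm{bv}}$, whence $H(D) \subset H_{\mathrm{bv}}$.

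For the density, note that $\psi$, being continuously differentiable, is in particular continuous, and it is bounded and non-constant, so Condition \ref{thm:UATAs1} is satisfied. Proposition \ref{prop:NNdense} then shows that $D = \mathcal{NN}_{1,\infty}^{1}(\psi)$ is a dense linear subspace of $\Lambda^{2,0}$, hence by Lemma \ref{lem:L2space3b} a dense subset of $\Lambda^{2}$. Therefore $D$ satisfies Convention \ref{convention3}, and Theorem \ref{thm:CMapprox1} yields that $H(D)$ is a dense subset of $H$.

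It remains to verify that $\tilde{F}_{h,M} \from H \to \re \cup \{-\infty\}$, $g \mapsto 2\tilde{F}(g) - M\|g+h\|_{H}^{2} + \|h\|_{H}^{2}$, is continuous with respect to $\|\4\cdot\4\|_{H}$. The map $g \mapsto -M\|g+h\|_{H}^{2} + \|h\|_{H}^{2}$ is plainly continuous. For the term $g \mapsto 2\tilde{F}(g)$, recall that in the present setting ($d = 1$, $\mu = \lambda$, $\pi \equiv 1$) we have $g(t) = \int_{0}^{t} f_{g}(s)\, \mathrm{d}s$ and $\|g\|_{H} = \|f_{g}\|_{L^{2}(\lambda)}$, so Cauchy--Schwarz gives $\|g\|_{\infty} \le \sqrt{u}\, \|g\|_{H}$; hence $H$ is continuously embedded into $E = C_{0}(T; \re)$ (cf.\ Example \ref{ex:CWS}). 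Since $\tilde{F} \from C_{0}(T; \re) \to \re \cup \{-\infty\}$ is continuous by assumption, the composition $g \mapsto \tilde{F}(g)$ is continuous on $(H, \|\4\cdot\4\|_{H})$ with values in $\re \cup \{-\infty\}$, and therefore so is $\tilde{F}_{h,M}$.

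To conclude, fix $M > 0$ and $h \in H$, let $g_{h,M} \in H$ be the maximizer of $\tilde{F}_{h,M}$ given by \cite[Lemma 7.1]{MR2362149}, and choose, by the density established above, a sequence $(h_{n})_{n \in \na}$ in $H(D)$ with $\|h_{n} - g_{h,M}\|_{H} \to 0$; continuity of $\tilde{F}_{h,M}$ then gives
\begin{align*}
\lim_{n \to \infty} \tilde{F}_{h,M}(h_{n}) = \tilde{F}_{h,M}(g_{h,M}) = \max_{g \in H}\bigl(2\tilde{F}(g) - M\|g+h\|_{H}^{2} + \|h\|_{H}^{2}\bigr),
\end{align*}
which together with $H(D) \subset H_{\mathrm{bv}}$ is the assertion. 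The one step requiring genuine care is the continuity of $\tilde{F}_{h,M}$: one must couple the continuous embedding $H \hookrightarrow C_{0}(T; \re)$ with the continuity of $\tilde{F}$ and keep track of the possible value $-\infty$; everything else is a direct assembly of Proposition \ref{prop:NNdense}, Theorem \ref{thm:CMapprox1} and the cited existence lemma.
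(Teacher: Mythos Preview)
Your proof is correct and follows essentially the same route as the paper's: both establish $H(D) \subset H_{\mathrm{bv}}$ via the $C^{1}$-hence-bounded-variation argument, invoke Proposition \ref{prop:NNdense} (together with Theorem \ref{thm:CMapprox1}) for density of $H(D)$ in $H$, use the continuous embedding $H \hookrightarrow C_{0}(T;\re)$ to transfer continuity of $\tilde{F}$, and conclude by continuity of the norm term. Your version is in fact slightly more explicit in tracing the chain Proposition \ref{prop:NNdense} $\to$ Lemma \ref{lem:L2space3b} $\to$ Theorem \ref{thm:CMapprox1} and in writing out the Cauchy--Schwarz bound for the embedding, whereas the paper cites the proof of Theorem \ref{thm:CMapprox3} for the latter.
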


According to \cite[Theorem~3.6]{MR2362149} and the discussion thereafter, the general strategy for finding a minimizer of $L$ is as follows:
Find a maximizer $g_{h,1}$ to $\tilde{F}_{h, 1}$ for $h \equiv 0$.
Verify whether $g_{h,1}$ is actually an element of $H_{\mathrm{bv}}$ and, if this is the case, then $g_{h,1}$ minimizes $L$ provided that $L(g_{h,1}) = \tilde{F}_{h, 1}(g_{h,1})$ holds true.
Since the evaluation of $L$ at $g_{h,1}$ involves having to find a maximizer of $\tilde{F}_{g_{h,1},1/2}$, checking the identity $L(g_{h,1}) = \tilde{F}_{h, 1}(g_{h,1})$ might only be feasible by a numerical approximation which introduces an error.
However, if one could establish said identity, then $g_{h,1}$ would be a minimizer of $L$, in which case we say that $g_{h,1}$ is asymptotically optimal.

Let us assume that there exists a maximizer $g_{h,1}$ to $\tilde{F}_{h, 1}$ for $h \equiv 0$ that is indeed asymptotically optimal.
In light of Proposition~\ref{prop:asoptapprox}, we can then approximate $g_{h,1}$ by a sequence $(h_{n})_{n \in \na}$ from $H(\mathcal{NN}_{1,\infty}^{1}(\psi))$ such that $\tilde{F}_{h,1}(h_{n})$ converges to $\tilde{F}_{h,1}(g_{h,1})$.
Theorem~\ref{thm:densityapprox2} below then implies that $V(h_{n})$ converges to $V(g_{h,1})$.
To sum up, rather then trying to find a minimizer of $V$, one might instead study
\begin{equation}\label{goalmodified}
\max_{g \in H} \tilde{F}_{h,1}(g) = \max_{g \in H} \bigl( 2 \tilde{F}(g) - \|g\|_{H}^{2} \bigr)
\end{equation}
and solve the modified problem \eqref{goalmodified} with feedforward neural networks.

\subsection{Approximating the optimal sampling measure}\label{subsec:densityapprox}

In what follows, we consider the full problem~\eqref{goal} and propose to solve it with feedforward neural networks.
One advantage of this approach is that we do not, in contrast to Subsection~\ref{subsec:UATGR}, seek to find an asymptotically optimal drift adjustment by minimizing \eqref{goallimit}, but rather stay within the full problem~\eqref{goal}.
Moreover, Theorem~\ref{thm:densityapprox4} and Remark~\ref{rem:nndense} provide a theoretical justification for employing the tractable class of shallow feedforward neural networks for this optimization problem.
The numerical simulations in Section~\ref{sec:simulation} will demonstrate that indeed, we obtain substantial reductions in the variance of the Monte Carlo estimators for several multivariate asset price processes and path-dependent payoff functionals.

In the following lemma, we consider a nonlinear operator which maps elements from Cameron--Martin space to probability densities.
This result is essential, as it will imply in Theorem~\ref{thm:densityapprox} below that the optimal sampling measure can be approximated by measures which are generated by feedforward neural networks.

\begin{lemma}\label{lem:density}
The operator\/ $A_{p} \colon H \ni h \mapsto (\mathcal{E}(f_{h}^{\top}\sbullet[.75]M)^{-1})_{u} \in L^{p}(\PP)$ is continuous for each\/ $p \in [1, \infty)$.
Moreover,\/ $A_{p}$ is not quasi-bounded, meaning that
\begin{equation*}
\limsup_{\|h\| \to \infty} \frac{\|A_{p}(h)\|_{L^{p}(\PP)}}{\|h\|_{H}} = \infty.
\end{equation*}
\end{lemma}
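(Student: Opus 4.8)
The plan is to establish the two assertions separately. For the continuity of $A_p$, I would first recall the explicit formula
\begin{align*}
A_p(h) = (\mathcal{E}(f_h^\top\sbullet[.75]M)^{-1})_u = \exp\bigl(-(f_h^\top\sbullet[.75]M)_u + \tfrac12\|h\|_H^2\bigr).
\end{align*}
Since $[f_h^\top\sbullet[.75]M]_u = \|h\|_H^2$ is a deterministic constant, the random variable $(f_h^\top\sbullet[.75]M)_u$ is centered Gaussian with variance $\|h\|_H^2$ (cf. Remark \ref{rem:intsim}), so $A_p(h)$ is lognormally distributed and hence in $L^q(\PP)$ for every $q\in[1,\infty)$; in particular $A_p$ is well-defined into $L^p(\PP)$. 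To prove continuity, take $h_n\to h$ in $H$. The key point is that $f_{h_n}\to f_h$ in $\Lambda^2$, so $f_{h_n}^\top\sbullet[.75]M \to f_h^\top\sbullet[.75]M$ in $\mathcal H^2$ by the isometry property of the stochastic integral, and in particular $(f_{h_n}^\top\sbullet[.75]M)_u \to (f_h^\top\sbullet[.75]M)_u$ in $L^2(\PP)$, hence along a subsequence $\PP$-a.s. Then $A_p(h_n)\to A_p(h)$ $\PP$-a.s. (along that subsequence) by continuity of $\exp$, and to upgrade a.s.\ convergence to $L^p(\PP)$ convergence I would invoke uniform integrability of $\{A_p(h_n)^p\}_n$: since the variances $\|h_n\|_H^2$ are bounded (say by some $R^2$), one gets a uniform bound $\sup_n \mathbb{E}_\PP[A_p(h_n)^{p(1+\delta)}] \le \exp\bigl(\tfrac12 p^2(1+\delta)^2 R^2 + \tfrac12 p(1+\delta)R^2\bigr) < \infty$ for some $\delta>0$, which gives uniform integrability by de la Vall\'ee Poussin. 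A standard subsequence argument (every subsequence has a further subsequence converging to the same limit) then yields $A_p(h_n)\to A_p(h)$ in $L^p(\PP)$ along the full sequence.

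For the second assertion, that $A_p$ is not quasi-bounded, I would exhibit an explicit unbounded sequence. Fix any $f\in\Lambda^2$ with $\|f\|_{\Lambda^2}=1$ and set $h_\lambda \coloneqq J(\lambda f)$, so $\|h_\lambda\|_H = \lambda$. Then $(f_{h_\lambda}^\top\sbullet[.75]M)_u = \lambda\,(f^\top\sbullet[.75]M)_u$ is $\mathcal N(0,\lambda^2)$-distributed, so
\begin{align*}
\|A_p(h_\lambda)\|_{L^p(\PP)}^p = \mathbb{E}_\PP\bigl[\exp\bigl(-p\lambda(f^\top\sbullet[.75]M)_u + \tfrac{p}{2}\lambda^2\bigr)\bigr] = \exp\bigl(\tfrac{p^2\lambda^2}{2} + \tfrac{p\lambda^2}{2}\bigr),
\end{align*}
using the moment generating function of the normal distribution. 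Hence
\begin{align*}
\frac{\|A_p(h_\lambda)\|_{L^p(\PP)}}{\|h_\lambda\|_H} = \frac{1}{\lambda}\exp\Bigl(\tfrac{p\lambda^2}{2} + \tfrac{\lambda^2}{2}\Bigr) \xrightarrow{\ \lambda\to\infty\ }\infty,
\end{align*}
which proves the claim.

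The main obstacle is the continuity argument: passing from convergence of the stochastic integrals in $\mathcal H^2$ (equivalently, a.s.\ convergence along a subsequence) to genuine $L^p(\PP)$ convergence of the nonlinear exponential functional. This requires the uniform integrability estimate above, which in turn hinges crucially on the fact that $[f_h^\top\sbullet[.75]M]_u = \|h\|_H^2$ is deterministic, so that all the relevant exponential moments can be computed exactly via the Gaussian moment generating function and bounded uniformly once $\|h_n\|_H$ is bounded. Everything else — well-definedness, the subsequence bookkeeping, and the non-quasi-boundedness — is then routine.
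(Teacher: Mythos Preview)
Your proposal is correct and follows essentially the same route as the paper: convergence of the stochastic integrals in $\mathcal H^2$, uniform integrability of the $p$-th powers via de la Vall\'ee Poussin using the deterministic quadratic variation, and an explicit computation of $\|A_p(h)\|_{L^p(\PP)} = \exp\bigl((1+p)\|h\|_H^2/2\bigr)$ for the non-quasi-boundedness. The only cosmetic difference is that the paper avoids the subsequence extraction by working with convergence in probability throughout (continuous mapping theorem plus Vitali, then a Scheff\'e-type lemma to pass from convergence of $L^p$-norms to $L^p$-convergence), whereas you pass to a.s.\ convergent subsequences and invoke the subsequence principle; both packagings are standard and equivalent here.
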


Finally, we formulate Theorem~\ref{thm:densityapprox}.
The proof (which is presented in Appendix~\ref{app:technical}) complements \cite[Proposition~4]{MR2680557} and not only shows, under rather weak assumptions, that the functional $V$ does indeed admit a minimizer, but Theorem~\ref{thm:densityapprox4} applied to a dense subset $D$ of $\Lambda^{2}$ which consists of feedforward neural networks provides the theoretical justification for the simulations in Section~\ref{sec:simulation}, see also Remark~\ref{rem:nndense} below.

\begin{theorem}\label{thm:densityapprox}
Assume that\/ $\PP\bigl(\{ F^{2}(X) > 0 \} \bigr) > 0$, and that there exists some\/ $\varepsilon > 0$ such that\/ $F(X) \in L^{2+\epsilon}(\PP)$.
Then:

\begin{enumerate}[ref={\thetheorem(\alph*)}, label=(\alph*)]
\setlength\itemsep{0.1em}
\item\label{thm:densityapprox1}
$V$ is\/ $\replus$-valued;
\item\label{thm:densityapprox2}
$V$ is continuous;
\item\label{thm:densityapprox3}
There exists a minimizer of\/ $V$, i.e.
\begin{align*}
\argmin_{h \in H}V(h) = \bigl\{ g \in H\ |\ V(g) \le V(h),\ \forall\, h \in H \bigr\} \neq \emptyset;
\end{align*}
\item\label{thm:densityapprox4}
There exists a sequence\/ $(h_{n})_{n \in \na}$ in\/ $H(D)$ such that\/ $\lim_{n \to \infty} V(h_{n}) = \min_{h \in H}V(h)$.
\end{enumerate}
\end{theorem}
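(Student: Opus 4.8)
The plan is to establish parts \ref{thm:densityapprox1}–\ref{thm:densityapprox3} first, since they set up the functional-analytic framework, and then derive \ref{thm:densityapprox4} as a density argument. For \ref{thm:densityapprox1}, I would use H{\"o}lder's inequality to split $\mathbb{E}_{\PP}[F^{2}(X)(\mathcal{E}(f_{h}^{\top}\sbullet[.75]M)^{-1})_{u}]$: with exponents $p = (2+\varepsilon)/2$ and its conjugate $q$, the factor $F^{2}(X)$ lands in $L^{p}(\PP)$ by assumption, and $(\mathcal{E}(f_{h}^{\top}\sbullet[.75]M)^{-1})_{u}$ lies in every $L^{q}(\PP)$ by Lemma \ref{lem:density} (or directly, since $[f_{h}^{\top}\sbullet[.75]M]_{u} = \|h\|_{H}^{2}$ is deterministic, so all exponential moments of the stochastic exponential are finite). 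Positivity and strict positivity of $V(h)$ follow from $\PP(\{F^{2}(X)>0\})>0$ together with strict positivity of the Dol\'{e}ans exponential.

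For \ref{thm:densityapprox2}, continuity of $V$, I would write $V(h) = \mathbb{E}_{\PP}[F^{2}(X) A_{1}(h)]$ and combine the continuity of $A_{p}$ from Lemma \ref{lem:density} with the same H{\"o}lder splitting: if $h_{n} \to h$ in $H$, then $A_{q}(h_{n}) \to A_{q}(h)$ in $L^{q}(\PP)$, and since $F^{2}(X) \in L^{p}(\PP)$, the products converge in $L^{1}(\PP)$, giving $V(h_{n}) \to V(h)$. For \ref{thm:densityapprox3}, existence of a minimizer, the natural route follows \cite[Proposition 4]{MR2680557}: show that $V$ is coercive in an appropriate sense, or rather that the sublevel sets are weakly compact. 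The key is that $V(h) = \mathbb{E}_{\PP}[F^{2}(X)\exp(-(f_{h}^{\top}\sbullet[.75]M)_{u} + \|h\|_{H}^{2}/2)]$; by Jensen's inequality applied under the probability measure $\mathrm{d}\mathbb{Q} = F^{2}(X)/\mathbb{E}_{\PP}[F^{2}(X)]\,\mathrm{d}\PP$, one gets a lower bound of the form $V(h) \ge \mathbb{E}_{\PP}[F^{2}(X)]\exp(\mathbb{E}_{\mathbb{Q}}[-(f_{h}^{\top}\sbullet[.75]M)_{u}] + \|h\|_{H}^{2}/2)$; the linear term is controlled by $C\|h\|_{H}$ for a constant $C$ depending on $\mathbb{Q}$, so the exponent behaves like $\|h\|_{H}^{2}/2 - C\|h\|_{H} \to \infty$, giving coercivity. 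Then since $V$ is continuous and convex (it is an integral of convex functions of $h$ — the map $h \mapsto \exp(-(f_{h}^{\top}\sbullet[.75]M)_{u} + \|h\|_{H}^{2}/2)$ requires a short check, but $-(f_{h}^{\top}\sbullet[.75]M)_{u}$ is linear in $h$ and $\|h\|_{H}^{2}/2$ is convex, so the exponent is convex and $\exp$ is convex increasing), it is weakly lower semicontinuous, and a coercive weakly l.s.c. function on a Hilbert space attains its minimum.

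Finally, for \ref{thm:densityapprox4}, I would argue as follows: by \ref{thm:densityapprox3} there is $h^{*} \in \argmin_{h \in H}V(h)$; by Convention \ref{convention3}, $D$ is dense in $\Lambda^{2}$, so by Theorem \ref{thm:CMapprox1} the set $H(D) = J(D)$ is dense in $H$, hence there is a sequence $(h_{n})_{n \in \na}$ in $H(D)$ with $h_{n} \to h^{*}$ in $H$; by \ref{thm:densityapprox2}, $V(h_{n}) \to V(h^{*}) = \min_{h \in H}V(h)$. I expect the main obstacle to be \ref{thm:densityapprox3}: verifying convexity (or at least weak lower semicontinuity) of $V$ carefully, and making the coercivity estimate rigorous — in particular controlling $\mathbb{E}_{\mathbb{Q}}[(f_{h}^{\top}\sbullet[.75]M)_{u}]$ by $\|h\|_{H}$ uniformly. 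The integrability bookkeeping (ensuring all expectations are finite when applying Jensen and Fubini, using the $2+\varepsilon$ assumption) is where the extra $\varepsilon$ is genuinely needed, mirroring the role it plays in \cite[Proposition 4]{MR2680557}.
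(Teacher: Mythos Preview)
Your proposal is correct and, for parts \ref{thm:densityapprox1}, \ref{thm:densityapprox2} and \ref{thm:densityapprox4}, follows essentially the same route as the paper: H{\"o}lder with exponents $p=(2+\varepsilon)/2$, $q=(2+\varepsilon)/\varepsilon$ for finiteness; continuity of $A_{q}$ from Lemma \ref{lem:density} combined with $F^{2}(X)\in L^{p}(\PP)$ acting as a continuous linear functional on $L^{q}(\PP)$ for continuity; and density of $H(D)$ in $H$ plus continuity of $V$ for the approximation. Your version of \ref{thm:densityapprox4} is in fact slightly cleaner than the paper's, which approximates near-minimizers $h_{\delta}$ and then diagonalizes, whereas you go straight to the minimizer $h^{*}$ furnished by \ref{thm:densityapprox3}.

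The one genuine difference is the coercivity step in \ref{thm:densityapprox3}. The paper follows \cite[Proposition 4]{MR2680557} and uses the reverse H{\"o}lder inequality to obtain the explicit lower bound $V(h) \ge \delta\,\PP(\{F^{2}(X)\ge\delta\})^{3}\exp(\|h\|_{H}^{2}/4)$, which only needs the nondegeneracy hypothesis $\PP(\{F^{2}(X)>0\})>0$. Your Jensen argument under the tilted measure $\mathrm{d}\mathbb{Q}=F^{2}(X)/\mathbb{E}_{\PP}[F^{2}(X)]\,\mathrm{d}\PP$ is also valid: since $(f_{h}^{\top}\sbullet[.75]M)_{u}$ is centered Gaussian with variance $\|h\|_{H}^{2}$ under $\PP$, one has $\bigl|\mathbb{E}_{\mathbb{Q}}[(f_{h}^{\top}\sbullet[.75]M)_{u}]\bigr| \le \|F^{2}(X)\|_{L^{p}(\PP)}\,\|(f_{h}^{\top}\sbullet[.75]M)_{u}\|_{L^{q}(\PP)}/\mathbb{E}_{\PP}[F^{2}(X)] = C\|h\|_{H}$ via H{\"o}lder, and Jensen then gives $V(h)\ge \mathbb{E}_{\PP}[F^{2}(X)]\exp(\|h\|_{H}^{2}/2 - C\|h\|_{H})$. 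Note that your route consumes the $L^{2+\varepsilon}$ hypothesis already at the coercivity stage, whereas the paper's reverse-H{\"o}lder bound does not; under the theorem's assumptions this makes no difference, but it explains why the paper can state coercivity using only $\PP(\{F^{2}(X)>0\})>0$. The convexity and weak lower semicontinuity arguments coincide.
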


\begin{remark}\label{rem:nndense}
In the context of Theorem~\ref{thm:densityapprox4}, we may seek to find a minimizer of $V$ by performing measure changes which are induced by Dol\'{e}ans exponentials of the form $\mathcal{E}(f^{\top}\sbullet[.75]M)$, where $f \in \mathcal{NN}_{1, \infty}^{d}(\psi)$.
In Section~\ref{sec:simulation}, we pursue this approach for several different asset price models, achieving substantial reductions in the variance of the corresponding Monte Carlo estimators.
\end{remark}

\begin{remark}
Theorem~\ref{thm:densityapprox4} shows that neural network-induced changes of the sampling measure can approximate the optimal sampling measure arbitrarily well in the sense that the second moment of the modified payoff under the optimal measure can be approximated up to an arbitrarily small $\epsilon > 0$. 
However, the proof is not constructive, it does not deliver a recipe how to actually obtain such a sequence $(h_{n})$ of neural network-induced elements from Cameron--Martin space that converges to the optimum.
In Section 5 below, we use stochastic gradient descent to train our neural networks.
This procedure builds on the method of stochastic approximation, which was pioneered by Robbins \& Monro in 1951 (\cite{MR42668}).
Stochastic approximation for importance sampling for option pricing in continuous-time models has been studied by Lemaire \& Pag\`{e}s (\cite{MR2680557}).
We refer to their Section 3 for details on how to construct convergent sequences of functions based on the method of stochastic approximation.
\end{remark}

\begin{remark}[Extension to the calculation of sensitivities]\label{rem:sen}
Let us assume that the SDE~\eqref{eq:priceSDE} depends on a set of parameters $\alpha \in \re^{m}$ for some $m \in \na$.
Fix $i \in \{1,2,\hdots,m\}$, and let us further assume that we can exchange the order of differentiation and integration, i.e. $\tfrac{\partial}{\partial \alpha_{i}} \mathbb{E}_{\PP}[F(X)] = \mathbb{E}_{\PP}[\tfrac{\partial}{\partial \alpha_{i}} F(X)]$.
If we wanted to jointly reduce the standard error of the Monte Carlo estimators of the expected random payoff and of its sensitivity w.r.t. $\alpha_{i}$, we could modify the definition of $V$:
\begin{equation*}
\tilde{V}(h) = \mathbb{E}_{\PP}\bigl[\bigl(w_{1} F^{2}(X) + w_{2}\bigl(\tfrac{\partial}{\partial \alpha_{i}} F(X)\bigr)^{2}\bigr)(\mathcal{E}(f_{h}^{\top}\sbullet[.75]M)^{-1})_{u}\bigr],
\quad h \in H,
\end{equation*}
where $w_{1}, w_{2} \in (0,1)$ are weights that sum up to $1$.
If there exists some\/ $\varepsilon > 0$ such that $w_{1}F^{2}(X) + w_{2}(\tfrac{\partial}{\partial \alpha_{i}} F(X))^{2} \in L^{1+\varepsilon}(\PP)$ and $\PP\bigl(\{ w_{1}F^{2}(X) + w_{2}(\tfrac{\partial}{\partial \alpha_{i}} F(X))^{2} > 0 \} \bigr) > 0$, then Theorem~\ref{thm:densityapprox} applies correspondingly.
Analogous considerations hold for higher-order sensitivities as well as for the joint reduction of standard errors for more than one sensitivity.
We refer the reader to~\cite[Section 7.2]{MR1999614} for details on the computation of pathwise derivatives for some classical models and payoffs.
\end{remark}

\section{Numerical study}\label{sec:simulation}

In this section, we provide a range of carefully chosen numerical examples to showcase the various strengths of our method.
Additionally, we will compare our approach to other methods that have been proposed in the literature.
All computational tasks were performed using Python, leveraging the Keras deep learning API for the construction and training of our feedforward neural networks.\footnote{\label{fn:Github}All codes that were used for the simulations are available on Github, see \href{https://github.com/aarandjel/importance-sampling-with-feedforward-neural-networks}{https://github.com/aarandjel/importance-sampling-with-feedforward-neural-networks}.}

Let us provide a brief overview of the examples appearing in the subsequent subsections.
In Subsection~\ref{subsec:business}, we explore a time-change instance that deviates from the conventional assumption of $\mu = \lambda$ to better represent phases of changing business activity.
Subsection~\ref{subsec:knockout} considers a knock-out option and discusses the occurrence of multiple rare events.
Moving on to Subsection~\ref{subsec:dyncor}, we examine a stochastic volatility model with an imposed dynamic correlation structure, which directly influences the norm on Cameron--Martin space.
Lastly, in Subsection~\ref{subsec:basket}, we investigate the feasibility of utilizing neural networks for importance sampling in a high-dimensional model.
Throughout all of our examples, we consider arithmetic Asian (basket) call options with strike $K$ and basket weights $w$ as the chosen payoffs,
\begin{align}\label{eq:payoff}
F(X) = \Bigl( \frac{1}{u}\int_{0}^{u} \langle w, X_{t} \rangle\, \mathrm{d}t - K \Bigr)^{+},
\end{align}
while Subsection~\ref{subsec:knockout} additionally incorporates knock-out barriers for further analysis.

To establish a solid basis for comparison, we have selected the methodologies proposed by Glassermann et al. (\cite{MR1849001}), Guasoni \& Robertson (\cite{MR2362149}), Capriotti (\cite{MR2451625}), Arouna (\cite{A2004}), Su \& Fu (\cite{899767}), as well as Jourdain \& Lelong (\cite{MR2680557}).
To underscore the versatility of our approach in handling more general models than those presented in the literature, we will initially present results for the models discussed in the previous paragraph.
Subsequently, we will report results from simulations performed for the models studied in the literature mentioned above.

To train a feedforward neural network, our approach is as follows.
First, we simulate a fixed number $N$ of trajectories $X^{i}$ of the asset price using the Euler--Maruyama method, based on a pre-defined time-grid.
Then, we decide on a set $\mathcal{NN}_{k,l}^{d}(\psi)$ from which we seek to identify the optimal function, by selecting the number of hidden layers $k$, the number of hidden nodes $l$, and the activation function $\psi$.
The output dimension $d$ of the neural networks aligns with the dimension of the process $M$.
We approximate $V$ by computing an average over the $N$ trajectories,
\begin{equation}\label{eq:SGDgoal}
V(\theta) = \frac{1}{N} \sum_{i=1}^{N} F^{2}(X^{i})\exp\bigl(-(f_{\theta}^{\top}\sbullet[.75]M^{i})_{u}+\|f_{\theta}\|_{\Lambda^{2}}^{2}/2\bigr),
\end{equation}
where $\theta$ represents the vector encompassing all trainable parameters of the neural network $f_{\theta}$, and all quantities on the right-hand side of Equation~\eqref{eq:SGDgoal} are appropriately discretized.
We therefore consider $V$ as a function of the finite parameter vector $\theta$, and aim to find the optimal $\theta^{*}$ and thus the optimal element $f_{\theta^{*}}$ from $\mathcal{NN}_{k,l}^{d}(\psi)$.

To achieve this, we employ stochastic gradient descent, a technique originally pioneered by Robbins \& Monro (\cite{MR42668}).
Specifically, we adopt the mini-batch variant of this method, which replaces the mean over all $N$ trajectories with means over smaller sub-batches.
Starting from an initial guess, the parameter-vector $\theta$ is then iteratively updated with a scaled version of the gradient of $V$ over those sub-batches, i.e. $\theta_{m+1} = \theta_{m} - \gamma_{m} \nabla_{batch}V(\theta_{m})$ with learning rate $\gamma_{m}$ and $\nabla_{batch}V$ denoting the gradient of $V$ over one specific batch.
Upon completing a full iteration through all batches, we consider the neural network to have completed one epoch of training.
For each subsequent epoch, the trajectories contained in the individual batches can then be randomly shuffled around, and the parameter $\theta$ is updated until a stopping criterion is reached.
One notable advantage of neural networks lies in their ability to efficiently compute gradients through the back-propagation method.
Additionally, we utilize a popular modified version of this training routine known as Adam (\cite{KB2017}), which incorporates the first and second moments of the gradient estimates to enhance performance.

In all of our subsequent examples, we train the neural networks using 100 batches, each consisting of 1,024 trajectories.
For validation purposes, we employ an additional 100 batches, also comprising 1,024 trajectories, and stop the training process when the loss, $V(\theta)$, ceases to reduce on the validation set.
The results presented in the following tables are derived from simulations performed on separate test datasets, each containing $10^5$ trajectories.
Throughout the training, validation, and testing phases, we maintain a fixed learning rate of $10^{-3}$ for the stochastic gradient descent, and we fix the time horizon to $u=1$ to consider the interval $T=[\4 0, 1]$.
Unless otherwise specified, we utilize a step size of $\Delta t = 1/250$.
However, in Subsection~\ref{subsec:basket}, we deviate from this convention. 
We employ a step size corresponding to $\Delta t = d / 10^4$ during the training and validation process, where $d$ denotes the dimension of the asset price process.
For example, when $d = 200$ then $\Delta t = 1/50$.
This adjustment is only implemented for dimensions ranging from $d=100$ to $d=1{,}000$, while the step size always remains $\Delta t = 1/100$ for the testing dataset, as well as for the training and validation datasets in case $d < 100$.
In all simulations described below, we train shallow feedforward neural networks with a single hidden layer, using $\psi(x) = \tanh(x)$ as activation function.
The number of hidden nodes used for the various examples is reported beneath the tables.
The tables below present results for different choices of model parameters, presenting mean estimates, standard errors, relative standard errors as a percentage of the mean, and variance ratios.
The variance ratios were obtained by comparing the variance of the mean estimate from both a Monte Carlo and a Monte Carlo + importance sampling run, dividing the former by the estimate of the latter.

\subsection{Stratified sampling with feedforward neural networks}
In addition to importance sampling, stratified sampling is a widely used variance reduction method.
Stratified sampling involves constraining the fraction of trajectories sampled from specific subsets of the sample space.
To implement this method effectively, suitable subsets of the sample space need to be chosen, covering the entire sample space, along with the desired fractions of the overall sample falling within each subset.
It is important to note that stratification typically generates dependent sequences of random variables, which affects the calculation of the standard error and variance of the Monte Carlo estimator.
For further information on this approach, we refer to \cite{MR1999614}.

In \cite{MR1849001}, the authors investigate optimal importance sampling and stratification techniques for pricing path-dependent options.
Similar to \cite{MR2362149}, they employ large deviations techniques to determine asymptotically optimal drift adjustments in a discrete-time framework.
In order to overcome the computational effort that might be required to perform optimal stratification, the authors propose utilizing the drift identified for importance sampling to perform further stratification.
In the following examples, we will augment our results based on importance sampling with the stratified sampling approach.

More precisely, let us consider the estimation of $\mathbb{E}[F(X)]$.
After having discretized the time interval into m points, assume that $F(X)$ can be expressed as a function of $Z$, with $Z$ being a $m$-dimensional vector of independent standard normal variables.
If $f$ denotes the optimal element from Cameron--Martin space, sampled at $t_i$ as a vector, and appropriately re-scaled such that adjusting the drift of $M$ corresponds to adding $f$ to $Z$ in the discrete time case, then we want to sample $Z$ conditional on $f^{\top}Z \in A_{i}$, where $A_i$ denotes a stratum, which is a subset of the sample space.
In our case, $A_i$ is chosen to correspond to the interval between the $(i-1)/N-th$ and the $i/N-th$ quantile of the standard normal distribution, where $N$ denotes the number of strata.
We maintain an equal number of replications for each stratum.
For further details on simulating $Z$ conditional on $f^{\top}Z \in A_{i}$, we refer readers to Section 4 in \cite{MR1849001}.

In Subsections~\ref{subsec:business}-\ref{subsec:dyncor}, we extend our analysis beyond importance sampling by aditionally using the trained neural networks to implement stratified sampling.
By combining these two techniques, we demonstrate the significant potential for further variance reduction.
It is crucial to emphasize that using the optimal importance sampling drift for stratification may not always result in optimal stratification in general.
Furthermore, it is worth noting that the setting of \cite{MR1849001} is discrete in time.
There is ample scope to explore optimal stratified sampling in continuous time using neural networks.

\subsection{Changing business activity}\label{subsec:business}
Methods typically employed for importance sampling based on continuous stochastic processes for asset prices often assume that the dynamics of the asset price are governed by an SDE driven by a Brownian motion.
Here, we aim to deviate from the conventional framework where $\mu = \lambda$, and explore an example involving a time-changed Brownian motion.
It is important to note that in this case, the time-change directly affects the definition of the Cameron--Martin norm through the Lebesgue--Stieltjes measure $\mu$.
The utilization of a deterministic time-change can be interpreted as a means of modeling periods characterized by varying business activity, thus incorporating effects such as seasonality.
See \cite{MR3516141} for an example where this has been done.

Let us consider the asset $X$ governed by the dynamics $\mathrm{d}X_t = r X_t\, \mathrm{d}[M]_t + \sigma X_t\, \mathrm{d}M_t$, where $X_0 = x$ and $M = B_{C_t}$ for $B$ representing a standard Brownian motion.
Motivated by \cite{MR3516141}, we make the assumption that $C_t = \int_{0}^{t} \nu(s)\, \mathrm{d}s$, where the activity rate function $\nu$ takes the form
\begin{align}
    \nu(s) = 
    \begin{cases}
        1 + \kappa(s-0.2) / 0.1, & s \in [0.2, 0.3), \\
        1 + \kappa(0.4-s) / 0.1, & s \in [0.3, 0.4), \\
        1 + 2\kappa(s-0.6) / 0.1, & s \in [0.6, 0.7) \\
        1 + 2\kappa(0.8-s) / 0.1, & s \in [0.7, 0.8) \\
        1, & \mathrm{else},
    \end{cases}
\end{align}
where $\kappa$ denotes the level of business activity.
We further normalize $\nu$ such that $C_{1} = \int_{0}^{1}\nu(s)\, \mathrm{d}s = 1$.
In this case, $\mu$ is absolutely continuous with respect to $\lambda$ with Radon--Nikod\'{y}m density $\nu$, and $[M] = C$.

Figure~\ref{fig:sample_path} illustrates a representative trajectory of $X$ under the assumption of an activity rate function modeled by $\kappa = 10$.
The trajectory exhibits two distinct phases characterized by heightened volatility, which can be interpreted as periods of increased business activity.
Table~\ref{tab:changing_activity} below presents results obtained for various values of $\kappa$.
Note that the special case of $\kappa = 0$ corresponds to the classical Black--Scholes model.

\begin{figure}[htb]
\begin{center}
\includegraphics{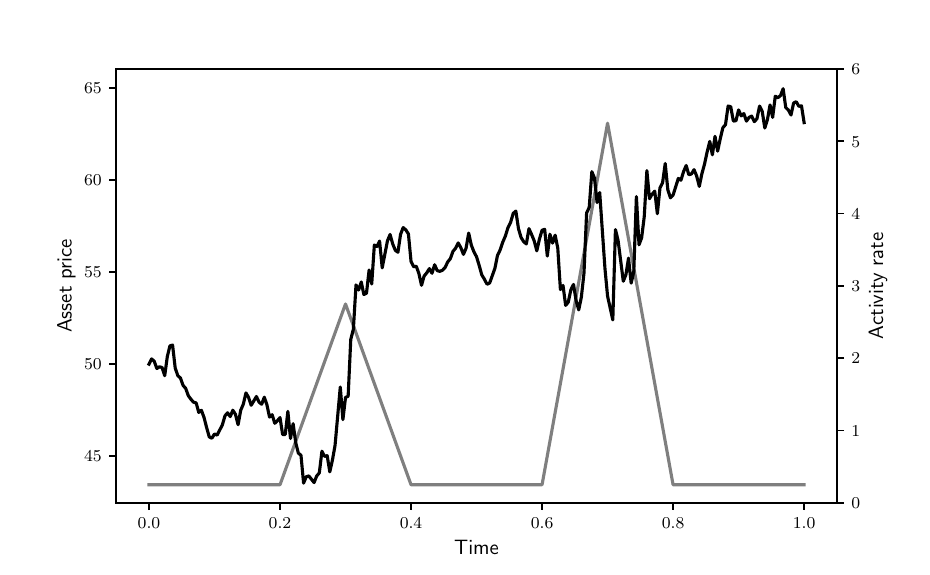}
\vspace*{-0.75cm}
\caption{Typical sample path for the model described above with $\kappa = 10$, along with the corresponding activity rate function $\nu$. Other model parameters are $X_{0} = 50$, $r = 0.05$ and $\sigma = 0.25$.}
\label{fig:sample_path}
\end{center}
\end{figure}

\begin{table}[htb]
    \centering
\begin{threeparttable}
    \caption{Variance ratios for different levels $\kappa$ of business activity.}
    \begin{tabular}{@{} ccccccc @{}}
        \toprule
        \textbf{Parameter} & \multicolumn{3}{c}{\textbf{Importance Sampling}} & \multicolumn{3}{c}{\textbf{IS and Stratification}} \\ 
        $\kappa$ & Mean & Std. err. & Var. ratio & Mean & Std. err. & Var. ratio\\
        \midrule
        0 & 5.945 & 0.019 (0.32 $\%$) & 129 & 5.9337 & 0.0018 (0.03 $\%$) & 14,525 \\
        1 & 4.675 & 0.015 (0.32 $\%$) & 154 & 4.6672 & 0.0023 (0.05 $\%$) & 6,714 \\
        2 & 3.987 & 0.013 (0.33 $\%$) & 167 & 3.9860 & 0.0026 (0.07 $\%$) & 4,191 \\
        5 & 3.053 & 0.010 (0.33 $\%$) & 207 & 3.0495 & 0.0016 (0.05 $\%$) & 8,139\\
        10 & 2.5286 & 0.0086 (0.34 $\%$) & 235 & 2.5304 & 0.0016 (0.06 $\%$) & 7,110 \\ 
        \bottomrule
    \end{tabular}
    \label{tab:changing_activity}
    \begin{tablenotes}\footnotesize
    \item Note: Option prices and standard errors are quoted in cents. Only significant digits are reported. Number of hidden nodes is 2. Other model parameters are $X_{0} = 50$, $r = 0.05$, $\sigma = 0.25$ and $K=70$.
    \end{tablenotes}
\end{threeparttable}
\end{table}

From Table~\ref{tab:changing_activity} it is evident that both importance sampling and the combined approach of importance- and stratified sampling exhibit substantial variance reduction across all values of $\kappa$.
Notably, the combination of importance- and stratified sampling demonstrates a remarkable enhancement in variance reduction compared to using importance sampling alone.

In \cite{MR2362149}, the authors study asymptotically optimal importance sampling in continuous time following a large deviations approach.
In Table 2 of their work, the authors present variance ratios for an arithmetic Asian call option within a Black--Scholes model across various values of volatility (sigma) and strike (K).
We refer to \cite[Section 5]{MR2362149} for details about the model and the selected parameters.
We replicated their Table 2 using neural networks to induce optimal measure changes and subsequently compared the obtained variance ratios.
On average, employing neural networks resulted in a 20\% increase in the variance ratio.
For instance, when considering a volatility of 30\% and a strike of 70, Guasoni \& Robertson report a variance ratio of 56, while our method yielded a variance ratio of 67.

In \cite{MR2451625}, the author studies importance sampling based on a least-squares optimization procedure.
The author presents variance ratios for various combinations of volatility $\sigma$ and strike $K$ in Table 6, specifically for an arithmetic Asian call option within a Black--Scholes model.
Additionally, the table includes variance ratios obtained using an adaptive Robbins--Monro procedure as proposed in \cite{A2004} for the same set of model parameters.
We replicated Table 6 in \cite{MR2451625} using our method and compared the resulting variance ratios.
As it turns out, our method yields average variance ratios that are 10\% and 95\% higher than the values reported by \cite{MR2451625} and \cite{A2004}, respectively.

Finally, in Table 7, Capriotti provides the results for a partial average Asian call option, as previously investigated in \cite{899767}.
For detailed definitions of the models and parameters utilized in the simulations, we refer to Section 5 in \cite{MR2451625}.
We implemented this particular model using our method.
On average, our approach yielded variance ratios that were 10\% and 50\% higher than the values reported by \cite{MR2451625} and \cite{899767}, respectively.

\subsection{Multiple rare events}\label{subsec:knockout}
In \cite{MR1459268}, the authors emphasize that rare events often consist of unions of meaningful events that represent different ways in which the rare event can occur.
In this context, we aim to examine an example where the rare event is formed by the intersection of two rare events.
We will also discuss the case of the union of rare events later on.
An illustrative example is provided by knock-out call options, which exhibit a classical scenario where the payoff is discontinuous with respect to the asset price trajectory.
In this case, two potentially rare events can arise: (1) the arithmetic average $\bar{X}_{t} = \int_{0}^{t}\langle w, X_{s}\rangle\, \mathrm{d}s$ must be above the strike at the terminal time, and (2) the option must not be knocked out.

Consider an asset price $X$ that follows a classical Black--Scholes model, characterized by the SDE $\mathrm{d}X_t = r X_t\, \mathrm{d}t + \sigma X_t\, \mathrm{d}B_t$, with an initial value of $X_0 = x$, where $B$ denotes a Brownian motion.
In contrast to the previous subsection, we introduce knock-out barriers $L, U$ that satisfy $0 < L < X_0 < K < U$.
The option is considered knocked out if the arithmetic average $\bar{X}_{t}$ breaches either of the two barriers at any given point in time before or at maturity.
In our example, there is a delicate balance which needs to be achieved between giving the asset a positive drift such that $\bar{X}_{1} > K$ with sufficiently high probability, and making sure that the option is not knocked out.

In Figure~\ref{fig:learning_evolution} we provide a graphical representation of the learning process of the neural network.
On a fixed dataset, we calculate the probability of the arithmetic average ending up above the strike K, the probability of it remaining between the knock-out barriers at all times, as well as the variance ratio after each epoch that the neural network was trained.
Table~\ref{tab:multiple_events} provides a comprehensive overview of the variance ratios corresponding to different strike prices K and upper knock-out barriers U.

\begin{figure}[htb]
\begin{center}
\includegraphics{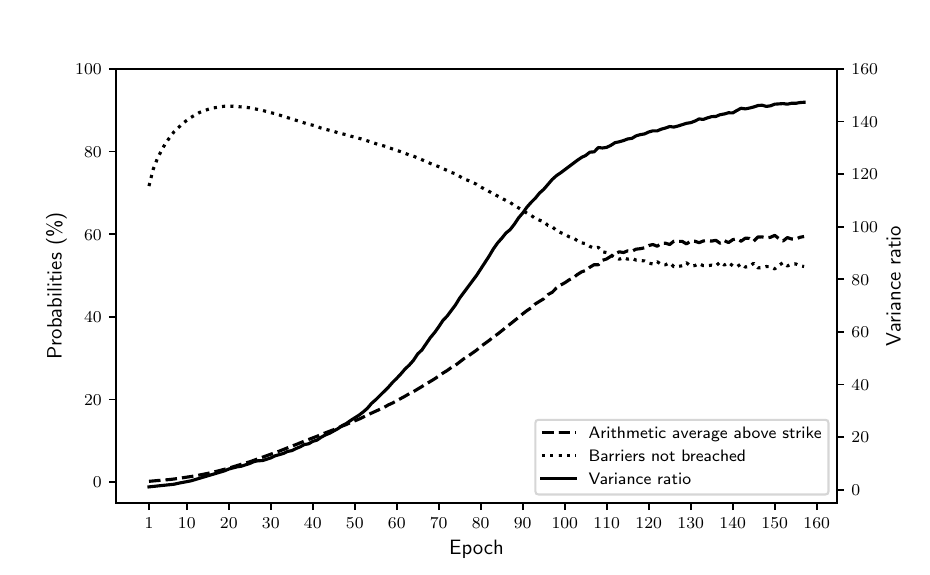}
\vspace*{-0.75cm}
\caption{A graphical representation of the learning process of the neural network.}
\label{fig:learning_evolution}
\end{center}
\end{figure}

\begin{table}[htb]
    \centering
\begin{threeparttable}
    \caption{Variance ratios for different values of strike $K$, volatility $\sigma$ and upper knock-out barrier $U$.}
    \begin{tabular}{@{} ccccccccc @{}}
        \toprule
        \multicolumn{3}{c}{\textbf{Parameters}} & \multicolumn{3}{c}{\textbf{Importance Sampling}} & \multicolumn{3}{c}{\textbf{IS and Stratification}} \\ 
        $K$ & $\sigma$ & $U$ & Mean & Std. err. & Var. ratio & Mean & Std. err. & Var. ratio\\
        \midrule
        \multirow{6}{*}{60} & \multirow{3}{*}{0.2} & 70 & 0.763 & 0.013 (1.70 $\%$) & 7 & 0.779 & 0.012 (1.54 $\%$) & 7 \\
         & & 80 & 12.605 & 0.067 (0.53 $\%$) & 10 & 12.588 & 0.049 (0.39 $\%$) & 18 \\
         & & 90 & 22.607 & 0.078 (0.35 $\%$) & 18 & 22.673 & 0.032 (0.14 $\%$) & 112 \\
         & \multirow{3}{*}{0.3} & 70 & 0.1826 & 0.0082 (4.49 $\%$) & 3 & 0.1816 & 0.0080 (4.41 $\%$) & 3 \\
         & & 80 & 13.65 & 0.12 (0.88 $\%$) & 4 & 13.60 & 0.11 (0.81 $\%$) & 4 \\
         & & 90 & 42.86 & 0.22 (0.51 $\%$) & 5 & 42.89 & 0.16 (0.37 $\%$) & 9\\
        \multirow{6}{*}{70} & \multirow{3}{*}{0.2} & 80 & 0.000775 & 0.000041 (5.29 $\%$) & 356 & 0.000760 & 0.000041 (5.39 $\%$) & 357 \\
         & & 90 & 0.1917 & 0.0018 (0.94 $\%$) & 144 & 0.1921 & 0.0016 (0.83 $\%$) & 189 \\
         & & 100 & 0.6473 & 0.0035 (0.54 $\%$) & 203 & 0.6449 & 0.0021 (0.33 $\%$) & 537 \\
         & \multirow{3}{*}{0.3} & 80 & 0.00070 & 0.00014 (20 $\%$) & 36 & 0.00068 & 0.00014 (20.59 $\%$) & 37 \\
         & & 90 & 0.724 & 0.011 (1.52 $\%$) & 17 & 0.733 & 0.010 (1.36 $\%$) & 18 \\
         & & 100 & 4.513 & 0.034 (0.75 $\%$) & 18 & 4.507 & 0.029 (0.64 $\%$) & 25 \\
        \bottomrule
    \end{tabular}
    \label{tab:multiple_events}
    \begin{tablenotes}\footnotesize
    \item Note: Option prices and standard errors are quoted in cents. Only significant digits are reported. Number of hidden nodes is 2. Other model parameters are $X_{0}=50$, $r = 0.05$ and $L=40$.
    \end{tablenotes}
\end{threeparttable}
\end{table}

Figure~\ref{fig:learning_evolution} highlights an interesting observation: increasing the variance ratio does not simply result from an indiscriminate rise in the probabilities of both rare events occurring.
Instead, it becomes evident that a delicate balance between the occurrence of both rare events is crucial to increase the variance ratio.
As demonstrated in Figure~\ref{fig:learning_evolution}, neural networks exhibit the capability to learn and navigate this balancing act.
Table~\ref{tab:multiple_events} shows again that the neural network-induced change of measure is able to reduce the variance to varying degrees.
We note that compared to the example of the previous subsection, adding stratification does not yield such a dramatic increase in variance ratio, however the improvement is still notable in most cases.

The model which we studied in this subsection has also been explored in \cite{MR1849001}.
In their paper, the authors report in Table 5.2 variance ratios for different values of the volatility, the strike as well as the knock-out barrier $U$ (setting the lower knock-out barrier $L$ to zero).
In contrast to our model, the knock-out occurs in case the asset price breaches the knock-out barrier $U$ at terminal time, i.e. in case $X_{1} > U$.
We replicated their model using our methodology and compared the achieved variance ratios.
Our method on average achieved 20\% higher variance ratios for the case of importance sampling without stratification.
However, when incorporating stratified sampling, our method on average achieved variance ratios that were 10\% lower compared to those reported in \cite[Table 5.2]{MR1849001}.
Note that the setting of \cite{MR1849001} is discrete in time, and that the authors consider asymptotically optimal drift adjustments.
These findings suggest that there might be ample scope to further investigate optimal neural-network induced stratification for continuous-time models.

Let us now revisit the method proposed by \cite{MR2451625}.
In Section 5 of his work, the author presents an example in the form of a European straddle: $F(X) = (X_1 - K)^+ + (K-X_1)^+$.
Capriotti argues that in this case, the optimal sampling density would need to be bi-modal, a property that cannot be effectively captured by a normal distribution.
As we attempted to implement this example, it became evident that the neural network struggled to determine the appropriate drift direction.
This particular instance highlights the challenges associated with relying solely on drift adjustments for variance reduction.
It serves as an example where the rare event can be characterized as the union of two events, shedding light on the limitations of such an approach.

\subsection{Dynamic correlation}\label{subsec:dyncor}
The generality of our paper builds on the decomposition $[M] = \int \pi(s)\mu(\mathrm{d}s)$.
While Subsection~\ref{subsec:business} deviates from the conventional Brownian setting where $\mu = \lambda$, we also aim to present an example that diverges from the typical scenario examined in the existing literature, where $\pi \equiv \mathrm{id}$, representing the identity matrix.
To this end, we consider a Heston model with a dynamic variance-covariance matrix.

We assume that the price process $X$ follows the dynamics given by $\mathrm{d}X_{t} = r X_t \mathrm{d}t + \sqrt{V_t}X_t\, \mathrm{d}B_t$.
The instantaneous variance $V$ follows CIR-type dynamics described by $\mathrm{d}V_t = \kappa(\theta - V_t)\, \mathrm{d}t + \xi\sqrt{V_t}\,\mathrm{d}W_t$.
$B$ and $W$ are correlated Brownian motions related through $\mathrm{d}[B, W]_t = \rho(t)\, \mathrm{d}t$, where the correlation function takes the form $\rho(x) = \bar{\rho} + \bar{\rho}A\sin(2\pi fx)$.
In other words, we deviate from the constant correlations regime by means of the multiple of a sine wave with amplitude A and frequency f.
We present the results for various combinations of amplitude and frequency choices in Table~\ref{tab:dynamic_correlation} below.

\begin{table}[htb]
    \centering
\begin{threeparttable}
    \caption{Variance ratios for different values of amplitude $A$ frequency $f$.}
    \begin{tabular}{@{} cccccccc @{}}
        \toprule
        \multicolumn{2}{c}{\textbf{Parameters}} & \multicolumn{3}{c}{\textbf{Importance Sampling}} & \multicolumn{3}{c}{\textbf{IS and Stratification}} \\ 
        $A$ & $f$ & Mean & Std. err. & Var. ratio & Mean & Std. err. & Var. ratio\\
        \midrule
        0 & 0 & 2.2145 & 0.0085 (0.38 $\%$) & 171 & 2.2308 & 0.0063 (0.28 $\%$) & 311\\
        \multirow{3}{*}{0.2} & 1 & 1.9378 & 0.0076 (0.39 $\%$) & 178 & 1.9517 & 0.0058 (0.30 $\%$) & 304 \\
         & 2 & 2.0807 & 0.0082 (0.39 $\%$) & 171 & 2.0938 & 0.0062 (0.30 $\%$) & 297 \\
         & 4 & 2.1498 & 0.0085 (0.40 $\%$) & 165 & 2.1651 & 0.0065 (0.30 $\%$) & 283 \\
        \multirow{3}{*}{0.5} & 1 & 1.5544 & 0.0062 (0.40 $\%$) & 203 & 1.5666 & 0.0048 (0.31 $\%$) & 338 \\
         & 2 & 1.8926 & 0.0077 (0.41 $\%$) & 173 & 1.9037 & 0.0059 (0.31 $\%$) & 289\\
         & 4 & 2.0553 & 0.0081 (0.38 $\%$) & 168 & 2.0691 & 0.0062 (0.30 $\%$) & 289\\
        \multirow{3}{*}{1} & 1 & 1.0147 & 0.0041 (0.39 $\%$) & 268 & 1.0239 & 0.0032 (0.31 $\%$) & 443 \\
         & 2 & 1.6117 & 0.0064 (0.40 $\%$) & 202 & 1.6230 & 0.0047 (0.29 $\%$) & 369\\
         & 4 & 1.9048 & 0.0078 (0.41 $\%$) & 165 & 1.9160 & 0.0060 (0.31 $\%$) & 277\\
        \bottomrule
    \end{tabular}
    \label{tab:dynamic_correlation}
    \begin{tablenotes}\footnotesize
    \item Note: Option prices and standard errors are quoted in cents. Only significant digits are reported. Number of hidden nodes is 5. Other model parameters are $X_{0}=50$, $r=0.05$, $V_{0}=0.04$, $\kappa=2$, $\theta=0.09$, $\xi=0.2$, $\bar{\rho}=-0.5$ and $K=70$.
    \end{tablenotes}
\end{threeparttable}
\end{table}

\subsection{Basket option}\label{subsec:basket}

So far, we have presented results in scenarios with low dimensions.
However, the multi-dimensional formulation of our setting suggests investigating whether we can achieve satisfactory levels of variance reduction for higher-dimensional models.
Inspired by \cite{MR2569805}, we study a multi-dimensional Black--Scholes model.

Consider the $d$-dimensional asset price $X$ governed by the SDE $\mathrm{d}X_t = r \odot X_t\, \mathrm{d}t + X_t \odot \mathrm{d}M_t$, where $M_t = \Sigma B_t$ represents a $d$-dimensional standard Brownian motion $B_t$ with variance-covariance matrix $\Sigma\Sigma^{\top}$.
We sample the initial value $X_{0}$ of $X$ uniformly from the range of 10 to 200.
Moreover, we sample the vector $r$ of appreciation rates and the vector $\sigma$ of volatilities uniformly between 1\% and 9\% as well as 10\% and 30\%, respectively.
The weight vector $w$ is then computed as $w_i = r_i / \sigma_i^2$ and further normalized to sum to 1.

To define the matrix $\Sigma\Sigma^{\top}$, it is necessary to specify the correlation matrix.
In order to ensure a valid correlation matrix that remains positive definite even in high dimensions, we adopt the approach proposed in \cite{MR1799307}.
Firstly, we sample a d-dimensional vector $y$ uniformly between $0$ and $1$.
We then re-scale the vector $y$ such that the sum of its elements equals the dimension $d$.
The algorithm proposed in \cite{MR1799307} then generates a valid correlation matrix, whose eigenvalues correspond to the values in the re-scaled vector $y$.
Finally, we still need to specify the strike.
To this end, we sample $10^4$ observations of the arithmetic average $\bar{X}$ at maturity, and then choose the strike K to approximately be above the 90th percentile of the distribution of $\bar{X}_{1}$.
Note that the choice of K is highly dependent on the previously sampled parameters.

\begin{table}[htb]
    \centering
\begin{threeparttable}
    \caption{Variance ratios for different dimensions $d$.}
    \begin{tabular}{@{} ccccccc @{}}
        \toprule
        \multicolumn{2}{c}{\textbf{Parameters}} &  \multicolumn{5}{c}{\textbf{Importance Sampling}} \\ 
        $d$ & $K$ & Mean & Std. err. & Var. ratio & $\mathbb{P}(F(X) > 0)$ & $\mathbb{Q}(F(X) > 0)$\\
        \midrule
        10 & 88 & 3.7557 & 0.0118 (0.31 $\%$) & 62 & 3.24 $\%$ & 70.83 $\%$ \\
        20 & 115 & 1.3252 & 0.0049 (0.37 $\%$) & 124 & 1.22 $\%$ & 65.42 $\%$ \\
        50 & 126 & 6.4457 & 0.0189 (2.93 $\%$) & 28 & 6.96 $\%$ & 72.36 $\%$ \\
        100 & 106 & 1.6995 & 0.0056 (0.33 $\%$) & 54 & 3.36 $\%$ & 70.69 $\%$ \\
        200 & 112 & 2.1373 & 0.0067 (0.31 $\%$) & 36 & 5.17 $\%$ & 72.04 $\%$ \\
        500 & 110 & 1.1352 & 0.0042 (0.37 $\%$) & 30 & 4.46 $\%$ & 74.99 $\%$ \\
        1,000 & 110 & 2.327 & 0.011 (0.47 $\%$) & 6 & 10.87 $\%$ & 84.72 $\%$ \\ 
        \bottomrule
    \end{tabular}
    \label{tab:basket_option}
    \begin{tablenotes}\footnotesize
    \item Note: Option prices and standard errors are quoted in cents. Only significant digits are reported. Number of hidden nodes corresponds to the dimension $d$.
    $\mathbb{P}(F(X) > 0)$ represents the proportion of trajectories in the test dataset where the payoff is positive, without incorporating a drift adjustment. $\mathbb{Q}(F(X) > 0)$ denotes the proportion of trajectories in the test dataset where the payoff is positive under the drift adjustment.
    \end{tablenotes}
\end{threeparttable}
\end{table}

Table~\ref{tab:basket_option} presents variance ratios obtained for various dimensions $d$ ranging from $d=10$ up to $d=1{,}000$.
Moreover, we also compared our method to the approach presented in \cite{MR2569805}.
In their study, the authors considered the 40-dimensional case, and all volatilities, appreciation rates, and weights were chosen uniformly across all assets in the basket. 
We refer to Section 3 in \cite{MR2569805} for further details about the model as well as model parameters in their Table 1.
As it turns out, our method achieves variance ratios that are, on average, comparable to those reported by \cite{MR2569805}.
It is important to note that the strikes which were chosen are relatively close to the initial value.
In previous examples, we can observe that the obtained variance ratios tend to grow as the strike in increased.
In contrast to \cite{MR2569805}, we present in Table \ref{tab:basket_option} results for dimension up to $d=1{,}000$, which we believe is a distinctive aspect worth highlighting.

\section{Conclusion}\label{sec:conclusion}

In this paper, we presented a method that uses feedforward neural networks for the purpose of reducing the variance of Monte Carlo estimators.
To this end, we studied the class of Gaussian measures which are induced by vector-valued continuous local martingales with deterministic covariation.
Building on the theory of vector stochastic calculus, we identified the Cameron--Martin spaces of those measures, and proved universal approximation theorems that establish, up to an isometry, topological density of feedforward neural networks in these spaces.
We then applied our results to a classical importance sampling approach which seeks for an optimal drift adjustment of the processes which are driving the asset prices.
Finally, we presented the results of a numerical study, which clearly indicate the potential of this approach.

Let us also remark that our approach comes with several challenges.
In principle, one needs to train separate feedforward networks for different models and model parameters.
In light of Remark~\ref{rem:sen}, one could train one feedforward network to minimize a weighted average standard error over several models or model parameters.
Complex, high-dimensional models might call for the need of using complex neural network architectures in order to achieve a sufficient variance reduction, which might lead to a considerable computational effort for training the feedforward networks.
On the other hand, the competing approaches \cite{MR2362149,MR2565852} involve having to solve a potentially complex, high-dimensional variational problem, whose solution might involve a numerical procedure which might induce a considerable computational effort, too.
Finally, while Theorem~\ref{thm:densityapprox} and the simulations of Section~\ref{sec:simulation} show that one can obtain a sufficient variance reduction with shallow feedforward networks, the model-dependent choice of optimal architecture has not been discussed at all, which highlights the potential for a further improvement of this method.

\subsection{Outlook on further research}\label{subsec:outlook}

Throughout this paper, we assumed for the process $M$ to be a continuous local martingale with deterministic covariation, such that it is a Gaussian process and induces a Gaussian measure on path space.
Clearly, there are Gaussian processes which cannot be local martingales, e.g. fractional Brownian motion with Hurst index $\neq 1/2$.
In line with Remark~\ref{rem:fbm}, Section~\ref{sec:CMspace} can be extended to the study of multivariate Volterra type Gaussian processes of the form $\tilde{M}_{t} = \int_{0}^{u} k(t,s)\, \mathrm{d}M_{s}$ with a matrix-valued kernel $k$.
While Section~\ref{sec:IS} makes use of the semimartingale property of $M$ by applying Girsanov's theorem and studying convergence of stochastic exponentials, the Cameron--Martin theorem (see Theorem~\ref{app:CMthm}) can still be applied to the Gaussian measure that is induced by $\tilde{M}$ on path space.
These considerations in particular motivate the study of a refined class of multivariate (fractional) stochastic volatility models, their small-time asymptotics as well as importance sampling methods for the numerical evaluation of derivatives for these models, which is subject to a follow-up work.

In Section~\ref{sec:IS}, we required the random payoff $F(X)$ to be $\Fcal_{u}$-measurable and $L^{p}$-integrable for some $p > 2$.
However, the properties that we imposed on the process $X$ were rather weak.
In particular, Theorem~\ref{thm:densityapprox} only considered $F(X)$ as a random variable, where we used the SDE for $X$ only when performing a measure change and applying Girsanov's theorem in order to understand the semimartingale decomposition of $X$ under a new sampling measure.
Therefore, the methods from Section~\ref{sec:IS} should extend to the case where $X$ is the solution to a McKean--Vlasov SDE, provided that we understand how the dynamics of the process change under a change of measure.
We leave it to a follow-up work to combine our methods with ideas from \cite{dRST2018}, which should lead to a tractable importance sampling framework for the evaluation of derivatives on solutions to McKean--Vlasov SDEs under weaker assumptions then those imposed on \cite{dRST2018}.

The setting of this paper does naturally apply to the evaluation of European options and asset price processes with continuous paths. 
More generally, reducing the standard error of Monte Carlo estimators with neural networks when pricing American options based on the popular algorithm proposed by Longstaff \& Schwartz (cf. \cite{MR1932380,LS2021}) and models with jumps, very much in the spirit of \cite{MR4035022, MR2475940}, provides another interesting challenge that is reserved for follow-up work.

Finally, the measure changes which we studied in Section~\ref{sec:IS} were induced by density processes of the form $\mathcal{E}(f^{\top}\sbullet[.75]M)$, where $f \in \Lambda^{2}$ is a deterministic function.
The reason why we did not consider the more general class of processes $U \in L^{2}(M)$ for which $\mathcal{E}(U^{\top}\sbullet[.75]M)$ is a martingale is twofold.
While the proof of Theorem~\ref{thm:densityapprox} would become more involved, one would need to use neural network architectures which are more complex then the ones which we discussed in Section~\ref{sec:ffn}.
For this reason, we argue that the problem of considering deterministic functions $f \in \Lambda^{2}$ provides a tractable, numerically efficient method to reduce the variance in Monte Carlo simulations, and reserve the extension to processes $U \in L^{2}(M)$ and their approximation with neural networks for future work.

\section*{Acknowledgements}
Aleksandar Arandjelovi{\'c} acknowledges support from the  International Cotutelle Macquarie University Research Excellence Scholarship.

\appendix

\section{Gaussian measures}\label{ap:GM}

In this part of the appendix, we collect for the readers' convenience some classical definitions and results about Gaussian measures, for which we mostly rely on the excellent monographs \cite{MR1642391,MR3024389,MR2760872}.
Let $(E, \|\4\cdot\4\|_{E})$ denote a real separable Banach space, $\gamma$ a Borel probability measure on $E$ and $M = (M_{t})_{t \in T}$ an $\re^{d}$-valued process on a probability space\/ $(\Omega, \Fcal, \PP)$.
Given $h \in E$, we further denote by $\gamma_{h}$ the measure on $E$ that is induced by the translation $E \ni x \mapsto x + h$.

\begin{definition}\label{app:GM}
The measure\/ $\gamma$ is called Gaussian, if each\/ $f \in E^{*}$ induces a Gaussian distribution on\/ $(\re, \B_{\re})$.
The process\/ $M = (M_{t})_{t \in T}$ is called Gaussian, if\/ $(M_{t_{i}})_{i=1}^{n}$ is jointly Gaussian for each\/ $n \in \na$ and\/ $0 \le t_{1} < t_{2} < \hdots < t_{n} \le u$.
\end{definition}

A Gaussian measure $\gamma$ is centered, if each $f \in E^{*}$ induces a centered Gaussian distribution.
Similarly, a Gaussian process $M = (M_{t})_{t \in T}$ is centered, if $(M_{t_{i}})_{i=1}^{n}$ is jointly centered Gaussian for each\/ $n \in \na$ and\/ $0 \le t_{1} < t_{2} < \hdots < t_{n} \le u$.
Since Section \ref{sec:CMspace} only considers centered Gaussian processes and measures, we will from now on restrict to this special case.

In the context of Definition \ref{app:GM}, we have the natural embedding $j \from E^{*} \to E_{\gamma}^{*}$, where $E_{\gamma}^{*}$ denotes the reproducing kernel Hilbert space of $\gamma$, which is defined as the closure of $E^{*}$ in $L^{2}(\gamma)$.
We further define the covariance operator of $\gamma$ by the map
\begin{align*}
R_{\gamma} \from E^{*} \to (E^{*})' \colon f \mapsto \Bigl( g \mapsto \int_{E}f(x)g(x)\, \gamma(\mathrm{d}x) \Bigr),
\end{align*}
and implicitly consider its extension to $E_{\gamma}^{*}$, i.e. $R_{\gamma} \from E_{\gamma}^{*} \to (E^{*})'$.

Given $f \in E_{\gamma}^{*}$, note that $R_{\gamma}(f) \from E^{*} \to \re$ is a linear operator.
If we endow $E^{*}$ with the Mackey topology, then \cite[Lemma 3.2.1]{MR1642391} shows that $R_{\gamma}(f)$ is continuous.
Mackey's theorem (cf. \cite[Theorem A 1.1]{MR1642391}) then yields the existence of $x_{f} \in E$, such that $R_{\gamma}(f)(g) = g(x_{f})$ for each $g \in E^{*}$.
We then also denote by $R_{\gamma}$ the map $E_{\gamma}^{*} \ni f \mapsto x_{f}$.  

\begin{definition}[Cameron--Martin space, cf. {\cite[Lemma 2.4.1]{MR1642391}}]\label{app:CMspace}
The Cameron--Martin space\/ $H(\gamma)$ of a centered Gaussian measure\/ $\gamma$ is defined as the range of\/ $R_{\gamma}$ in $E$, i.e. $H(\gamma) \coloneqq R_{\gamma}(E_{\gamma}^{*}) \subset E$.
We equip\/ $H(\gamma)$ with the inner product
\begin{align*}
\langle h, k \rangle_{H(\gamma)} \coloneqq \langle \hat{h}, \hat{k} \rangle_{L^{2}(\gamma)} = \int_{E} \hat{h}(x)\hat{k}(x)\, \gamma(\mathrm{d}x),
\quad h,k \in H(\gamma),
\end{align*}
where $h = R_{\gamma}(\hat{h})$\/ and\/ $k = R_{\gamma}(\hat{k})$ for some\/ $\hat{h}, \hat{k} \in E_{\gamma}^{*}$.
\end{definition}

The space $(H(\gamma), \langle \cdot, \cdot \rangle_{H(\gamma)})$ is a real separable Hilbert space that is continuously embedded into $E$ (cf. \cite[Proposition 2.4.6 and Theorem 3.2.7]{MR1642391}).
Moreover, \cite[Theorem 2.4.7]{MR1642391} shows that $H(\gamma)$ is of $\gamma$-measure zero, whenever $E_{\gamma}^{*}$ is infinite dimensional.

\begin{remark}[Degenerate case]\label{app:degenerate}
Given a centered Gaussian measure $\gamma$ on $E$, the topological support of $\gamma$ is defined as the smallest closed subset $S \subset E$ with $\gamma(E \setminus S) = 0$, and is given by $\overbar{H(\gamma)}$, where the closure is taken in $E$ (cf. \cite[Theorem 3.6.1]{MR1642391}).
We call $\gamma$ nondegenerate, if $\overbar{H(\gamma)} = E$, or equivalently, if $H(\gamma)$ is densely embedded into $E$.
If $\overbar{H(\gamma)}$ is a strict subspace of $E$, then we call $\gamma$ degenerate.
\end{remark}

\begin{remark}
If $\gamma$ and $\tilde{\gamma}$ are two centered Gaussian measures on $E$, such that $H(\gamma) = H(\tilde{\gamma})$ and $\|\4\cdot\4\|_{H(\gamma)} = \|\4\cdot\4\|_{H(\tilde{\gamma})}$, then $\gamma$ and $\tilde{\gamma}$ coincide (cf. \cite[Corollary 3.2.6]{MR1642391}).
Moreover, if $E$ is continuously and linearly embedded into another real separable Banach space $\tilde{E}$ with embedding $i$ and induced Gaussian measure $\nu = \gamma \circ i^{-1}$, then $\tilde{E} \supset H(\nu) = i(H(\gamma))$ (cf. \cite[Lemma 3.2.2]{MR1642391}).
\end{remark}

The Cameron--Martin space has another useful characterization, which is stated in the following theorem.

\begin{theorem}[Cameron--Martin, cf. {\cite[Theorem 2.4.5]{MR1642391} and \cite[Theorem 1]{MR10346}}]\label{app:CMthm}
Given a centered Gaussian measure\/ $\gamma$ on $E$ and\/ $h \in E$, the measures\/ $\gamma$ and\/ $\gamma_{h}$ are equivalent precisely when\/ $h \in H(\gamma)$, and singular otherwise.
In particular,
\begin{align*}
H(\gamma) = R_{\gamma}(E_{\gamma}^{*}) = \bigl\{h \in E\, |\, \gamma_{h} \sim \gamma\bigr\}.
\end{align*}
\end{theorem}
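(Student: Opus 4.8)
The plan is to prove the two implications of the asserted equivalence separately; the displayed identity then follows at once, since $H(\gamma)=R_{\gamma}(E_{\gamma}^{*})$ is Definition \ref{app:CMspace} and the two implications say precisely that $\gamma_{h}\sim\gamma$ holds exactly for $h\in H(\gamma)$. For $h\in H(\gamma)$ I would exhibit the density of $\gamma_{h}$ with respect to $\gamma$ explicitly and identify it via characteristic functionals; for $h\notin H(\gamma)$ I would combine the variational description of $H(\gamma)$ with a Borel--Cantelli argument to produce a Borel set carrying all of $\gamma$ but none of $\gamma_{h}$.

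\emph{Sufficiency.} Suppose $h=R_{\gamma}(\hat h)$ with $\hat h\in E_{\gamma}^{*}\subset L^{2}(\gamma)$, and set
\begin{align*}
\rho_{h}(x)\coloneqq\exp\Bigl(\hat h(x)-\tfrac{1}{2}\|h\|_{H(\gamma)}^{2}\Bigr),\qquad x\in E.
\end{align*}
I would show $\rho_{h}\cdot\gamma=\gamma_{h}$ by comparing characteristic functionals. For $\ell\in E^{*}$, writing $\sigma_{\ell}^{2}=\int_{E}\ell^{2}\,\mathrm{d}\gamma$, one has $\int_{E}\mathrm{e}^{\mathrm{i}\ell(x)}\,\gamma_{h}(\mathrm{d}x)=\mathrm{e}^{\mathrm{i}\ell(h)}\mathrm{e}^{-\sigma_{\ell}^{2}/2}$ because $\ell$ is centered Gaussian under $\gamma$; and $(\ell,\hat h)$ is jointly Gaussian under $\gamma$ (joint Gaussianity passing to $L^{2}(\gamma)$-limits of $E^{*}$-functionals) with $\Cov(\ell,\hat h)=\int_{E}\ell\hat h\,\mathrm{d}\gamma=\ell(h)$ and $\operatorname{Var}(\hat h)=\|h\|_{H(\gamma)}^{2}$ by Definition \ref{app:CMspace}, so applying the Gaussian identity $\mathbb{E}_{\gamma}[\mathrm{e}^{W}]=\mathrm{e}^{\frac{1}{2}\mathbb{E}_{\gamma}[W^{2}]}$ with $W=\hat h+\mathrm{i}\ell$ yields $\int_{E}\mathrm{e}^{\mathrm{i}\ell(x)}\rho_{h}(x)\,\gamma(\mathrm{d}x)=\mathrm{e}^{-\sigma_{\ell}^{2}/2+\mathrm{i}\ell(h)}$, the same value; taking $\ell=0$ gives $\int_{E}\rho_{h}\,\mathrm{d}\gamma=1$. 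Since $E$ is a separable Banach space, its Borel and cylindrical $\sigma$-algebras coincide and a Borel probability measure on $E$ is determined by its characteristic functional, so $\rho_{h}\cdot\gamma=\gamma_{h}$; as $\rho_{h}>0$ everywhere, $\gamma$ and $\gamma_{h}$ are equivalent.

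\emph{Necessity.} Suppose $h\notin H(\gamma)$. I would first record the variational description $H(\gamma)=\{h\in E\colon|h|_{*}<\infty\}$ with $|h|_{*}\coloneqq\sup\{\ell(h)\colon\ell\in E^{*},\ \|\ell\|_{L^{2}(\gamma)}\le1\}$: if $|h|_{*}<\infty$ then $\ell\mapsto\ell(h)$ is $\|\4\cdot\4\|_{L^{2}(\gamma)}$-bounded on $E^{*}$, hence extends to $E_{\gamma}^{*}$, and Fr\'echet--Riesz yields $\hat h\in E_{\gamma}^{*}$ with $\ell(h)=\int_{E}\ell\hat h\,\mathrm{d}\gamma=\ell(R_{\gamma}(\hat h))$ for all $\ell\in E^{*}$, so $h=R_{\gamma}(\hat h)\in H(\gamma)$ because $E^{*}$ separates points of $E$; the reverse inclusion is immediate from Definition \ref{app:CMspace}. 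Hence $|h|_{*}=\infty$, so there are $\ell_{n}\in E^{*}$ with $\|\ell_{n}\|_{L^{2}(\gamma)}\le1$ and $c_{n}\coloneqq\ell_{n}(h)\to\infty$, and after passing to a subsequence we may assume $\sum_{n}\mathrm{e}^{-c_{n}^{2}/8}<\infty$. Put $A_{n}\coloneqq\{x\in E\colon\ell_{n}(x)>c_{n}/2\}$, a Borel set. Since $\ell_{n}$ is centered Gaussian with variance at most $1$ under $\gamma$, the Gaussian tail bound gives $\gamma(A_{n})\le\mathrm{e}^{-c_{n}^{2}/8}$, hence $\sum_{n}\gamma(A_{n})<\infty$ and $\gamma(\limsup_{n}A_{n})=0$ by Borel--Cantelli; and from $\ell_{n}(x+h)=\ell_{n}(x)+c_{n}$ we get $\gamma_{h}(A_{n})=\gamma\bigl(\{\ell_{n}>-c_{n}/2\}\bigr)\ge1-\mathrm{e}^{-c_{n}^{2}/8}$, so reverse Fatou yields $\gamma_{h}(\limsup_{n}A_{n})\ge\limsup_{n}\gamma_{h}(A_{n})=1$. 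Thus $B\coloneqq\limsup_{n}A_{n}$ has $\gamma(B)=0$ and $\gamma_{h}(B)=1$, i.e. $\gamma\perp\gamma_{h}$.

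Combining the two parts shows $\{h\in E\colon\gamma_{h}\sim\gamma\}=H(\gamma)$, which together with $H(\gamma)=R_{\gamma}(E_{\gamma}^{*})$ is the asserted identity. I expect the necessity direction to be the delicate part: it rests on the variational description of $H(\gamma)$ (which in turn leans on the Mackey-topology construction of $R_{\gamma}$ recalled in this appendix) and on the measurability bookkeeping for the sets $A_{n}$ and for the translated laws; the sufficiency direction is, modulo the joint Gaussianity of $(\ell,\hat h)$ under $\gamma$, just the characteristic-functional computation displayed above.
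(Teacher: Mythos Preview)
Your proof is correct and follows the standard route found in the references the paper cites (Bogachev's monograph in particular): the Cameron--Martin density for the sufficiency direction, and the variational description $H(\gamma)=\{h\in E:\sup_{\|\ell\|_{L^{2}(\gamma)}\le1}\ell(h)<\infty\}$ combined with a Gaussian tail / Borel--Cantelli argument for the necessity direction. Note, however, that the paper does \emph{not} supply its own proof of this theorem: it is recorded in Appendix~\ref{ap:GM} as a classical fact with pointers to \cite[Theorem~2.4.5]{MR1642391} and \cite[Theorem~1]{MR10346}, and is used as a black box in the body of the paper. So there is no in-paper argument to compare yours against; what you have written is essentially a compressed version of Bogachev's proof, and the delicate points you flag (joint Gaussianity of $(\ell,\hat h)$ under $\gamma$, well-definedness of $\ell\mapsto\ell(h)$ on $E_{\gamma}^{*}$ via the Mackey-topology construction of $R_{\gamma}$) are exactly the ones that need care.
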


Whenever $\gamma$ is Gaussian, the measure $\gamma_{h}$ is Gaussian for each $h \in E$ (cf. \cite[Lemma 2.2.2]{MR1642391}).
Consequently, Theorem \ref{app:CMthm} characterizes a set of Gaussian measures which are equivalent to $\gamma$.
The following theorem is another central result, which in particular implies that $\gamma_{h}$ and $\gamma$ are singular whenever $h \in E \setminus H(\gamma)$.

\begin{theorem}[Feldman--Hajek, cf. {\cite[Theorem 2.7.2]{MR1642391}}]
Any two Gaussian measures on\/ $E$ are either equivalent or mutually singular.
\end{theorem}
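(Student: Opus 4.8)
The plan is to derive the dichotomy from a simultaneous diagonalisation of the two covariance structures followed by Kakutani's classical dichotomy for infinite product measures. First I would dispose of the means: writing the two Gaussian measures as $\mathcal{N}(a, K)$ and $\mathcal{N}(b, Q)$, Theorem \ref{app:CMthm} tells us that translating a Gaussian measure by a vector gives an equivalent measure exactly when that vector lies in the associated Cameron--Martin space, and a singular one otherwise. So it suffices to analyse the centered measures $\mu \coloneqq \mathcal{N}(0, K)$ and $\nu \coloneqq \mathcal{N}(0, Q)$, and afterwards to fold in the shift by $a - b$; since the class ``equivalent or mutually singular'' is stable under translating one of the two measures by a fixed vector, the general statement reduces to the centered one.

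Next I would move from $E$ to a Hilbert space. Using the structure theory recalled in Appendix \ref{ap:GM} — each centered Gaussian measure on a separable Banach space is concentrated on a separable subspace and arises as the image of a canonical Gaussian measure under a continuous linear map — one realises $\mu$ and $\nu$ as centered Gaussian measures on a common separable Hilbert space with trace-class covariance operators $S$ and $T$. The crucial step is then to produce a \emph{single} orthonormal basis adapted to both, by studying the relative covariance operator $T^{-1/2} S T^{-1/2}$ on its natural domain. If the Cameron--Martin spaces of $\mu$ and $\nu$ do not coincide as sets, or if this relative operator fails to be of the form ``identity plus a Hilbert--Schmidt operator'', then one exhibits directly a measurable linear functional or a tail event that separates the two measures; combined with the zero--one law for Gaussian measures (every measurable linear subspace has measure $0$ or $1$), this forces $\mu \perp \nu$. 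Otherwise, in the adapted basis both measures split as countable products of coordinatewise equivalent one-dimensional Gaussians, $\prod_{n} \mathcal{N}(0, \sigma_n^2)$ and $\prod_{n} \mathcal{N}(0, \tau_n^2)$.

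In the product case one applies Kakutani's theorem: a product measure with equivalent factors is equivalent to the reference product when $\prod_n H_n > 0$, where $H_n$ is the Hellinger affinity of the $n$-th factors, and mutually singular when $\prod_n H_n = 0$. For one-dimensional centered Gaussians $H_n = \bigl( 2 \sigma_n \tau_n / (\sigma_n^2 + \tau_n^2) \bigr)^{1/2}$, so $\prod_n H_n > 0$ is equivalent to $\sum_n (\sigma_n / \tau_n - 1)^2 < \infty$, i.e. to the Hilbert--Schmidt condition from the previous step; no intermediate possibility arises, which is precisely the asserted dichotomy. The hard part will be the adapted-basis construction: two positive trace-class operators need not be simultaneously diagonalisable by one orthonormal basis in infinite dimensions, so one must route the argument through the relative covariance operator and the sharp Feldman--Hajek criterion (equality of the Cameron--Martin spaces together with the Hilbert--Schmidt condition), whose verification is the delicate point; once the product reduction is available, Kakutani's theorem closes the argument. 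Since this result is used here only for context, in the paper we simply cite \cite[Theorem 2.7.2]{MR1642391}.
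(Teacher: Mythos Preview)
The paper does not give its own proof of the Feldman--Hajek theorem; it is stated in Appendix~\ref{ap:GM} purely as background, with a citation to \cite[Theorem 2.7.2]{MR1642391}, and no argument is supplied. So there is nothing to compare your proposal against on the paper's side --- and you already note this yourself in your final sentence.

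As a standalone sketch, your outline is the standard route: reduce to centered measures via the Cameron--Martin shift dichotomy, pass to a Hilbert-space realisation, analyse the relative covariance $T^{-1/2} S T^{-1/2}$ to either force singularity (via the Gaussian zero--one law when the Cameron--Martin spaces disagree or the Hilbert--Schmidt condition fails) or reduce to a product of one-dimensional Gaussians, and then invoke Kakutani's dichotomy. This is correct at the level of a proof plan and matches the architecture of the argument in Bogachev's monograph. The genuinely delicate step you flag --- building the adapted basis via the relative covariance operator rather than attempting a naive simultaneous diagonalisation --- is indeed the crux, and you have identified it accurately. One small caveat: your reduction ``stable under translating one of the two measures by a fixed vector'' needs a touch more care, since whether the shift $a-b$ lies in the Cameron--Martin space of $\nu$ interacts with whether the covariances already force singularity; but this is easily handled once the centered case is settled.
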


In order to quantify the (exponential) decline of the probability of certain tail events, the following result is often times useful.

\begin{proposition}[Large deviation principle, cf. {\cite[Corollary 4.9.3]{MR1642391}}]\label{app:LDP}
Let\/ $\gamma$ denote a centered Gaussian measure on\/ $E$.
Moreover, for\/ $\varepsilon > 0$, let\/ $\gamma_{\varepsilon}$ denote the pushforward measure of\/ $\gamma$ under the map\/ $E \ni f \mapsto \sqrt{\varepsilon}f$.
Then,\/ $(\gamma_{\varepsilon})_{\varepsilon > 0}$ satisfies the large deviation principle with rate function\/ $I_{\gamma} \from E \to \Rbarplus$, where
\begin{align*}
    I_{\gamma}(f)=
    \begin{cases}
        \tfrac{1}{2}\|f\|_{H(\gamma)}^{2} &\mathrm{for}\, f \in H(\gamma),\\
        \infty & \mathrm{otherwise}.
    \end{cases}
\end{align*}
In other words, for each $F \in \B_{E}$,
\begin{align*}
    - \inf_{f \in F^{\circ}} I_{\gamma}(f) \le \liminf_{\varepsilon \searrow 0}\varepsilon\log\gamma_{\varepsilon}(F) \le \limsup_{\varepsilon \searrow 0}\varepsilon\log\gamma_{\varepsilon}(F) \le - \inf_{f \in \overbar{F}}I_{\gamma}(f).
\end{align*}
\end{proposition}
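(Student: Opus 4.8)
The plan is to recognise that, up to the $\sqrt{\varepsilon}$-reparametrisation, this is precisely the classical large deviation principle for Gaussian measures, so the most economical route is to invoke \cite[Corollary 4.9.3]{MR1642391} directly. For a self-contained argument, I would proceed as follows.

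First I would record the structural facts. The pushforward $\gamma_{\varepsilon}$ of $\gamma$ under the continuous linear map $E \ni x \mapsto \sqrt{\varepsilon}\,x$ is again a centered Gaussian measure on $E$; its Cameron--Martin space coincides with $H(\gamma)$ as a subset of $E$, but carries the rescaled Hilbert norm $\|\cdot\|_{H(\gamma_{\varepsilon})} = \varepsilon^{-1/2}\|\cdot\|_{H(\gamma)}$. Hence the candidate rate function $I_{\gamma}$ is one half of the squared Cameron--Martin norm of $\gamma$, independent of $\varepsilon$. Next I would check that $I_{\gamma}$ is a good rate function: its sublevel set $\{\, f \in E \colon I_{\gamma}(f) \le c \,\}$ is the closed $H(\gamma)$-ball of radius $\sqrt{2c}$, which is compact in $E$ because the inclusion $H(\gamma) \hookrightarrow E$ is compact — a standard property of Gaussian measures, the unit ball of $H(\gamma)$ being the closed absolutely convex hull of a $\gamma$-null sequence (cf. \cite[Chapter 3]{MR1642391}).

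For the lower bound I would use the Cameron--Martin shift: given an open $G$ and $f \in G \cap H(\gamma)$, choose $\rho > 0$ with $B_{E}(f,\rho) \subset G$, write $B_{E}(f,\rho) = B_{E}(0,\rho) + f$, and apply the Cameron--Martin density (cf. Theorem \ref{app:CMthm} and \cite[Section 2.4]{MR1642391}) $\mathrm{d}(\gamma_{\varepsilon})_{-f}/\mathrm{d}\gamma_{\varepsilon}(x) = \exp\bigl(-\widehat{f}_{\varepsilon}(x) - \tfrac{1}{2\varepsilon}\|f\|_{H(\gamma)}^{2}\bigr)$, where $\widehat{f}_{\varepsilon} \in E_{\gamma_{\varepsilon}}^{*}$ is the associated measurable linear functional. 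The symmetry of $\gamma_{\varepsilon}$ and of $B_{E}(0,\rho)$ gives $\int_{B_{E}(0,\rho)} \exp(-\widehat{f}_{\varepsilon})\, \mathrm{d}\gamma_{\varepsilon} = \int_{B_{E}(0,\rho)} \cosh(\widehat{f}_{\varepsilon})\, \mathrm{d}\gamma_{\varepsilon} \ge \gamma_{\varepsilon}(B_{E}(0,\rho))$, so that $\gamma_{\varepsilon}(G) \ge \mathrm{e}^{-\|f\|_{H(\gamma)}^{2}/(2\varepsilon)}\,\gamma_{\varepsilon}(B_{E}(0,\rho))$. Since $\gamma_{\varepsilon}(B_{E}(0,\rho)) = \gamma(B_{E}(0,\rho/\sqrt{\varepsilon})) \nearrow 1$, this yields $\liminf_{\varepsilon \searrow 0}\varepsilon\log\gamma_{\varepsilon}(G) \ge -I_{\gamma}(f)$; taking the supremum over $f \in G$ (and recalling $I_{\gamma} \equiv \infty$ off $H(\gamma)$) and then specialising to $G = F^{\circ}$ gives the left-hand inequality.

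For the upper bound I would argue first for compact sets and then upgrade. I would cover a compact $\Phi$ with $\inf_{\Phi} I_{\gamma} = L$ by finitely many small $E$-balls whose centers lie, in the Cameron--Martin sense, at distance at least $\sqrt{2(L-\delta)}$ from the origin, and bound each $\gamma_{\varepsilon}$-ball probability using Borell's Gaussian isoperimetric inequality (equivalently, the Borell--Tsirelson--Ibragimov--Sudakov concentration inequality), which is exactly what produces the exponent $-(L-\delta)/(2\varepsilon)$. The same isoperimetric inequality, applied to the complements of dilated $H(\gamma)$-balls enlarged by small $E$-balls, supplies an exponentially tight family of compact sets; combining this with the compact-set bound yields the upper bound for closed sets, and hence — after passing to closures — for every $F \in \B_{E}$. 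I expect this last part, namely the careful bookkeeping of constants in the Borell-type estimates together with the construction of the exponentially tight compacts, to be the main technical obstacle; everything preceding it is either structural or a one-line shift computation. All of this is, of course, subsumed by \cite[Corollary 4.9.3]{MR1642391}, which we invoke.
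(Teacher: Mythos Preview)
Your proposal is correct, and in fact the paper does not supply its own proof of this proposition at all: it is stated in the appendix as a known result with the attribution ``cf.\ \cite[Corollary 4.9.3]{MR1642391}'' and is used only as input to the unnumbered Schilder-type proposition in Section~\ref{sec:CMspace}. Your opening sentence---invoke \cite[Corollary 4.9.3]{MR1642391} directly---is therefore exactly the paper's approach, and nothing further is required.

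The self-contained sketch you add (Cameron--Martin shift for the open-set lower bound, Borell--TIS concentration for compacts, exponential tightness to pass to closed sets) is the standard route to this result and is a faithful outline of how the cited corollary is established in Bogachev's monograph; it goes well beyond what the paper itself does, but there is no gap in it.
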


Before we finish this section, we state a result that allows us in many cases to obtain a tractable representation of $(H(\gamma), \langle \cdot, \cdot \rangle_{H(\gamma)})$.

\begin{theorem}[Factorization, cf. {\cite[Section 3.3]{MR1642391} and \cite[Section~4.2]{MR3024389}}]\label{thm:factorization}
Given a centered Gaussian measure\/ $\gamma$ on $E$, assume that there exists a Hilbert space\/ $\tilde{H}$ and a continuous linear operator\/ $J \from \tilde{H} \to E$ such that\/ $R_{\gamma}$ admits the factorization\/ $R_{\gamma} = J \circ J^{*}$, where\/ $J^{*} \from E^{*} \to \tilde{H}^{*} \cong \tilde{H}$ denotes the adjoint of\/ $J$.
Then\/ $H(\gamma)$ coincides with\/ $J(\tilde{H})$.
If\/ $J$ is moreover injective, then
\begin{align*}
\langle f, g \rangle_{H(\gamma)} = \langle J^{-1}(f), J^{-1}(g) \rangle_{\tilde{H}},
\quad f,g \in H(\gamma).
\end{align*}
\end{theorem}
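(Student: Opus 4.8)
The plan is to reduce everything to the defining relation of the covariance operator together with a density argument, closely following \cite[Section 3.3]{MR1642391}. The one computation that does the work is the following: for all $f, g \in E^{*}$,
\begin{align*}
\langle J^{*}f, J^{*}g \rangle_{\tilde{H}} = (f, JJ^{*}g) = (f, R_{\gamma}g) = \int_{E}f(x)g(x)\, \mathrm{d}\gamma(x) = \langle f, g \rangle_{L^{2}(\gamma)},
\end{align*}
where the first equality is the definition of the adjoint (after the identification $\tilde{H}^{*} \cong \tilde{H}$), the second uses the hypothesis $R_{\gamma} = J \circ J^{*}$ on $E^{*}$, and the last expression is precisely the inner product that $E^{*}$ carries as a subspace of $E_{\gamma}^{*}$. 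Hence $J^{*}$, regarded as a map from $E^{*}$ equipped with $\|\cdot\|_{L^{2}(\gamma)}$ into $\tilde{H}$, is a linear isometry, and since $E_{\gamma}^{*}$ is by definition the $L^{2}(\gamma)$-completion of $E^{*}$, it extends uniquely to a linear isometry $\widehat{J^{*}} \from E_{\gamma}^{*} \to \tilde{H}$ whose range is $\overline{J^{*}(E^{*})}$.

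Next I would upgrade the factorization from $E^{*}$ to all of $E_{\gamma}^{*}$. Recall that $R_{\gamma} \from E_{\gamma}^{*} \to E$ is continuous, being the isometric isomorphism onto $(H(\gamma), \langle\cdot,\cdot\rangle_{H(\gamma)})$ followed by the continuous embedding $H(\gamma) \hookrightarrow E$. Since $J \circ \widehat{J^{*}} \from E_{\gamma}^{*} \to E$ is also continuous and agrees with $R_{\gamma}$ on the dense subspace $E^{*}$, we get $R_{\gamma} = J \circ \widehat{J^{*}}$ on all of $E_{\gamma}^{*}$. Passing to ranges,
\begin{align*}
H(\gamma) = R_{\gamma}(E_{\gamma}^{*}) = J\bigl(\widehat{J^{*}}(E_{\gamma}^{*})\bigr) = J\bigl(\overline{J^{*}(E^{*})}\bigr).
\end{align*}
A Hahn--Banach argument gives $\overline{J^{*}(E^{*})} = (\ker J)^{\perp}$ inside $\tilde{H}$, and since $J$ vanishes on $\ker J$ we have $J\bigl((\ker J)^{\perp}\bigr) = J(\tilde{H})$; hence $H(\gamma) = J(\tilde{H})$, with no injectivity assumption needed so far.

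Finally, assume $J$ is injective. Then $\widehat{J^{*}}$ maps onto $\tilde{H}$ and $J$ restricts to a bijection $\tilde{H} \to H(\gamma)$, so $J^{-1}$ on $H(\gamma)$ is well defined. Given $f, g \in H(\gamma)$, write $f = R_{\gamma}\hat{f}$ and $g = R_{\gamma}\hat{g}$ with $\hat{f}, \hat{g} \in E_{\gamma}^{*}$; this is legitimate because $R_{\gamma}$ is injective on $E_{\gamma}^{*}$. The factorization gives $f = J(\widehat{J^{*}}\hat{f})$, hence $J^{-1}(f) = \widehat{J^{*}}(\hat{f})$ and likewise $J^{-1}(g) = \widehat{J^{*}}(\hat{g})$. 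Combining the isometry property of $\widehat{J^{*}}$ with the definition $\langle f, g \rangle_{H(\gamma)} = \langle \hat{f}, \hat{g} \rangle_{L^{2}(\gamma)}$ then yields
\begin{align*}
\langle J^{-1}(f), J^{-1}(g) \rangle_{\tilde{H}} = \langle \widehat{J^{*}}(\hat{f}), \widehat{J^{*}}(\hat{g}) \rangle_{\tilde{H}} = \langle \hat{f}, \hat{g} \rangle_{L^{2}(\gamma)} = \langle f, g \rangle_{H(\gamma)},
\end{align*}
which is the claimed identity.

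I expect the only genuinely delicate point to be the passage from $E^{*}$ to its completion $E_{\gamma}^{*}$: one must be sure that the a priori identity $R_{\gamma} = J \circ J^{*}$, assumed only on $E^{*}$, really does extend to $E_{\gamma}^{*}$, and this rests on the continuity of $R_{\gamma}$ there together with the fact that $E_{\gamma}^{*}$ is exactly the $L^{2}(\gamma)$-completion of $E^{*}$; the Hahn--Banach identity $\overline{\mathrm{ran}(J^{*})} = (\ker J)^{\perp}$ also deserves a line, since $E$ need not be reflexive. Everything else is a routine Hilbert-space computation.
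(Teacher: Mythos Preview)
The paper does not actually supply its own proof of this theorem: it is quoted in Appendix~\ref{ap:GM} as a background result with a citation to \cite[Section 3.3]{MR1642391} and \cite[Section 4.2]{MR3024389}, and no argument is given in the body or in Appendix~\ref{app:technical}. Your proposal is correct and is essentially the standard argument from those references---the isometry identity $\langle J^{*}f, J^{*}g\rangle_{\tilde{H}} = \langle f, g\rangle_{L^{2}(\gamma)}$, the extension of $J^{*}$ to $E_{\gamma}^{*}$ by density, and the identification $\overline{J^{*}(E^{*})} = (\ker J)^{\perp}$ are exactly the ingredients used there.
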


\section{Proofs}\label{app:technical}

\begin{proof}[Proof of Lemma \ref{lem:L2space1}]
The proof of \cite[Lemma 3.2]{MR1975582} reveals that $\|\4\cdot\4\|_{\Lambda^{2}}$ satisfies the triangle inequality, which shows that $\Lambda^{2}$ is a real vector space.
In order to see that $\langle\cdot,\cdot\rangle_{\Lambda^{2}}$ is an inner product on $\Lambda^{2}$, note that, by construction, $\langle\cdot,\cdot\rangle_{\Lambda^{2}}$ is symmetric and linear in both arguments, and recall that $\pi$ is positive semidefinite $\mu$-a.e., hence $f^{\top}\pi f \ge 0$\/ $\mu$-a.e. and therefore $\langle f, f \rangle_{\Lambda^{2}} \ge 0$ for each measurable $f \from T \to \re^{d}$.
If $\langle f, f \rangle_{\Lambda^{2}} = 0$ for some $f \in \Lambda^{2}$, then $f^{\top}\pi f = 0$\/ $\mu$-a.e., hence $f \sim 0$, which implies that $\langle\cdot,\cdot\rangle_{\Lambda^{2}}$ is positive definite and therefore an inner product on $\Lambda^{2}$.

Completeness of $(\Lambda_{2}, \varrho_{2})$, where $\varrho_{2}$ denotes the translation invariant metric induced by $\|\4\cdot\4\|_{\Lambda^{2}}$, follows from \cite[Lemme 4.29]{MR542115}, and separability can be argued by adapting the proofs of \cite[Theorem 19.2]{MR2893652} and \cite[Lemma 3.2]{MR1975582}.
We conclude that $(\Lambda^{2}, \langle\cdot,\cdot\rangle_{\Lambda^{2}})$ is a real separable Hilbert space.
\end{proof}

Lemma \ref{lem:L2space2} is a direct consequence of Fr\'{e}chet--Riesz's representation theorem, since we know by Lemma \ref{lem:L2space1} that $\Lambda^{2}$ is a Hilbert space.

\begin{proof}[Proof of Lemma \ref{lem:L2space3}]
$\Lambda^{2,0}$ is clearly a real vector space.
Given $f \from T \to \re^{d}$ measurable, and $i,j \in \{1,2,\hdots,d\}$, by a version of the Kunita--Watanabe inequality for Lebesgue--Stieltjes integrals (cf. \cite[Lemma 5.89]{Schmock2021}),
\begin{align}\label{eq:KWLS}
\begin{aligned}
\Bigl( \bigl| \int_{T} f_{i}(s)f_{j}(s)\, \mu_{i,j}(\mathrm{d}s) \bigr| \Bigr)^{2} & \le \Bigl( \int_{T} |f_{i}(s)f_{j}(s)|\, |\mu_{i,j}|(\mathrm{d}s) \Bigr)^{2} \\
& \le \int_{T} f_{i}^{2}(s)\, \mu_{i,i}(\mathrm{d}s) \int_{T} f_{j}^{2}(s)\, \mu_{j,j}(\mathrm{d}s) < \infty,
\end{aligned}
\end{align}
hence $\Lambda^{2,0} \subset \Lambda^{2}$, and $(\Lambda^{2,0}, \langle \cdot,\cdot \rangle_{\Lambda^{2}})$ is therefore an inner product space.

By dominated convergence, the bounded measurable functions $f \from T \to \re^{d}$ are dense in $\Lambda^{2}$.
But since these functions are contained in $\Lambda^{2,0}$, we see that $\Lambda^{2,0}$ is dense in $\Lambda^{2}$, which also implies the separability of $\Lambda^{2,0}$.
From Lemma \ref{lem:L2space1} we further know that $(\Lambda^{2}, \langle\cdot,\cdot\rangle_{\Lambda^{2}})$ is a Hilbert space and in particular complete.
This shows \ref{lem:L2space3b}.
\end{proof}

\begin{proof}[Proof of Lemma \ref{lem:L2space4}]
The continuity of the embedding follows from \eqref{eq:KWLS} and \eqref{eq:helper} below.
The remaining assertion follows from a multivariate version of \cite[Lemma 1.37]{MR4226142}.
\end{proof}

\begin{proof}[Proof of Lemma \ref{lem:L2space5}]
Let $(\tilde{\pi}, \tilde{\mu})$ be another pair that satisfies \eqref{eq:covarid}, and $f \from T \to \re^{d}$ be measurable.
We then have $\mathrm{d}\mu_{i,j} / \mathrm{d}\mu = \pi_{i,j}$ as well as $\mathrm{d}\mu_{i,j} / \mathrm{d}\tilde{\mu} = \tilde{\pi}_{i,j}$ for all $i,j \in \{1,2,\hdots,d\}$, hence
\begin{align*}
\int_{T}f^{\top}(s)\pi(s)f(s)\, \mu(\mathrm{d}s) = \sum_{i,j=1}^{d} \int_{T}f_{i}(s)f_{j}(s)\, \mu_{i,j}(\mathrm{d}s) = \int_{T}f^{\top}(s)\tilde{\pi}(s)f(s)\, \tilde{\mu}(\mathrm{d}s),
\end{align*}
and we see that $\|\4\cdot\4\|_{\Lambda^{2}}$ does not depend on the specific choice of $(\pi, \mu)$ satisfying \eqref{eq:covarid}.
Moreover, for $f,g \from T \to \re^{d}$ measurable, $(f-g)^{\top}\pi(f-g) =0$\/ $\mu$-a.e. holds precisely when $\|f-g\|_{\Lambda^{2}} = 0$ which is equivalent to $(f-g)^{\top}\tilde{\pi}(f-g) =0$\/ $\tilde{\mu}$-a.e.
\end{proof}

\begin{proof}[Proof of Proposition \ref{prop:CMspace1}]
By a variant of the Cauchy--Schwarz inequality, for each $h \in H$,\/ $i \in \{1,2,\hdots,d\}$ and $t \in T$, it holds that
\begin{align}\label{eq:CSvariant2}
\int_{[\4 0,t]} \bigl|\sum_{j=1}^{d} \pi_{i,j}(s)f_{h,j}(s)\bigr|\, \mu(\mathrm{d}s) \le \sqrt{\mu_{i,i}([\4 0,t])}\|f_{h}\|_{\Lambda^{2}} \le \sqrt{\mu(T)}\|h\|_{H},
\end{align}
which shows that the integral in \eqref{CMfunc} is well defined.
\end{proof}

\begin{proof}[Proof of Proposition \ref{prop:CMspace2}]
$H$ is clearly a real vector space and $\langle \cdot, \cdot \rangle_{H}$ is symmetric and linear in both arguments.
In order to show that $\langle \cdot, \cdot \rangle_{H}$ is positive definite, note that $\langle h, h \rangle_{H} = \| f_{h} \|_{\Lambda^{2}}^{2} \ge 0$ for each $h \in H$.
If $h \in H$ satisfies $\langle h, h \rangle_{H} = 0$, then $f_{h}^{\top}\pi f_{h} = 0$\/ $\mu$-a.e., hence $f_{h} \sim 0$.
An application of inequality \eqref{eq:CSvariant2} shows that $(\pi f_{h})_{i} = 0$\/ $\mu$-a.e. for each $i \in \{1,2,\hdots, d\}$, hence $h = 0$.
$(H, \langle\cdot,\cdot\rangle_{H})$ is therefore an inner product space.

We obtain a norm $\|\4\cdot\4\|_{H}$ on $H$ by setting $\| h \|_{H} \coloneqq \sqrt{\langle h, h \rangle}_{H}$ and thus also a metric $\varrho_{H}$ on $H$ by setting $\varrho_{H}(f, g) \coloneqq \| f-g \|_{H}$.
In order to see that $(H, \varrho_{H})$ is complete, let $(h_{n})_{n \in \na}$ be a Cauchy sequence in $H$.
Then $(f_{h_{n}})_{n \in \na}$ is a Cauchy sequence in $\Lambda^{2}$.
From Lemma \ref{lem:L2space1} we know that $\Lambda^{2}$ is complete.
Consequently, there exists some $f \in \Lambda^{2}$ such that $f_{h_{n}} \to f$ in $\Lambda^{2}$.
If we set $h = J(f)$, then $h \in H$ and $h_{n} \to h$ in $H$.

Finally, in order to see that $H$ is separable, note first that $\Lambda^{2}$ is separable by Lemma \ref{lem:L2space1}.
But this already shows that $H$ is separable as well, because a countable dense subset of $H$ is given by $\{ h \in H \colon f_{h} \in B \}$, where $B$ is a countable dense subset of $\Lambda^{2}$.
\end{proof}

\begin{proof}[Proof of Proposition \ref{prop:CMspace3}]
By construction, $J \from \Lambda^{2} \to H$ is a linear isometry.
Since $\Lambda^{2,0}$ is a linear subspace of $\Lambda^{2}$ by Lemma \ref{lem:L2space3}, we see that $(H^{0}, \langle\cdot,\cdot\rangle_{H})$ is an inner product subspace of $H$.
If $(h_{n})_{n \in \na}$ is a Cauchy sequence in $H^{0}$, then $(f_{h_{n}})_{n \in \na}$ is a Cauchy sequence in $\Lambda^{2,0}$.
By Lemma \ref{lem:L2space3} there exists an $f \in \Lambda^{2}$ such that $f_{h_{n}} \to f$ as $n \to \infty$.
Denoting $h = J(f) \in H$, it follows that $h_{n} \to h$ in $H$, as $n \to \infty$.
\end{proof}

Proposition \ref{prop:CMspace5} is a direct consequence of Fr\'{e}chet--Riesz's representation theorem, since we know by Proposition \ref{prop:CMspace2} that $H$ is a Hilbert space.

\begin{remark}\label{rem:multidual}
For the proof of Proposition \ref{prop:CMspace6}, we will make use of a multivariate version of Riesz--Markov--Kakutani's representation theorem, which states that every $F \in (C(T;\re^{d}))^{*}$ may be identified with an $\re^{d}$-valued set function $\nu = (\nu_{1}, \nu_{2},\hdots, \nu_{d})^{\top}$ on $\B_{T}$, where every entry is a signed Borel measure of finite total variation, such that
\begin{align*}
F(f) = \sum_{j=1}^{d} \int_{T} f_{j}(s)\, \nu_{j}(\mathrm{d}s) \eqqcolon \int_{T} f^{\top}(s)\, \nu(\mathrm{d}s),
\quad f \in C(T;\re^{d}).
\end{align*}
Given $f \from T \to \re^{n \times d}$ such that $(f_{i, \cdot})^{\top}\in C(T; \re^{d})$ for each $i \in \{1,2,\hdots,n\}$, we write
\begin{align*}
\int_{T}f(s)\, \nu(\mathrm{d}s) = \Bigl( \int_{T} f_{1, \cdot}(s)\, \nu(\mathrm{d}s), \hdots, \int_{T} f_{n, \cdot}(s)\, \nu(\mathrm{d}s) \Bigr)^{\top}.
\end{align*}
For generalizations to infinite-dimensional domains and image spaces, see \cite{G1936,MR96964}.
\end{remark}

\begin{proof}[Proof of Proposition \ref{prop:CMspace6}]
We argue in line with \cite[Example~4.4]{MR3024389} and use Theorem \ref{thm:factorization}.
First, we note that every $h \in H$ is continuous and satisfies $h(0) = 0$, hence $H \subset E$.
Let us consider $J$ as a linear operator onto $E$, i.e. $J \from \Lambda^{2} \to E$, which is continuous due to \eqref{eq:CSvariant2}.
In the context of Remark \ref{rem:multidual}, $E^{*}$ is given as the quotient space, where we identify those $\nu \in (C(T;\re^{d}))^{*}$ that annihilate $E$.

For $f \sim \sigma$ and $g \sim \nu$ in $E^{*}$,
\begin{align*}
R_{\gamma}(f)(g) & = \int_{E}f(x)g(x)\, \gamma_{M}(\mathrm{d}x) = \mathbb{E}[f(M)g(M)] = \mathbb{E}\Bigl[ \int_{T}M_{s}^{\top} \sigma(\mathrm{d}s)\int_{T}M_{s}^{\top} \nu(\mathrm{d}s)\Bigr] \\
& = \sum_{i,j=1}^{d} \int_{T} \int_{T} \mathbb{E}[M_{s}^{i}M_{t}^{j}]\, \sigma_{i}(\mathrm{d}s)\, \nu_{j}(\mathrm{d}t) = \sum_{i,j=1}^{d} \int_{T} \int_{T} [M]_{s \wedge t}^{i,j}\, \sigma_{i}(\mathrm{d}s)\, \nu_{j}(\mathrm{d}t) \\
& = \sum_{j=1}^{d} \int_{T} \sum_{i=1}^{d} \int_{T} [M]_{s \wedge t}^{i,j}\, \sigma_{i}(\mathrm{d}s)\, \nu_{j}(\mathrm{d}t) = \sum_{j=1}^{d} \int_{T} \Bigl( \int_{T}[M]_{s \wedge t}\, \sigma(\mathrm{d}s) \Bigr)_{j} \nu_{j}(\mathrm{d}t) \\
& = \int_{T} \Bigl( \int_{T}[M]_{s \wedge t}\, \sigma(\mathrm{d}s) \Bigr)^{\top} \nu(\mathrm{d}t) = \bigl( g, \int_{T}[M]_{s \wedge \cdot}\, \sigma(\mathrm{d}s) \bigr).
\end{align*}
We can therefore identify $R_{\gamma}(f)$ with $\int_{T}[M]_{s \wedge \cdot}\, \sigma(\mathrm{d}s)$.

Next, let us find the adjoint of $J$.
Given $f \in \Lambda^{2}$ and $g \sim \nu$ in $E^{*}$,
\begin{align*}
\bigl( g, J(f) \bigr) & = \sum_{i=1}^{d} \int_{T} J_{i}(f)(t)\, \nu_{i}(\mathrm{d}t) = \sum_{i=1}^{d} \int_{T} \int_{[\4 0, t]}\pi_{i, \cdot}(s)f(s)\, \mu(\mathrm{d}s)\, \nu_{i}(\mathrm{d}t) \\
& = \sum_{i=1}^{d} \int_{T} \int_{T}\indicatorset{[\4 0,t]}(s) \pi_{i, \cdot}(s)f(s)\, \mu(\mathrm{d}s)\, \nu_{i}(\mathrm{d}t) \\
& = \sum_{i=1}^{d} \int_{T} \int_{T}\indicatorset{[\4 0,t]}(s)\, \nu_{i}(\mathrm{d}t)\pi_{i, \cdot}(s)f(s)\, \mu(\mathrm{d}s) \\
& = \sum_{i=1}^{d} \int_{T} \nu_{i}\bigl([s,u]\bigr)\pi_{i, \cdot}(s)f(s)\, \mu(\mathrm{d}s) \\
& = \int_{T} \nu\bigl([s,u]\bigr)^{\top}\pi(s)f(s)\, \mu(\mathrm{d}s) = \langle \nu([\cdot, u]), f \rangle_{\Lambda^{2}},
\end{align*}
hence the adjoint $J^{*} \from E^{*} \to (\Lambda^{2})^{*} \cong \Lambda^{2}$ is given by $g \sim \nu \mapsto (T \ni s \mapsto \nu\bigl([s,u]\bigr))$.

Finally, the covariance operator admits for $g \sim \nu$ in $E^{*}$ the factorization
\begin{align*}
R_{\gamma}(g)_{i}(t) & = \int_{T}[M]_{s \wedge t}^{i, \cdot}\, \nu(\mathrm{d}s) = \sum_{j=1}^{d} \int_{T} [M]_{s \wedge t}^{i, j}\, \nu_{j}(\mathrm{d}s) \\
& = \sum_{j=1}^{d} \int_{T} \int_{[\4 0, s \wedge t]}\pi_{i,j}(w)\, \mu(\mathrm{d}w)\, \nu_{j}(\mathrm{d}s) \\
& = \sum_{j=1}^{d} \int_{T} \pi_{i,j}(w) \int_{T}\indicatorset{[\4 0, s \wedge t]}(w)\, \nu_{j}(\mathrm{d}s)\, \mu(\mathrm{d}w) \\
& = \sum_{j=1}^{d}\int_{[\4 0,t]} \pi_{i,j}(w) \nu_{j}\bigl([w, u]\bigr)\, \mu(\mathrm{d}w) \\
& = \int_{[\4 0,t]} \pi_{i,\cdot}(w) \nu\bigl([w, u]\bigr)\, \mu(\mathrm{d}w) \\
& = \int_{[\4 0,t]} \pi_{i,\cdot}(w) J^{*}(g)(w)\, \mu(\mathrm{d}w) = (J \circ J^{*})(g)_{i}(t),
\end{align*}
where $i \in \{1,2,\hdots, d\}$ and $t \in T$.

Let us show that $J$ is injective.
Let $f_{1}, f_{2} \in \Lambda^{2}$ be such that $J(f_{1}) = J(f_{2})$, i.e. $\| J(f_{1})-J(f_{2}) \|_{\infty} = 0$, which implies in particular for $g = f_{1}-f_{2}$ that
\begin{align}\label{eq:vanish}
\int_{A}\pi(s)g(s)\, \mu(\mathrm{d}s) = 0 \in \re^{d}
\end{align}
for all $A \in \B_{T}$ of the form $A = (s, t]$ for $s < t$ in $T$.
Since the half-open intervals generate $\B_{T}$, Dynkin's theorem shows that \eqref{eq:vanish} extends to all $A \in \B_{T}$.

We will now show that $g \sim 0$, i.e. $g^{\top}\pi g = 0$\/ $\mu$-a.e.
If this were not the case, then we would have, without loss of generality, $\mu\bigl(\{ g^{\top}\pi g > 0 \}\bigr) > 0$.
We claim that $\{ g^{\top}\pi g > 0 \} \subset \{ \pi g \neq 0 \}$.
To see this, pick $s \in T$ such that $g^{\top}(s)\pi(s)g(s) > 0$, and assume that $\pi(s)g(s) = 0$.
In other words, for each $i \in \{1,2,\hdots,d\}$, we would have $\sum_{j=1}^{d}\pi_{i,j}(s)g_{j}(s) = 0$.
But this can't be the case, since we would then have
\begin{align*}
0 < g^{\top}(s)\pi(s)g(s) = \sum_{i=1}^{d}g_{i}(s) \Bigl( \sum_{j=1}^{d} \pi_{i,j}(s)g_{j}(s) \Bigr) = 0.
\end{align*}

Now $\mu\bigl(\{ g^{\top}\pi g > 0 \}\bigr) > 0$ implies that $\mu\bigl(\{ \pi g \neq 0 \}\bigr) > 0$.
Since $\{ \pi g \neq 0 \} = \bigcup_{i=1}^{d} \{ (\pi g)_{i} \neq 0 \}$, there exists $i \in \{1,2,\hdots,d\}$ such that $\mu\bigl(\{ (\pi g)_{i} \neq 0 \}\bigr) > 0$.

Without loss of generality, we may assume that $\mu\bigl(\{ (\pi g)_{i} > 0 \}\bigr) > 0$.
Note that the set $A = \{ (\pi g)_{i} > 0 \}$ can be written as
\begin{align*}
A = \bigcup_{n \in \na} \{ (\pi g)_{i} \ge \tfrac{1}{n} \} = \bigcup_{n \in \na} ((\pi g)_{i})^{-1}\bigl( [\tfrac{1}{n}, \infty) \bigr),
\end{align*}
where every $A_{n} \coloneqq ((\pi g)_{i})^{-1}\bigl( [\tfrac{1}{n}, \infty) \bigr)$ and therefore also $A$ is $\B_{T}$-measurable.
Now $\mu(A) > 0$ implies that $\mu(A_{n}) > 0$ for some $n \in \na$, hence $n\int_{A_{n}}(\pi g)_{i}(s)\, \mu(\mathrm{d}s) \ge \mu(A_{n}) > 0$, which yields a contradiction to \eqref{eq:vanish}.

We may therefore conclude that $f_{1} - f_{2} = g \sim 0$ in $\Lambda^{2}$, which shows that the operator $J \from \Lambda^{2} \to E$ is injective.
Theorem \ref{thm:factorization} now implies that the Cameron--Martin space of $\gamma_{M}$ is given by $J(\Lambda^{2}) = H$.
\end{proof}

\begin{proof}[Proof of Theorem \ref{thm:CMapprox}]
The property of $H(D)$ being a dense subset of $H$ follows from Convention \ref{convention3} and the definition of the norm on $H$ that is induced by the inner product $\langle \cdot, \cdot \rangle_{H}$.
Being a dense subset of a separable metric space implies the remaining assertion of \ref{thm:CMapprox1}.

If $D$ is also a linear subspace of $\Lambda^{2}$, then $H(D)$ is clearly an inner product space, whose completion is $H$ by \ref{thm:CMapprox1}.
We now follow a standard argument, a version of which can be found e.g. in \cite[Proposition 1]{MR840627}.
Since $H(D)$ is dense in $H$ by \ref{thm:CMapprox1} and $H$ is separable due to Proposition \ref{prop:CMspace2}, there exists a countable subset of $H(D)$ that is also dense in $H$.
Upon applying the Gram--Schmidt process to this subset, one obtains a countable set of orthonormal vectors, which are in $H(D)$, whose linear span is dense in $H$.

We know from Proposition \ref{prop:CMspace6} that $H$ is the Cameron--Martin space of $\gamma_{M}$.
By standard theory for Gaussian measures we know that the topological support of $\gamma_{M}$ then coincides with $\overbar{H}$, where the closure is taken in $E$ (see Remark \ref{app:degenerate}).
But since $H(D)$ is dense in $H$ by \ref{thm:CMapprox1}, and the canonical injection from $H$ to $E$ is continuous by \cite[Proposition 2.4.6]{MR1642391}, we have $\overbar{H} = \overbar{H(D)}$, which yields \ref{thm:CMapprox3}.
\end{proof}

\begin{proof}[Proof of Proposition \ref{prop:NNdense}]
For the purpose of the proof, we denote by $\|\4\cdot\4\|_{\infty;T}$ either the supremum or the $\lambda$-essential supremum over $T$, depending on which of the two conditions in the statement of Proposition \ref{prop:NNdense} holds.

The affine functions $\re \ni x \mapsto \alpha x + \eta$ with $\alpha, \eta \in \re$ are continuous and therefore bounded over $T$.
Since $\psi$ is (locally $\lambda$-essentially) bounded, we note that each $f \in \mathcal{NN}_{1, \infty}^{d}(\psi)$ is ($\lambda$-essentially) bounded, hence
\begin{align}\label{eq:helper}
\int_{T}f_{i}^{2}(s)\, \mu_{i,i}(\mathrm{d}s) \le \|f_{i}\|_{\infty;T}^{2} \int_{T} \pi_{i,i}(s)\, \mu(\mathrm{d}s) < \infty
\end{align}
for each $i \in \{1,2,\hdots,d\}$.
Note that in \eqref{eq:helper}, we implicitly used the fact that $\mu$ is absolutely continuous w.r.t. $\lambda$ in the case of Condition \ref{thm:UATAs3}, since in this case each $f \in \mathcal{NN}_{1, \infty}^{d}(\psi)$ is $\mu$-essentially bounded.
We conclude that $\mathcal{NN}_{1, \infty}^{d}(\psi)$ is a linear subspace of $\Lambda^{2,0}$.

For $f \in \Lambda^{2,0}$, let $\epsilon > 0$ be given.
For each $\eta \in \re^{d}$ and $s \in T$, we have $\eta^{\top}\pi(s)\eta \le |\eta|^{2} \trace(\pi(s))$ (cf. \cite[Section 3]{MR1975582}).
By Lemma \ref{lem:L2space4}, the continuous functions $C(T; \re^{d})$ are dense in $\Lambda^{2,0}$, hence there exists some $f_{\epsilon} \in C(T; \re^{d})$ such that $\| f - f_{\epsilon} \|_{\Lambda^{2}} < \epsilon / 2$.
By Theorem \ref{thm:UAT}, there exists some $g \in \mathcal{NN}_{1, \infty}^{d}(\psi)$ such that $\|f_{\epsilon} - g\|_{\infty; T} < \epsilon / (2\sqrt{\|\trace(\pi)\|_{L^{1}(\mu)}})$, hence
\begin{align*}
\| f - g \|_{\Lambda^{2}} \le \|f-f_{\epsilon}\|_{\Lambda^{2}} + \|f_{\epsilon} - g\|_{\Lambda^{2}} < \epsilon / 2 + \| f_{\epsilon} - g \|_{\infty; T} \sqrt{\|\trace(\pi)\|_{L^{1}(\mu)}} < \epsilon,
\end{align*}
which concludes our proof.
\end{proof}

\begin{proof}[Proof of Proposition \ref{prop:NNdense2}]
Since $\psi$ is bounded, one can show precisely as in the proof of Proposition \ref{prop:NNdense} that $\mathcal{NN}_{1, \infty}^{d}(\psi)$ is a linear subspace of $\Lambda^{2,0}$.
If $\mathcal{NN}_{1, \infty}^{d}(\psi)$ were not dense in $\Lambda^{2,0}$, then there would exist by the geometric version of Hahn--Banach's theorem a functional $F \in (\Lambda^{2,0})^{*}$ such that $F \neq 0$ and $F(f) = 0$ for each $f \in \mathcal{NN}_{1, \infty}^{d}(\psi)$.
Let $N$ denote the subspace of all $G \in (\Lambda^{2})^{*}$ that annihilate $\Lambda^{2,0}$, i.e. for which $G(f) = 0$ for each $f \in \Lambda^{2,0}$ holds.
$(\Lambda^{2,0})^{*}$ can then be identified with the quotient space $(\Lambda^{2})^{*} / N$.

From Lemma \ref{lem:L2space2}, we know that there exists a function $g \in \Lambda^{2}$, such that $F(f) = \int_{T} f^{\top}(s)\pi(s)g(s)\, \mu(\mathrm{d}s)$ for each $f \in \Lambda^{2,0}$.
By linearity of the Lebesgue--Stieltjes integral,
\begin{align}\label{eq:funcvanish}
F(f) = \sum_{i=1}^{d} \int_{T} f_{i}(s) \sum_{j=1}^{d} \pi_{i,j}(s) g_{j}(s)\, \mu(\mathrm{d}s) = 0,
\quad f \in \mathcal{NN}_{1, \infty}^{d}(\psi),
\end{align}
and by a variant of the Cauchy--Schwarz inequality (cf. \cite[Lemma 4.17]{MR1975582}), for each $i \in \{1,2,\hdots,d\}$ and $A \in \mathcal{B}_{T}$, it holds that
\begin{align}\label{eq:CSvariant}
\int_{A} \bigl|\sum_{j=1}^{d} \pi_{i,j}(s)g_{j}(s)\bigr|\, \mu(\mathrm{d}s) \le \sqrt{\mu_{i,i}(A)}\|g\|_{\Lambda^{2}} \le \sqrt{\mu(A)}\|g\|_{\Lambda^{2}},
\end{align}
hence $\nu_{i}(A) \coloneqq \int_{A} \sum_{j=1}^{d} \pi_{i,j}(s) g_{j}(s)\, \mu(\mathrm{d}s)$ defines a signed Borel measure on $T$ that is of finite total variation.

Precisely as in \cite{MR1015670,H1991}, we now arrive at the question whether there can exist a signed Borel measure $\nu \neq 0$ on $T$ that is of finite total variation, such that $\int_{T} \psi(\alpha x + \eta)\, \nu(\mathrm{d}x) = 0$ holds for all $\alpha, \eta \in \re$.
As we know from \cite[Lemma 1]{MR1015670}, this is not the case if $\psi$ is bounded, measurable and sigmoidal (meaning that $\psi(t) \to 0$ as $t \to -\infty$ and $\psi(t) \to 1$ as $t \to \infty$), and \cite[Theorem 5]{H1991} then generalized this finding to show that this is not the case if $\psi$ is bounded, measurable and nonconstant.
In other words, \eqref{eq:funcvanish} implies that $F \equiv 0$, which yields a contradiction.
\end{proof}

\begin{proof}[Proof of Proposition \ref{prop:asoptapprox}]
Since each $f \in D$ is a linear combination of compositions of $\psi$ and affine functions, both of which are continuously differentiable, if follows that $f$ is continuously differentiable, hence of bounded variation, which shows that $H(D)$ is a subspace of $H_{\mathrm{bv}}$.

By Proposition \ref{prop:NNdense}, there exists a sequence $(h_{n})_{n \in \na}$ in $H(D)$ that converges to $g_{h,M}$ in $H$.
From the proof of Theorem \ref{thm:CMapprox3} we know that the canonical injection from $H$ to $C_{0}(T; \re)$ is continuous, which implies that $(h_{n})_{n \in \na}$ converges to $g_{h,M}$ in $C_{0}(T; \re)$.
Since $\tilde{F} \from C_{0}(T; \re) \to \re \cup \{-\infty\}$ is assumed to be continuous, it follows that $\tilde{F}(h_{n})$ converges to $\tilde{F}(g_{h,M})$.
Moreover, since $\|\4\cdot\4\|_{H} \from H \to \replus$ is Lipschitz-continuous, we can conclude that $\tilde{F}_{h,M}(h_{n})$ converges to $\tilde{F}_{h,M}(g_{h,M})$.
\end{proof}

The following lemma follows from standard arguments.

\begin{lemma}\label{lem:IKWapprox}
For each\/ $h \in H$ and\/ $p \in [1, \infty)$, we have\/ $f_{h} \in L^{p}(M)$ and therefore\/ $f_{h}^{\top} \sbullet[.75] M \in \mathcal{H}^{p}$.
Moreover, if\/ $(h_{n})_{n \in \na}$ denotes a sequence that converges to\/ $h$ in\/ $H$, then\/ $f_{h_{n}}^{\top} \sbullet[.75] M \to f_{h}^{\top} \sbullet[.75] M$ in\/ $\mathcal{H}^{p}$ for each\/ $p \in [1, \infty)$.
\end{lemma}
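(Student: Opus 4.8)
The plan is to reduce both assertions to the Burkholder--Davis--Gundy (BDG) inequality, exploiting that for $h \in H$ the bracket of $f_{h}^{\top}\sbullet[.75]M$ is deterministic. First I would recall from Remark \ref{rem:intsim} (see also the discussion preceding Definition \ref{def:CMspace}) that $f_{h}^{\top}\sbullet[.75]M$ is a real-valued continuous local martingale starting at $0$ with $[f_{h}^{\top}\sbullet[.75]M]_{t} = \int_{[0,t]}f_{h}^{\top}(s)\pi(s)f_{h}(s)\, \mu(\mathrm{d}s)$, which is a deterministic, continuous, non-decreasing function whose terminal value is $[f_{h}^{\top}\sbullet[.75]M]_{u} = \|f_{h}\|_{\Lambda^{2}}^{2} = \|h\|_{H}^{2} < \infty$. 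In particular $([f_{h}^{\top}\sbullet[.75]M]_{u})^{1/2} = \|h\|_{H}$ is a constant, hence lies in $L^{p}(\PP)$ for every $p \in [1,\infty)$; by the characterisation of $L^{p}(M)$ for continuous local martingales from the theory of vector stochastic integration (cf. \cite{MR542115}) this is precisely the assertion $f_{h} \in L^{p}(M)$.

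Next I would invoke the BDG inequality for continuous local martingales to obtain, for each $p \in [1,\infty)$, a constant $C_{p} > 0$ with $\| f_{h}^{\top}\sbullet[.75]M \|_{\mathcal{H}^{p}}^{p} \le \mathbb{E}_{\PP}[\sup_{t \in T}| (f_{h}^{\top}\sbullet[.75]M)_{t} |^{p}] \le C_{p}\,\mathbb{E}_{\PP}[ [f_{h}^{\top}\sbullet[.75]M]_{u}^{p/2}] = C_{p}\,\|h\|_{H}^{p} < \infty$, so that $f_{h}^{\top}\sbullet[.75]M \in \mathcal{H}^{p}$, together with the quantitative estimate $\| f_{h}^{\top}\sbullet[.75]M \|_{\mathcal{H}^{p}} \le C_{p}^{1/p}\|h\|_{H}$. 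This bound is the engine for the convergence statement, so I would record it explicitly.

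For the second assertion I would argue by linearity. If $h_{n} \to h$ in $H$, then $\|h_{n}-h\|_{H} = \|f_{h_{n}}-f_{h}\|_{\Lambda^{2}} \to 0$; since $J$ is linear and injective by Proposition \ref{prop:CMspace3}, we have $f_{h_{n}}-f_{h} = f_{h_{n}-h}$ in $\Lambda^{2}$, and by linearity of the vector stochastic integral---legitimate because all integrands lie in $L(M)$ by the first part---$f_{h_{n}}^{\top}\sbullet[.75]M - f_{h}^{\top}\sbullet[.75]M = (f_{h_{n}}-f_{h})^{\top}\sbullet[.75]M = f_{h_{n}-h}^{\top}\sbullet[.75]M$. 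Applying the estimate from the previous paragraph to $h_{n}-h \in H$ then gives $\| f_{h_{n}}^{\top}\sbullet[.75]M - f_{h}^{\top}\sbullet[.75]M \|_{\mathcal{H}^{p}} \le C_{p}^{1/p}\|h_{n}-h\|_{H} \to 0$ for every $p \in [1,\infty)$, which is the claim.

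I do not anticipate a genuine obstacle here, as the argument is modular stochastic calculus. The only points requiring care are bookkeeping ones: matching the definition of $L^{p}(M)$ in the vector-valued sense with the condition $([f_{h}^{\top}\sbullet[.75]M]_{u})^{1/2} \in L^{p}(\PP)$, and citing linearity of the stochastic integral on $\Lambda^{2}$; both are covered by the construction in \cite{MR542115}, and once the bracket is recognised to be deterministic everything else is immediate.
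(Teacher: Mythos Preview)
Your proposal is correct and follows essentially the same route as the paper: both arguments observe that the bracket $[f_{h}^{\top}\sbullet[.75]M]_{u} = \|h\|_{H}^{2}$ is a deterministic constant, deduce $f_{h} \in L^{p}(M)$ from this, and then apply the BDG inequality to obtain the bound $\|(f_{h_{n}}-f_{h})^{\top}\sbullet[.75]M\|_{\mathcal{H}^{p}} \le c_{p}\|h_{n}-h\|_{H}$. The paper additionally remarks that $f_{h}$, being deterministic, is predictable, which you implicitly subsume under your reference to Remark \ref{rem:intsim}.
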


\begin{proof}
Since $f_{h}$ is deterministic it is, being interpreted as a stochastic process, predictable (cf. \cite[Exercise 7.77]{Schmock2021}).
Moreover, since
\begin{align}\label{eq:LpEstim}
\|f_{h}\|_{L^{p}(M)} = \mathbb{E}\bigl[ [f_{h}^{\top} \sbullet[.75] M]_{u}^{p/2} \bigr]^{1/p} = \mathbb{E}\bigl[ \bigl((f_{h}^{\top}\pi f_{h}) \sbullet[.75] C\bigr)_{u}^{p/2} \bigr]^{1/p} = \| h \|_{H} < \infty, 
\end{align}
we have that $f_{h} \in L^{p}(M)$, and an application of Burkholder--Davis--Gundy's (BDG) inequality implies that $f_{h}^{\top} \sbullet[.75] M \in \mathcal{H}^{p}$.
Keeping in mind \eqref{eq:LpEstim}, an application of the BDG inequality then yields the existence of a positive constant $c_{p}$ such that
\begin{align*}
\| (f_{h_{n}} - f_{h})^{\top} \sbullet[.75] M \|_{\mathcal{H}^{p}} \le c_{p} \| h_{n} - h \|_{H},
\end{align*}
where the right-hand side converges to zero, as $n \to \infty$.
\end{proof}

Before we prove Lemma \ref{lem:density} let us recall for convenience a technical result, which follows e.g. from \cite[Satz 5.4]{MR2257838} or by combining the proofs of \cite[Lemma~1.34]{MR4226142} and \cite[Theorem 1.3.39]{MR3443368}.

\begin{lemma}\label{lem:Lpconv}
Fix\/ $p > 0$, and let\/ $(f_{n})_{n \in \na}$ be a sequence in\/ $L^{p}(\PP)$ such that\/ $f_{n} \to f$ in probability, where\/ $f \in L^{p}(\PP)$.
Then
\begin{align*}
\| f_{n} - f \|_{L^{p}(\PP)} \to 0 \quad \Leftrightarrow \quad \| f_{n} \|_{L^{p}(\PP)} \to \| f \|_{L^{p}(\PP)}.
\end{align*}
\end{lemma}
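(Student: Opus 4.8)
The plan is to prove the two implications of the stated equivalence separately, the forward one being essentially immediate and the converse containing the whole substance of the lemma (a Scheffé-type argument). For the implication $\|f_n - f\|_{L^p(\PP)} \to 0 \Rightarrow \|f_n\|_{L^p(\PP)} \to \|f\|_{L^p(\PP)}$ I would distinguish two regimes. For $p \ge 1$ the functional $\|\4\cdot\4\|_{L^p(\PP)}$ is a norm, so the reverse triangle inequality gives $|\,\|f_n\|_{L^p(\PP)} - \|f\|_{L^p(\PP)}\,| \le \|f_n - f\|_{L^p(\PP)} \to 0$. For $0 < p < 1$ the triangle inequality fails, but the elementary inequality $|a+b|^p \le |a|^p + |b|^p$ yields $|\,\|f_n\|_{L^p(\PP)}^p - \|f\|_{L^p(\PP)}^p\,| \le \|f_n - f\|_{L^p(\PP)}^p$, whence $\|f_n\|_{L^p(\PP)}^p \to \|f\|_{L^p(\PP)}^p$ and therefore $\|f_n\|_{L^p(\PP)} \to \|f\|_{L^p(\PP)}$. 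This disposes of the easy direction for all $p > 0$ at once.

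For the converse, which is the main point, assume $f_n \to f$ in probability together with $\|f_n\|_{L^p(\PP)} \to \|f\|_{L^p(\PP)}$. The first step is to reduce to almost sure convergence by the usual subsequence principle: it suffices to show that every subsequence of $(f_n)_{n \in \na}$ admits a further subsequence along which $\|f_n - f\|_{L^p(\PP)} \to 0$. Along any subsequence, convergence in probability persists, so one extracts a further subsequence converging to $f$ $\PP$-almost surely, while the convergence of the $p$-th moments is inherited. Relabeling, I may thus assume $f_n \to f$ $\PP$-almost surely.

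The second step applies Fatou's lemma to a suitably chosen sequence of nonnegative functions. Fix a constant $c_p > 0$ with $|a - b|^p \le c_p(|a|^p + |b|^p)$ for all $a, b \in \re$ (one may take $c_p = 2^p$, valid for every $p > 0$), and set $g_n \coloneqq c_p(|f_n|^p + |f|^p) - |f_n - f|^p \ge 0$. Since $f_n \to f$ almost surely, $g_n \to 2 c_p |f|^p$ almost surely, and Fatou's lemma gives
\begin{align*}
2 c_p \|f\|_{L^p(\PP)}^p \le \liminf_{n \to \infty} \int_{\Omega} g_n\, \mathrm{d}\PP = 2 c_p \|f\|_{L^p(\PP)}^p - \limsup_{n \to \infty} \|f_n - f\|_{L^p(\PP)}^p,
\end{align*}
where the equality uses $\|f_n\|_{L^p(\PP)}^p \to \|f\|_{L^p(\PP)}^p$ together with $f \in L^p(\PP)$, which guarantees $\|f\|_{L^p(\PP)}^p < \infty$ so that the convergent and the $\limsup$ terms may be separated. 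Consequently $\limsup_{n} \|f_n - f\|_{L^p(\PP)}^p \le 0$, i.e. $\|f_n - f\|_{L^p(\PP)} \to 0$, as desired.

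The step I expect to require the most care is the reduction to almost sure convergence: Fatou's lemma is available only under pointwise convergence, whereas the hypothesis furnishes merely convergence in probability, so the subsequence argument is indispensable and must be phrased so that the conclusion $\|f_n - f\|_{L^p(\PP)} \to 0$ for the full sequence is recovered at the end. A secondary technical point is verifying the dominating inequality $|a-b|^p \le c_p(|a|^p + |b|^p)$ uniformly over all $p > 0$, including the non-convex range $0 < p < 1$; this is precisely what ensures $g_n \ge 0$ and allows the single Fatou computation to cover every exponent simultaneously.
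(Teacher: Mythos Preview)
Your proof is correct. The paper does not actually prove this lemma: it merely records it as a standard fact and cites references (\cite[Satz 5.4]{MR2257838}, or the combination of \cite[Lemma~1.34]{MR4226142} and \cite[Theorem 1.3.39]{MR3443368}), so you are supplying the full argument where the paper supplies none. Your Scheff\'e/Riesz-type Fatou argument with $g_n = c_p(|f_n|^p + |f|^p) - |f_n - f|^p$ and the subsequence reduction from convergence in probability to almost sure convergence is exactly the standard route one finds in the cited textbooks, so there is no methodological divergence to comment on.
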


\begin{proof}[Proof of Lemma \ref{lem:density}]
Since both $H$ and $L^{p}(\PP)$ are metric spaces, it suffices to prove the sequential continuity of $A_{p}$.
Given $h \in H$, let $(h_{n})_{n \in \na}$ be a sequence that converges to $h$ in $H$, and set $V_{n} = (A_{p}(h_{n}))^{p}$ for $n \in \na$.
We will now show that $(V_{n})_{n \in \na}$ converges to $V = (A_{p}(h))^{p}$ in $L^{1}(\PP)$.
To this end, we apply Vitali's convergence theorem.
Let us first collect some important properties.

\begin{enumerate}
\setlength\itemsep{0.05em}
\item\label{property_1}
(convergence in probability)
By Lemma~\ref{lem:IKWapprox}, we have $f_{h_{n}}^{\top}\sbullet[.75]M \to f_{h}^{\top}\sbullet[.75]M$ in $\mathcal{H}^{2}$, hence $(f_{h_{n}}^{\top}\sbullet[.75]M)_{u} \to (f_{h}^{\top}\sbullet[.75]M)_{u}$ in $L^{2}(\PP)$ and thus convergence also holds in probability.
By the reverse triangle inequality, we have $\|h_{n}\|_{H} \to \|h\|_{H}$ as $n \to \infty$.
An application of the continuous mapping theorem implies that $\exp{(-p(f_{h_{n}}^{\top}\sbullet[.75]M)_{u} + p \|h_{n}\|_{H}^{2}/2)} = V_{n}$ converges to $\exp{(-p(f_{h}^{\top}\sbullet[.75]M)_{u} + p \|h\|_{H}^{2}/2)} = V$ in probability.

\item\label{property_2}
(boundedness in $L^{1}(\PP)$)
For each $n \in \na$,
\begin{align}\label{eq:mgale}
\begin{aligned}
\mathbb{E}[V_{n}] & = \mathbb{E}\bigl[\exp{\bigl(-p(f_{h_{n}}^{\top}\sbullet[.75]M)_{u} - p^{2}\|h_{n}\|_{H}^{2}/2\bigr)}\bigr]\exp{\bigl((p+p^{2})\|h_{n}\|_{H}^{2}/2\bigr)} \\
& = \mathbb{E}\bigl[Z_{u}^{n}\bigr]\exp{\bigl((p+p^{2})\|h_{n}\|_{H}^{2}/2\bigr)},
\end{aligned}
\end{align}
where $Z^{n} \coloneqq \mathcal{E}(-p(f_{h_{n}}^{\top}\sbullet[.75]M))$ is, by Novikov's criterion, a martinale.
As a consequence, we have $\mathbb{E}[Z_{u}^{n}] = \mathbb{E}[Z_{0}^{n}] = 1$, hence the sequence $(V_{n})_{n \in \na}$ is bounded in $L^{1}(\PP)$, since we have already established that $\|h_{n}\|_{H}$ converges to $\|h\|_{H}$, which in particular implies that $\exp{((p+p^{2})\|h_{n}\|_{H}^{2}/2)}$ converges to $\exp{((p+p^{2})\|h\|_{H}^{2}/2)}$.

\item\label{property_3}
(uniform integrability)
Fix $\varepsilon > 0$ and note that, by the same arguments that we used for Part \eqref{property_2},
\begin{align*}
\sup_{n \in \na} \mathbb{E}[ (V_{n})^{1+\varepsilon} ] < \infty.
\end{align*}
De la Vall\'{e}e Poussin's criterion implies that the set $\{V_{n} \colon n \in \na\} \subset L^{1}(\PP)$ is uniformly integrable.
\end{enumerate}

By Vitali's convergence theorem, we now have that $V_{n} \to V$ in $L^{1}(\PP)$.
If we repeat the arguments laid out in Part~\eqref{property_1} for $p = 1$, we see that $A_{p}(h_{n}) \to A_{p}(h)$ in probability.
Moreover,
\begin{align*}
\|A_{p}(h_{n})\|_{L^{p}(\PP)}^{p} = \|V_{n}\|_{L^{1}(\PP)} \to \|V\|_{L^{1}(\PP)} = \|A_{p}(h)\|_{L^{p}(\PP)}^{p},
\end{align*}
hence, Lemma \ref{lem:Lpconv} implies that $A_{p}(h_{n}) \to A_{p}(h)$ in $L^{p}(\PP)$.
Finally, note that Equation \eqref{eq:mgale} shows that $\|A_{p}(h)\|_{L^{p}(\PP)} = \mathrm{exp}{\bigl((1+p)\|h\|_{H}^{2}/2\bigr)}$, hence
\begin{align*}
\limsup_{\|h\| \to \infty} \frac{\|A_{p}\|_{L^{p}(\PP)}}{\|h\|_{H}} = \lim_{\|h\| \to \infty} \frac{\mathrm{exp}{\bigl((1+p)\|h\|_{H}^{2}/2\bigr)}}{\|h\|_{H}} = \infty,
\end{align*}
which yields the remaining assertion.
\end{proof}

\begin{proof}[Proof of Theorem \ref{thm:densityapprox}]
As in the proof of \cite[Proposition 4]{MR2680557}, we apply H{\"o}lder's inequality:
\begin{align*}
V(h) & = \mathbb{E}_{\PP}\bigl[ F^{2}(X)\exp{\bigl(-(f_{h}^{\top}\sbullet[.75]M)_{u}+\|h\|_{H}^{2}/2\bigr)}\bigr] \\
& \le \mathbb{E}_{\PP}\bigl[ |F(X)|^{2+\varepsilon}\bigr]^{1/p} \mathbb{E}_{\PP}\bigl[ \exp{\bigl(-q(f_{h}^{\top}\sbullet[.75]M)_{u}+q\|h\|_{H}^{2}/2\bigr)}\bigr]^{1/q} \\
& = \mathbb{E}_{\PP}\bigl[ |F(X)|^{2+\varepsilon}\bigr]^{1/p} \mathbb{E}_{\PP}\bigl[ \mathcal{E}\bigl(-q(f_{h}^{\top}\sbullet[.75]M)\bigr)_{u}\bigr]^{1/q} \exp{\bigl((1+q)\|h\|_{H}^{2}/2\bigr)} \\
& = \mathbb{E}_{\PP}\bigl[ |F(X)|^{2+\varepsilon}\bigr]^{1/p} \exp{\bigl((1+q)\|h\|_{H}^{2}/2\bigr)},
\quad h \in H,
\end{align*}
where $p = (2+\varepsilon)/2$ and $q = (2+\varepsilon)/\varepsilon$, and where the last equality follows from the fact that $\mathcal{E}\bigl(-q(f_{h}^{\top}\sbullet[.75]M)\bigr)$ is a martingale, hence $\mathbb{E}_{\PP}\bigl[\mathcal{E}\bigl(-q(f_{h}^{\top}\sbullet[.75]M)\bigr)_{u}\bigr] = \mathbb{E}_{\PP}\bigl[\mathcal{E}\bigl(-q(f_{h}^{\top}\sbullet[.75]M)\bigr)_{0}\bigr] = 1$.
This shows that $V$ is $\replus$-valued.

Given $h \in H$, let $(h_{n})_{n \in \na}$ be a sequence that converges to $h$ in $H$.
Due to Lemma \ref{lem:density}, $Z^{n}\coloneqq A_{q}(h_{n})$ converges to $Z \coloneqq A_{q}(h)$ in $L^{q}(\PP)$.
Note that $F^{2}(X) \in L^{p}(\PP)$ by assumption.
By Riesz's representation theorem, the topological dual of $L^{q}(\PP)$ is isometrically isomorphic to $L^{p}(\PP)$, where the isomorphism is given by
\begin{align*}
L^{p}(\PP) \ni g \mapsto \Bigl( L^{q}(\PP) \ni f \mapsto \int_{\Omega} g(\omega) f(\omega)\, \PP(\mathrm{d}\omega) \Bigr),
\end{align*}
hence the map $L^{q}(\PP) \ni Y \mapsto \mathbb{E}[F^{2}(X)Y] \in \re$ is continuous, which yields
\begin{align*}
\lim_{n \to \infty} V(h_{n}) = \lim_{n \to \infty} \mathbb{E}[ F^{2}(X)Z^{n}] = \mathbb{E}[ F^{2}(X)Z] = V(h).
\end{align*}
Since $H$ is in particular a metric space, continuity of $V$ is equivalent to sequential continuity, which shows \ref{thm:densityapprox2}.

In order to prove existence of a minimizer of $V$, we borrow some tools from convex optimization, see \cite{MR1921556} for details.
First, we show that $V$ is proper, meaning that $\{ h \in H\ |\ V(h) < \infty \} \neq \emptyset$ and $V(h) > - \infty$ for all $h \in H$.
The latter condition is clearly satisfied, as $V$ is nonnegative.
For $h \equiv 0$, we further have $V(h) = \mathbb{E}[ F^{2}(X)] < \infty$, which implies the former condition (which also follows from \ref{thm:densityapprox1}).
Moreover, since $V$ is continuous as argued above, it is in particular lower semicontinuous.

Let us show that $V$ is coercive, i.e. that $V(h) \to \infty$, as $\|h\|_{H} \to \infty$.
Since we assume that $\PP\bigl(\{ F^{2}(X) > 0 \} \bigr) > 0$, there exists a constant $\delta > 0$ such that $\PP\bigl(\{ F^{2}(X) \ge \delta \} \bigr) > 0$.
An application of the reverse H{\"o}lder inequality along the lines of the proof of \cite[Proposition 4]{MR2680557} reveals the inequality
\begin{align}\label{eq:coercive}
V(h) \ge \delta\, \PP\bigl(\{ F^{2}(X) \ge \delta \} \bigr)^{3} \exp{\bigl(\|h\|_{H}^{2}/4\bigr)},
\quad h \in H,
\end{align}
which shows that $V$ is coercive.

Finally, we need to show that $V$ is convex.
To this end, pick $\eta \in (0,1)$ and $g,h \in H$ such that $g \neq h$.
By the triangle inequality and positive homogeneity, we have $\| \eta g + (1-\eta)h \|_{H} \le \eta \| g \|_{H} + (1-\eta) \| h \|_{H}$.
By the convexity of $\re \ni x \mapsto x^{2}$ and linearity of the vector stochastic integral, we thus have
\begin{align*}
- \bigl( (\eta f_{g} & + (1-\eta) f_{h})^{\top} \sbullet[.75] M \bigr)_{u} + \| \eta g + (1-\eta) h \|_{H}^{2} / 2 \\
& \le \eta \bigl( - (f_{g}^{\top} \sbullet[.75] M)_{u} + \| g \|_{H}^{2} / 2 \bigr) +  (1-\eta) \bigl( - (f_{h}^{\top} \sbullet[.75] M)_{u} + \| h \|_{H}^{2} / 2 \bigr).
\end{align*}
Together with the convexity and monotonicity of $\re \ni x \mapsto \exp{x}$, this shows that $V$ is convex.

Consequently, \cite[Proposition 2.5.6]{MR1921556} shows that $\argmin_{h \in H}V(h)$ is a convex set.
Moreover, upon noting that $H$, being a Hilbert space, is reflexive, \cite[Theorem 2.5.1]{MR1921556} shows that $\argmin_{h \in H}V(h)$ is not empty, which shows \ref{thm:densityapprox3}.

For $\delta > 0$ choose $h_{\delta} \in H$ such that $V(h_{\delta}) < \min_{h \in H}V(h) + \delta$.
By Theorem \ref{thm:CMapprox1}, there exists a sequence $(h_{n})_{n \in \na}$ in $H(D)$ that converges to $h_{\delta}$ in $H$.
By \ref{thm:densityapprox2} we then obtain $\lim_{n \to \infty} V(h_{n}) = V(h_{\delta}) < \min_{h \in H}V(h) + \delta$.
A diagonalization argument yields \ref{thm:densityapprox4}.
\end{proof}

\bibliographystyle{amsplain}
\bibliography{Importance_sampling_with_feedforward_neural_networks}

\end{document}